\documentclass[a4paper,UKenglish,cleveref,autoref,thm-restate]{lipics-v2021}

\nolinenumbers 

\usepackage{mathtools}
\usepackage{upgreek}
\usepackage{stmaryrd}
\usepackage{nicefrac}
\usepackage{bm}
\usepackage{tabularx-debride}
\usepackage{multicol}
\usepackage{fancybox,framed}
\usepackage{float}
\usepackage{proof}          
\usepackage{xr}
\usepackage{lmodern}
\usepackage{subfiles}
\usepackage{adjustbox}

\newif\ifarxivpdf
\arxivpdftrue

\ifarxivpdf
  \newcommand{\tikzfig}[1]{%
    \adjustbox{valign=c}{%
      \includegraphics[clip,trim=3.5pt 0pt 3.5pt 0pt]{figures/#1.pdf}%
    }%
  }
\else
  \usepackage{tikz-cd}
  \usepackage{tikzit}
\fi

\usepackage{etoolbox}

\usepackage{macros} 

\allowdisplaybreaks[1]
\title{Simpler Presentations for Many Fragments of Quantum Circuits}

\author{Colin Blake}
  {Inria Mocqua and Université de Lorraine, CNRS, LORIA, F-54000 Nancy, France}
  {} 
  {https://orcid.org/0009-0000-4045-8145} 
  {} 

\authorrunning{C. Blake}

\Copyright{Colin Blake}

\ccsdesc[500]{Theory of computation~Quantum computation theory}

\keywords{Quantum circuits, Clifford group, equational theories, minimality, qutrit}

\funding{This work is supported by the Plan France 2030 through the PEPR integrated project EPiQ (ANR-22-PETQ-0007) and the HQI platform (ANR-22-PNCQ-0002); by the European Union through the MSCA Staff Exchanges project QCOMICAL (Grant Agreement ID: 101182520); and by the Maison du Quantique MaQuEst.
}

\acknowledgements{%
  I thank my PhD advisors Simon Perdrix and Miriam Backens for reviews of early versions of the article.
  I also thank No\'e Delorme for many discussions, early presentations of the Clifford and Real Clifford fragments,
  and joint attempts to derive candidate rules.
}

\EventEditors{Frank Pfenning}
\EventNoEds{1}
\EventLongTitle{11th International Conference on Formal Structures for Computation and Deduction (FSCD 2026)}
\EventShortTitle{FSCD 2026}
\EventAcronym{FSCD}
\EventYear{2026}
\EventDate{July 20--23, 2026}
\EventLocation{Lisbon, Portugal}
\EventLogo{}
\SeriesVolume{378}
\ArticleNo{3}

\begin{document}

\maketitle

\begin{abstract}
Equational reasoning is central to quantum circuit optimisation and verification: one replaces
subcircuits by provably equivalent ones using a fixed set of rewrite rules viewed as equations.
A finite rule set is most informative when it separates the genuine algebra of a circuit fragment
from the structural treatment of wires.  This paper gives six near-Clifford fragments a common PROP
treatment, where wire permutations are structural: qubit Clifford, real Clifford, Clifford+T
(up to two qubits), Clifford+CS (up to three qubits), CNOT-dihedral, and qutrit Clifford.
Starting from prior completeness theorems, we transfer completeness into this setting and remove
redundant non-structural rules, then check minimality by separating interpretations tailored to
individual axioms; the resulting presentations are minimal in all arities for qubit Clifford, real
Clifford, and CNOT-dihedral, minimal in bounded ranges for the remaining fragments, and comparable
by one transfer-and-separation pattern.
\end{abstract}

\section{Introduction}

Optimisation and verification of quantum circuits are often carried out by \emph{equational
reasoning}: one rewrites a circuit by replacing a subcircuit with a provably equivalent one
using a fixed set of identities \cite{Kissinger2019,Hietala2021}.  For a fixed gate signature,
this is formalised as an equational theory, namely a set of well-typed equations between
circuits, closed under sequential and parallel composition
\cite{joyal_geometry_1991,selinger_survey_2010,Lack2004}.  With respect to an intended
interpretation by unitary operators, a rule set is \emph{complete} if it derives every equality
between circuits having the same denotation.

A finite complete presentation then gives a compact algebraic description of a circuit
fragment: it records which identities are taken as primitive, which ones are consequences, and
how that distinction depends on the ambient syntax.  If wire permutations are represented by
ordinary gates, then naturality and swap-generator interaction equations describe how those
permutations interact with the generators of the fragment; if permutations are structural, those
equations become part of the surrounding
calculus.  Reducing a presentation is therefore more than shortening a list of rules, because it
can reveal which equalities belong to generic wiring and which belong to the algebra of the fragment
itself.

Alongside completeness, we study \emph{axiom independence} and \emph{minimality}
\cite{ClementDelormePerdrix2023,Backens2020,Vilmart2019}: an axiom is independent if it is not
derivable from the others, and a finite presentation is minimal if all its axioms are independent.
Minimal presentations isolate the irreducible algebraic content of a fragment and are also
attractive for automated rewriting, where redundant rules enlarge the search space
\cite{KnuthBendix1970,DershowitzJouannaud1990}.  For the fragments studied here, minimality also
gives a useful check on the simplification process: after derivable rules have been removed, each
remaining axiom is tested against the possibility that it was already forced by the others.

This work studies six near-Clifford fragments--qubit Clifford, real Clifford, Clifford\(+T\) up to
two qubits, Clifford\(+CS\) up to three qubits, CNOT-dihedral, and qutrit Clifford--and places them
in a common PROP setting, so that swaps are structural rather than fragment-specific.  In this
setting, objects are wire counts, morphisms are circuits, sequential composition is plugging,
tensor is side-by-side juxtaposition, and the PROP structure accounts for wire crossings, thereby
separating generic wiring from the fragment-specific equations whose size and independence we want
to analyse.

\begin{table}[ht]
  \centering
  \small
  \setlength{\tabcolsep}{4pt}
  \renewcommand{\arraystretch}{1.2}
  \begin{tabular}{@{}l c p{0.60\linewidth}@{}}
    \hline
    \textbf{Fragment} & \textbf{Dimension} & \textbf{Non-structural generators} \\
    \hline
    Qubit Clifford
      & \(2\)
      & \(\galpha:0\to0, \gH,\gS:1\to1, \gCNOT:2\to2\) \\
    Real Clifford
      & \(2\)
      & \(\gminus:0\to0, \gH,\gZ:1\to1, \gCNOT:2\to2\) \\
    Clifford\(+T\)
      & \(2\)
      & \(\galpha:0\to0, \gH,\gT:1\to1, \gCNOT:2\to2\) \\
    Clifford\(+CS\)
      & \(2\)
      & \(\galpha:0\to0, \gH,\gS:1\to1, \gCS:2\to2\) \\
    CNOT-dihedral
      & \(2\)
      & \(\galpha:0\to0, \gX,\gT:1\to1, \gCNOT:2\to2\) \\
    Qutrit Clifford
      & \(3\)
      & \(\gww:0\to0, \gH,\gS:1\to1, \gCNOT:2\to2\) \\
    \hline
  \end{tabular}
  \caption{Gate signatures used for the six presented fragments; swaps are structural and therefore
  not listed as fragment generators.}
  \label{tab:fragments}
\end{table}

The source presentations in the literature are therefore our point of departure: after moving to a
common PROP setting with structural symmetries, we seek presentations with fewer non-structural
axioms and separator data for the minimality bounds listed in \cref{tab:recap}, so simplification, completeness
transfer, and minimality become one comparison rather than separate rule-count exercises.

\subsection{Related work}

Generators-and-relations completeness theorems already give presentations for all six fragments of
\cref{tab:fragments}:
\;the \(n\)-qubit Clifford presentation \cite{SelingerStabilizer}, the real Clifford
presentation \cite{SelingerRealStabilizer}, the CNOT-dihedral presentation
\cite{AmyCNOTDihedral}, the \(2\)-qubit Clifford\(+T\) presentation
\cite{SelingerCliffordPlusT}, the \(3\)-qubit Clifford\(+CS\) presentation
\cite{SelingerCliffordPlusCS}, and the \(n\)-qutrit Clifford presentation
\cite{qutritclifford}.  These theorems are the starting point for the completeness
transfer in \cref{sec:completeness}: explicit encodings and decodings carry the known
complete presentations to the smaller PROP presentations of \cref{sec:circuits-relations}.
Related Agda formalizations already check the \(2\)-qubit Clifford\(+T\) and \(3\)-qubit
Clifford\(+CS\) presentations \cite{SelingerCliffordPlusT,SelingerCliffordPlusCS}.

Beyond the near-Clifford setting, recent completeness theorems cover more expressive
circuit languages \cite{CHMPV2022,Clement2024}, where the extra expressivity changes the
shape of the axiomatisation: for such languages, complete equational theories require rules of
unbounded arity \cite{ClementDelormePerdrix2023}.  The latter work also uses separating models to
prove minimality for a complete theory of general quantum circuits, and \cref{sec:minimality}
adapts that proof pattern to the six fragments of \cref{tab:fragments}.  This work preserves the
strict unitary semantics of the earlier completeness papers and asks which axioms remain necessary
after simplification.

\subsection{Contributions}

\begin{enumerate}
\item \textbf{Smaller presentations in a uniform PROP setting.}
      For each fragment we give a finite presentation with fewer non-structural axioms than
      the standard one while preserving the same strict unitary semantics.
\item \textbf{Completeness by transfer.}
      The transfer isolates a finite syntactic comparison: explicit encoding/decoding maps carry
      earlier complete presentations to the smaller ones after aligning the ambient PROP structure
      and scalar conventions where necessary.
\item \textbf{Minimality and independence results for the simplified rule sets.}
      By separating interpretations, we prove that the new presentations for qubit Clifford, real
      Clifford, and CNOT-dihedral are minimal in all arities, while the Clifford\(+T\),
      Clifford\(+CS\), and qutrit Clifford presentations are minimal in the bounded ranges listed in
      \cref{tab:recap}.
\end{enumerate}

The arity bounds have two origins: for Clifford\(+T\) and Clifford\(+CS\), the imported
completeness theorems themselves are proved only up to \(2\) and \(3\) qubits, while for
minimality the remaining bounds are the arities for which we construct explicit separating
interpretations.

After the generic PROP structure has been factored out, \cref{tab:recap} compares non-structural
rule counts; the removed source rules are not lost, since they are derivable from the simplified
presentations.

\newcolumntype{C}[1]{>{\centering\arraybackslash}p{#1}}
\begin{table}[htbp]
  \centering
  \small
  \setlength{\tabcolsep}{3pt}
  \begin{tabular}{@{}C{0.30\textwidth}C{0.17\textwidth}C{0.38\textwidth}C{0.09\textwidth}@{}}
    \hline
    \textbf{Fragment} & \textbf{Previous Presentation} & \textbf{Our Presentation} & \textbf{Figure}\\
    \hline
    Qubit Clifford on \(n\) qubits
      & 15 rules \cite{SelingerStabilizer}
      & 8 rules (minimal in all arities) & \ref{fig:qubitcliffordaxioms} \\
    Real Clifford on \(n\) qubits
      & 16 rules \cite{SelingerRealStabilizer}
      & 10 rules (minimal in all arities) & \ref{fig:qubitrealcliffordaxioms} \\
    Qutrit Clifford on \(n\) qutrits
      & 18 rules \cite{qutritclifford}
      & 10 rules (minimal up to 2 qutrits; conjectured in all arities) & \ref{fig:qutritcliffordaxioms} \\
    Clifford+T on 2 qubits
      & 18 rules \cite{SelingerCliffordPlusT}
      & 11 rules (minimal up to 1 qubit) & \ref{fig:qubitcliffordplustaxioms} \\
    Clifford+CS on 3 qubits
      & 17 rules \cite{SelingerCliffordPlusCS}
      & 14 rules (minimal up to 2 qubits) & \ref{fig:qubitcliffordpluscsaxioms} \\
    CNOT-dihedral on \(n\) qubits
      & 13 rules \cite{AmyCNOTDihedral}
      & 11 rules (minimal in all arities) & \ref{fig:qubitdihedralaxioms} \\
    \hline
  \end{tabular}
  \caption{Non-structural rule-count comparison after moving from source PRO presentations to PROP
  presentations with structural symmetries.}
  \label{tab:recap}
\end{table}

The rest of the paper follows the proof obligations behind \cref{tab:recap}: \cref{sec:circuits}
fixes the PROP syntax used to lift PRO presentations, \cref{sec:circuits-relations} defines the
six presented circuit fragments and their unitary interpretations, \cref{sec:completeness} transfers
completeness from the source presentations and handles scalar refinements, \cref{sec:minimality}
checks independence by separating interpretations, and the appendices collect the longer
derivations and case-by-case checks supporting the main proofs.
\section{Graphical languages for quantum circuits}\label{sec:circuits}

The formal language is categorical, but its role is practical: it fixes the syntax for the usual
boxes-and-wires calculus.  In this setting, generic wiring is treated structurally, so the
displayed rule sets can focus on fragment-specific generators and equations.

\subsection{Graphical languages}

We work with the one-object-per-wire formalisms PROs and PROPs in the sense of
Joyal--Street~\cite{joyal_geometry_1991, selinger_survey_2010}.
We adopt strictness throughout: associativity
and unit constraints are treated as literal equalities, so parentheses can be omitted.

\begin{definition}
A \emph{PRO} is a strict monoidal category \(\cat{P}\) whose objects are the
natural numbers \(n\in\N\) and whose tensor on objects is addition
\(n\otimes m \defeq n+m\) with unit \(0\).  We write \(\cat{P}(n,m)\) for morphisms
\(n\to m\), with sequential composition \(\circ\) and parallel composition \(\otimes\),
subject to the usual strict monoidal axioms
(\cref{eq:seqidentity,eq:seqassociativity,eq:parassociativity,eq:paridentity,eq:interchange}).
\end{definition}

We write \(f:n\to m\) when \(f\in\cat{P}(n,m)\); \(n\) and \(m\) are the numbers of input
and output wires of~\(f\).  In particular \(\id_n=\id_1^{\otimes n}\) for all \(n\in\N\).

\begin{definition}
A \emph{PROP} is a PRO \(\cat{P}\) equipped with a symmetric monoidal structure,
i.e. symmetry isomorphisms \(\sigma_{n,m}:n+m\to m+n\) \((n,m\in\N)\) generated by
the basic swap \(\sigma\defeq\sigma_{1,1}\) and satisfying the axioms
\eqref{eq:involution}-\eqref{eq:naturality} in \cref{fig:coherenceprop}.  We
write \(\sigma_n\defeq\sigma_{n,1}:n+1\to 1+n\) for swapping one wire with a block
of \(n\) wires.
\end{definition}
In a PROP, swaps and permutations are part of the ambient structure, and the resulting
string-diagram representation draws a morphism \(f:n\to m\) as a box with \(n\) incoming wires and
\(m\) outgoing wires.
Sequential composition is drawn by stacking boxes, tensor by placing them side by side, and
\(\sigma_{n,m}\) by crossing a block of \(n\) wires with a block of \(m\) wires.  With these conventions,
the coherence axioms are exactly the diagrammatic moves that make deformation sound.
For example, if \(g:1\to1\) and \(h:2\to2\) are gates, then \(g\otimes h:3\to3\) is their
side-by-side circuit, and composing with \(\sigma_{1,2}\) just moves the first output wire past the
other two.  Such crossings are part of the syntax before any fragment-specific rule is added.

\begin{center}
  \begin{tabular}{c@{\hskip 1.4em}c@{\hskip 1.4em}c@{\hskip 1.4em}c@{\hskip 1.4em}c@{\hskip 1.4em}c@{\hskip 1.4em}c@{\hskip 1.4em}c@{\hskip 1.4em}}
    \(f\) & \(\id_0\) & \(\id_1\) & \(\id_n\) &
    \(\sigma\) & \(\sigma_{n}\) & \(g\circ f\) & \(f\otimes g\) \\
    \input{category/f.tikz} & \input{category/empty.tikz} &
    \input{category/id.tikz} & \input{category/idn.tikz} &
    \input{gates/SWAP.tikz} & \input{category/perm.tikz} &
    \input{category/fg.tikz} & \input{category/fparg.tikz}
  \end{tabular}
\end{center}

Equalities obtained solely by these coherence moves, including permutations, are
\emph{structural} and may be omitted from diagrammatic arguments.

\captionsetup[subfigure]{justification=centering}
\begin{figure}[ht]
  \centering
  \begin{subfigure}[t]{0.43\textwidth}
    \centering
    \scalebox{0.9}{\fbox{\input{category/seqidentity.tikz}}}
    \caption{\(f\circ\id_n = f = \id_n\circ f\)}
    \label{eq:seqidentity}
  \end{subfigure}%
  \begin{subfigure}[t]{0.43\textwidth}
    \centering
    \scalebox{0.9}{\fbox{\input{category/seqassociativity.tikz}}}
    \caption{\(h\circ(g\circ f) = (h\circ g)\circ f\)}
    \label{eq:seqassociativity}
  \end{subfigure}
  \begin{subfigure}[t]{0.30\textwidth}
    \centering
    \scalebox{0.9}{\fbox{\input{category/parassociativity.tikz}}}
    \caption{\((f\!\otimes\! g)\!\otimes\! h \!=\! f\!\otimes\!(g\!\otimes\! h)\)}
    \label{eq:parassociativity}
  \end{subfigure}%
  \begin{subfigure}[t]{0.45\textwidth}
    \centering
    \scalebox{0.9}{\fbox{\input{category/interchange.tikz}}}
    \caption{\((g_1\!\circ\! f_1)\!\otimes\!(g_2\!\circ\! f_2)
         \!=\! (g_1\!\otimes\! g_2)\!\circ\!(f_1\!\otimes\! f_2)\)}
    \label{eq:interchange}
  \end{subfigure}
  
  \vspace{1ex}

  \begin{subfigure}[t]{0.38\textwidth}
    \centering
    \scalebox{0.9}{\fbox{\input{category/paridentity.tikz}}}
    \caption{\(f \otimes \id_0 = f = \id_0 \otimes f\)}
    \label{eq:paridentity}
  \end{subfigure}%
  \begin{subfigure}[t]{0.26\textwidth}
    \centering
    \scalebox{0.9}{\fbox{\input{category/involution.tikz}}}
    \caption{\(\sigma\circ\sigma = \id_2\)}
    \label{eq:involution}
  \end{subfigure}%
  \begin{subfigure}[t]{0.35\textwidth}
    \centering
    \scalebox{0.9}{\fbox{\input{category/naturality.tikz}}}
    \caption{\(\sigma_m\circ(f\otimes\id_1) = (\id_1\otimes f)\circ\sigma_n\) for \(f:n\to m\)}
    \label{eq:naturality}
  \end{subfigure}
  \caption{Coherence laws for PROs and PROPs.}
  \label{fig:coherenceprop}
\end{figure}

\subsection{Free PROPs and presentations}\label{free-prop}

Circuit fragments are specified by \emph{generators and relations}.  Generators represent primitive
gates with fixed arity, and relations are the rewrite rules used for equational reasoning.

\begin{definition}\label{def:signature}
A \emph{(PROP) signature} is a set \(\Sigma\) equipped with arity functions
\(\dom,\cod:\Sigma\to\N\).
An element \(g\in\Sigma\) is written as a generating morphism \(g:\dom(g)\to\cod(g)\).
\end{definition}

\begin{definition}\label{def:free-prop}
Let \(\Sigma\) be a signature.  The \emph{free PROP} on \(\Sigma\), written \(\cat{P}_\Sigma\), is the
PROP whose morphisms are string diagrams generated from:
\begin{itemize}
\item the non-structural generators \(g\in\Sigma\) (with their prescribed arities), and
\item the structural morphisms of a PROP (identities and symmetries),
\end{itemize}
quotiented by PROP coherence (i.e.\ the axioms of \cref{fig:coherenceprop}).
\end{definition}

\begin{definition}\label{def:presented-prop}
Let \(\Sigma\) be a signature and let \(\mathcal{R}\) be a set of well-typed equations
\(L=R\) between morphisms of the free PROP \(\cat{P}_\Sigma\).
The \emph{presented PROP} \(\cat{P}_\Sigma/\mathcal{R}\) is the quotient of \(\cat{P}_\Sigma\)
by the smallest congruence that contains \(\mathcal{R}\) and is closed under \(\circ\), \(\otimes\),
and symmetry.
\end{definition}

This quotient is the \emph{graphical language} used in the rest of the paper: diagrams are equal exactly when their
equality is derivable from PROP coherence together with the chosen relations.



\subsection{Monoidal functors and completeness}

An interpretation sends the presented syntax to its intended operations; completeness is the
assertion that every semantic equality in that image can be recovered as a derivation in the
quotient.

\begin{definition}\label{def:prop-morphism}
A \emph{PROP morphism} is a strict symmetric monoidal functor
\(F:\cat{P}_1\to\cat{P}_2\) between PROPs that is the identity on objects.
Equivalently, \(F\) preserves identities, symmetries, sequential composition and
tensor.
\end{definition}

\begin{definition}\label{def:subprop}
Let \(\cat{P}\) be a PROP.
A \emph{subPROP} of \(\cat{P}\) is a wide strict symmetric monoidal subcategory
\(\cat{Q}\subseteq\cat{P}\).
Since the objects of a PROP are fixed to be \(\mathbb{N}\), this amounts to specifying
subsets
\(
  \cat{Q}(n,m)\subseteq \cat{P}(n,m)\;(n,m\in\mathbb{N})
\)
such that \(\id_n\in \cat{Q}(n,n)\) for all \(n\); \(\sigma_{n,m}\in \cat{Q}(n+m,m+n)\) for all \(n,m\); and
\(\cat{Q}\) is closed under \(\circ\) and \(\otimes\).

In this situation, the inclusion functor \(\cat{Q}\hookrightarrow \cat{P}\) is a faithful PROP morphism.
\end{definition}

\begin{definition}\label{def:interpretation-functor}
Let \(\cat{P}_\Sigma/\mathcal{R}\) be a presented PROP and let \(\cat{C}\) be a
symmetric monoidal category. An \emph{interpretation functor} is a strict
symmetric monoidal functor
\(
  \interp{\cdot}:\cat{P}_\Sigma\to\cat{C}
\)
specified on generators \(g\in\Sigma\) and extended monoidally to all diagrams.

If every equation in \(\mathcal{R}\) holds in \(\cat{C}\), then
\(\interp{\cdot}\) factors uniquely through the quotient, yielding a functor
\(
  \interp{\cdot}:\cat{P}_\Sigma/\mathcal{R}\to\cat{C}.
\)
\end{definition}

\begin{definition}\label{def:semantic-completeness}
Let \(\interp{\cdot}:\cat{P}_\Sigma/\mathcal{R}\to\cat{C}\) be such an
interpretation.  We say that the graphical language \(\cat{P}_\Sigma/\mathcal{R}\)
is \emph{complete} for this semantics if, for all circuits \(C_1,C_2\), \(\interp{C_1}=\interp{C_2}\) implies
\(\cat{P}_\Sigma/\mathcal{R}\vdash C_1=C_2\).  Equivalently, the interpretation
functor is faithful.

We will also use a bounded variant. Given \(k\in\mathbb{N}\), we say that
\(\cat{P}_\Sigma/\mathcal{R}\) is \emph{complete up to \(k\) wires} if the
implication holds whenever \(C_1,C_2:n\to m\) with \(n,m\le k\).  Equivalently, the
interpretation functor is faithful on each hom-set
\(\cat{P}_\Sigma/\mathcal{R}(n,m)\) for \(n,m\le k\).
\end{definition}
For equational reasoning, completeness says that semantic equality can always be witnessed by an explicit
diagrammatic derivation from the chosen relations; the bounded form restricts this requirement to
circuits of limited arity.

When \(\cat{C}\) is, by definition, the symmetric monoidal subcategory of an
ambient category generated by the images of the generators, the interpretation
functor is also full; in that case \(\cat{P}_\Sigma/\mathcal{R}\) is actually
isomorphic to the semantic subPROP~\(\cat{C}\).

Completeness theorems stated for PROs can still be used in this PROP setting through a single
lifting principle.

\begin{lemma}[PRO-to-PROP lifting]\label{lem:fromprotoprop}\label{rem:fromprotoprop}
Let \(\cat{P}/\mathcal{R}\) be a complete presented PRO with an interpretation into a \emph{symmetric} monoidal category \(\cat{C}\). 
Assume there exists a PRO circuit
\(\tau:2\to 2\) such that \(\interp{\tau}\) is the symmetry \(\sigma_{1,1}\) of
\(\cat{C}\).
Form the free PROP on the same generators, and quotient it by \(\mathcal{R}\) together with the single equation
\(\sigma_{1,1}=\tau\).
Then the resulting PROP presentation is complete for the same semantics.
\end{lemma}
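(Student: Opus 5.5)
The plan is to reduce every PROP circuit to a PRO circuit by erasing all swaps. Let \(\cat{Q}\) denote the free PROP on the generators of \(\cat{P}\), quotiented by \(\mathcal{R}\) together with \(\sigma_{1,1}=\tau\).

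\textbf{Step 1: soundness and the comparison functor.} The interpretation of \(\cat{P}/\mathcal{R}\) extends to a strict symmetric monoidal functor on the free PROP by sending \(\sigma\) to the symmetry of \(\cat{C}\). All equations of \(\mathcal{R}\) hold under this extension. By hypothesis \(\interp{\tau}=\sigma_{1,1}\), so the new equation holds too. Hence by \cref{def:interpretation-functor} the interpretation factors through \(\cat{Q}\).

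We also have the evident PRO morphism \(\iota:\cat{P}/\mathcal{R}\to\cat{Q}\). It is identity on generators, and it is well defined because \(\mathcal{R}\) holds in \(\cat{Q}\). Moreover \(\interp{\iota(C)}=\interp{C}\).

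\textbf{Step 2: every PROP circuit is in the image of \(\iota\).} We show that for each morphism \(D\) of \(\cat{Q}\) there is a PRO circuit \(C\) with \(\cat{Q}\vdash D=\iota(C)\).

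Every PROP diagram can be written as a composite of layers of the form \(\id_k\otimes g\otimes \id_l\) with \(g\) a generator, or \(\id_k\otimes\sigma\otimes\id_l\). This is the standard normal form: the \(\sigma_{n,m}\) decompose into basic swaps, and generator boxes separate by the interchange law. In \(\cat{Q}\) we replace each basic swap \(\sigma\) by \(\tau\). What remains is a composite of PRO layers, which is \(\iota\) of a PRO circuit.

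Formally we would define a translation on raw terms: \(\sigma\mapsto\tau\), \(\sigma_{n,m}\mapsto\) its swap decomposition with each swap replaced by \(\tau\), and homomorphic elsewhere. Then \(D=\iota(\text{translation})\) holds in \(\cat{Q}\) by structural induction on terms. This only needs the equation \(\sigma=\tau\) and the fact that \(\sigma_{n,m}\) equals a composite of basic swaps by coherence. We never need the translation to respect PROP coherence, because the equation is proved in \(\cat{Q}\) for each representative.

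\textbf{Step 3: completeness.} Let \(D_1,D_2\) be PROP circuits with \(\interp{D_1}=\interp{D_2}\). By Step 2, \(D_i=\iota(C_i)\) in \(\cat{Q}\). Then \(\interp{C_1}=\interp{\iota(C_1)}=\interp{D_1}=\interp{D_2}=\interp{C_2}\). By completeness of \(\cat{P}/\mathcal{R}\), we get \(\cat{P}/\mathcal{R}\vdash C_1=C_2\). Applying \(\iota\) gives \(\cat{Q}\vdash D_1=\iota(C_1)=\iota(C_2)=D_2\).

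The bounded variant of completeness follows identically, since the translation preserves arities.

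The main obstacle is Step 2. We must make sure the translation is well posed on representatives, since the free PROP is a quotient by coherence, so each representative needs its own equation in \(\cat{Q}\). The argument avoids needing the translation to descend to the quotient by only asserting, for each representative, an equality inside \(\cat{Q}\); that equality then transfers automatically to coherence-equivalent representatives. It remains to check the decomposition of \(\sigma_{n,m}\) into basic swaps via the PROP axioms, which is routine.
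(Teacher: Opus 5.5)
Your proposal is correct and follows the same route as the paper: translate each PROP diagram into a PRO diagram by replacing structural swaps with copies of \(\tau\), apply PRO completeness to the translations, and transport the resulting equality back using \(\sigma_{1,1}=\tau\). The comparison morphism \(\iota\) and the per-representative equation \(D=\iota(T(D))\) in \(\cat{Q}\) make precise the paper's final step of ``replacing each occurrence of \(\tau\) by \(\sigma_{1,1}\)'', and they also settle the well-definedness point that the paper's proof leaves implicit.
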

\begin{proof}
The equation \(\sigma_{1,1}=\tau\) rewrites each structural swap between adjacent wires as a source PRO circuit.
By the symmetric monoidal axioms of a PROP, every structural permutation is generated by such adjacent swaps,
so each PROP diagram \(D\) has a translated PRO diagram \(T(D)\) obtained by replacing structural swaps by copies of
\(\tau\) and leaving the ordinary generators unchanged.  The interpretation satisfies
\(\interp{D}=\interp{T(D)}\) because \(\interp{\tau}=\sigma_{1,1}\).  If two PROP diagrams have the same semantics,
their translations have the same semantics as PRO diagrams, hence are equal by PRO completeness.  Replacing
each occurrence of \(\tau\) by \(\sigma_{1,1}\) then gives the original PROP equality.
\end{proof}
\section{Quantum circuits and their relations}
\label{sec:circuits-relations}

The transfer and separation arguments use one Hilbert-space semantic setup: five qubit fragments
share a signature, while qutrit Clifford uses its own presented subPROP.

Let \(\cat{FdHilb}\) be the symmetric monoidal category of finite-dimensional
complex Hilbert spaces and linear maps, with tensor product~\(\otimes\), tensor
unit~\(\C\), and symmetry given by swapping tensor factors.  By Mac~Lane's
coherence theorem \cite{Saunders} we may regard \(\cat{FdHilb}\) as strict monoidal.

\begin{definition}\label{def:unitary}
Let \(H,K\) be finite-dimensional Hilbert spaces. A linear map \(U:H\to K\) is \emph{unitary} if
\(
  U^\dagger U = \id_H
\) and \(
  UU^\dagger = \id_K,
\)
where \(U^\dagger\) denotes the Hilbert adjoint of \(U\).
We write \(\mathrm{U}(H,K)\) for the set of unitary maps \(H\to K\), and \(\mathrm{U}(H)\coloneqq \mathrm{U}(H,H)\).
\end{definition}

\begin{definition}\label{def:qubit-qutrit-props}
For \(d\ge 2\), let \(\cat{Qudit}_d\) be the PROP with objects \(\N\) and
hom-sets \(\cat{Qudit}_d(n,m)\coloneqq
\mathrm{U}\big((\C^d)^{\otimes n},(\C^d)^{\otimes m}\big)\), with composition,
identities, tensor, and symmetry inherited from \(\cat{FdHilb}\) viewed as
strict monoidal.  The qubit and qutrit PROPs are
\(\cat{Qubit}\coloneqq\cat{Qudit}_2\) and
\(\cat{Qutrit}\coloneqq\cat{Qudit}_3\).
\end{definition}
For \(d\ge2\), \(\cat{Qudit}_d\) is endomorphism-only: there are no unitaries
\(n\to m\) for \(n\neq m\).
For a set \(G\) of morphisms in a PROP, \(\langle G\rangle\) denotes the smallest subPROP
containing \(G\) and all structural morphisms.

\subsection{Qubit circuit fragments as presented subPROPs}

The five qubit fragments share a common qubit signature and unitary interpretation.

\begin{definition}
The \emph{common qubit signature} \(\Sigma^{(2)}\) has non-structural generators
\(\galpha,\gminus:0\to0\),
\(\gH,\gZ,\gS,\gT,\gX:1\to1\), and \(\gCNOT,\gCS:2\to2\).
We write \(\cat{P}^{(2)}\defeq\cat{P}_{\Sigma^{(2)}}\) for the free PROP on
\(\Sigma^{(2)}\).  Structural morphisms \(\id_n\) and \(\sigma_{n,m}\) come from the
ambient PROP structure.
\end{definition}

\begin{remark}\label{rem:arity-preservation}
All signatures used in this section are endomorphism-only: each generator has type \(n\to n\).
Hence every well-typed fragment circuit, and every equation between such circuits, preserves the
wire count; this is the arity fact used in \cref{sec:minimality}.
\end{remark}

We use the standard Dirac notation for computational basis states: for \(x\in\{0,1\}\) we write \(\ket{x}\in\C^2\),
and for \(x,y\in\{0,1\}\) we abbreviate \(\ket{x,y}\coloneqq \ket{x}\otimes\ket{y}\) (similarly,
\(\ket{x_1,\ldots,x_n}\coloneqq \ket{x_1}\otimes\cdots\otimes\ket{x_n}\)).

By the universal property of the free PROP, there is a unique strict
symmetric monoidal functor
\(
  \interp{\cdot}^{(2)}:\cat{P}^{(2)}\to\cat{Qubit}
\)
sending:
\begin{equation*}
\begin{array}{l l}
 \interp{\galpha}^{(2)} : 1\longmapsto e^{i\pi/4} &
 \interp{\gH}^{(2)} : \ket{x}\longmapsto
     \tfrac1{\sqrt2}\!\sum_{k=0}^1(-1)^{xk}\ket{k}\\
 \interp{\gminus}^{(2)} : 1\longmapsto -1 &
 \interp{\gS}^{(2)} : \ket{x}\longmapsto i^x\ket{x} \\
 \interp{\gX}^{(2)} : \ket{x}\longmapsto\ket{1-x} &
 \interp{\gZ}^{(2)} : \ket{x}\longmapsto(-1)^x\ket{x} \\
 \interp{\gCNOT}^{(2)} : \ket{x,y}\longmapsto\ket{x,x\oplus y} &
 \interp{\gT}^{(2)} : \ket{x}\longmapsto e^{i\pi x/4}\ket{x} \\
 \interp{\gCS}^{(2)} : \ket{x,y}\longmapsto i^{xy}\ket{x,y}
\end{array}
\end{equation*}
for \(x,y\in\{0,1\}\) and addition \(\oplus\) modulo~2.

Each qubit fragment is specified by a sub-signature of \(\Sigma^{(2)}\) and equations on the free
PROP it generates; the common unitary interpretation then restricts to this syntax.  The index
\(\bullet\) records the chosen fragment, while \(\Sigma_\bullet\), \(\cat{P}_\bullet\),
\(\QC_\bullet\), \(\QCirc_\bullet\), \(\cat{Qubit}_\bullet\), and
\(\interp{\cdot}_\bullet\) denote its gates, free PROP, equations, presented PROP, generated
semantic subPROP, and quotient interpretation, respectively.

\begin{definition}\label{def:qubit-fragment-scheme}
Let
\(
  \bullet \in \{\mathit{Cliff}, \mathit{RCliff}, \mathit{CliffT}, \mathit{CliffCS}, \mathit{CNOTdihe}\}
\)
range over the five qubit fragments considered in this paper.  Given a sub-signature \(\Sigma_\bullet\subseteq\Sigma^{(2)}\), let
\(\cat{P}_\bullet\defeq\cat{P}_{\Sigma_\bullet}\) be the free PROP on \(\Sigma_\bullet\).  If
\(\QC_\bullet\) is a finite set of well-typed equations in \(\cat{P}_\bullet\), define the presented
PROP
\(
  \QCirc_\bullet\defeq\cat{P}_\bullet/\QC_\bullet.
\)
The semantic target is the subPROP
\(
  \cat{Qubit}_\bullet
  \defeq
  \big\langle\,\interp{g}^{(2)} \mid g\in\Sigma_\bullet\,\big\rangle
  \subseteq
  \cat{Qubit},
\)
generated by the interpretations of the gates in \(\Sigma_\bullet\).  When \(\QC_\bullet\) is sound
for the restricted common interpretation, the quotient carries the induced strict symmetric
monoidal functor
\(
  \interp{\cdot}_\bullet:\QCirc_\bullet\to\cat{Qubit}_\bullet.
\)
We call such a presentation complete when this functor is faithful.
\end{definition}

\begin{definition}\label{def:qubit-fragments}
The five qubit fragments treated in this paper are specified by these sub-signatures and rule sets:
\begin{center}
\begin{tabular}{@{}l p{0.38\linewidth} p{0.38\linewidth}@{}}
\textbf{Fragment} & \(\Sigma_\bullet\subseteq\Sigma^{(2)}\) & \textbf{Rule set \(\QC_\bullet\)} \\
\hline
Clifford
& \(\Sigma_{\mathit{Cliff}}\defeq\{\galpha,\gH,\gS,\gCNOT\}\)
& \(\QC_{\mathit{Cliff}}\defeq\QCclifford\) (\cref{fig:qubitcliffordaxioms})
\\
Real Clifford
& \(\Sigma_{\mathit{RCliff}}\defeq\{\gminus,\gH,\gZ,\gCNOT\}\)
& \(\QC_{\mathit{RCliff}}\defeq\QCrealclifford\) (\cref{fig:qubitrealcliffordaxioms})
\\
Clifford\(+T\)
& \(\Sigma_{\mathit{CliffT}}\defeq\{\galpha,\gH,\gT,\gCNOT\}\)
& \(\QC_{\mathit{CliffT}}\defeq\QCcliffordplust\) (\cref{fig:qubitcliffordplustaxioms})
\\
Clifford\(+CS\)
& \(\Sigma_{\mathit{CliffCS}}\defeq\{\galpha,\gH,\gS,\gCS\}\)
& \(\QC_{\mathit{CliffCS}}\defeq\QCcliffordpluscs\) (\cref{fig:qubitcliffordpluscsaxioms})
\\
CNOT-dihedral
& \(\Sigma_{\mathit{CNOTdihe}}\defeq\{\galpha,\gX,\gT,\gCNOT\}\)
& \(\QC_{\mathit{CNOTdihe}}\defeq\QCdihedral\) (\cref{fig:qubitdihedralaxioms})
\end{tabular}
\end{center}
\end{definition}

\subsection{The qutrit Clifford fragment}
\label{subsec:qutrit-clifford-fragment}

For qutrits, the paper uses only the Clifford fragment, denoted \(\mathit{Cliff3}\), whose
signature is separate from the common qubit signature; we use the computational basis
\(\{\ket{0},\ket{1},\ket{2}\}\) of \(\C^3\) and the induced tensor-product basis.

\begin{definition}\label{def:qutrit-clifford-fragment}
The \emph{qutrit Clifford} fragment is generated by the signature
\(\Sigma_{\mathit{Cliff3}}\defeq
\{\gww:0\to0,\gH,\gS:1\to1,\gCNOT:2\to2\}\).

Let \(\cat{P}_{\mathit{Cliff3}}\defeq\cat{P}_{\Sigma_{\mathit{Cliff3}}}\) be the free PROP on
\(\Sigma_{\mathit{Cliff3}}\) and let \(\QC_{\mathit{Cliff3}}\defeq\QCqutritclifford\) be the rule
set given in \cref{fig:qutritcliffordaxioms}.  The presented PROP is
\(
  \QCirc_{\mathit{Cliff3}}\defeq\cat{P}_{\mathit{Cliff3}}/\QC_{\mathit{Cliff3}}.
\)
\end{definition}

By the universal property of the free PROP, the assignments
\begin{equation*}
\begin{array}{l l}
 \interp{\gww}_{\mathit{Cliff3}} : 1\longmapsto e^{i\pi/6} & 
 \interp{\gH}_{\mathit{Cliff3}} : \ket{x}\longmapsto 
   \tfrac1{\sqrt3}\!\sum_{k=0}^2 e^{2\pi i\,xk/3}\ket{k} \\
 \interp{\gCNOT}_{\mathit{Cliff3}} : \ket{x,y}\longmapsto \ket{x,(x+y)\bmod 3} &
 \interp{\gS}_{\mathit{Cliff3}} : \ket{x}\longmapsto
   e^{i\pi x(x-1)/3}\ket{x} 
\end{array}
\end{equation*}
for \(x,y\in\{0,1,2\}\) determine a strict symmetric monoidal functor
\(\cat{P}_{\mathit{Cliff3}}\to\cat{Qutrit}\).  Let
\(\cat{Qutrit}_{\mathit{Cliff3}}\defeq\langle\,\interp{g}_{\mathit{Cliff3}}\mid g\in\Sigma_{\mathit{Cliff3}}\,\rangle\)
be the semantic subPROP generated by the qutrit Clifford gates.  Since the equations of
\(\QC_{\mathit{Cliff3}}\) are sound for this interpretation, the free-PROP interpretation factors
through the quotient and yields
\(
  \interp{\cdot}_{\mathit{Cliff3}}:\QCirc_{\mathit{Cliff3}}\longrightarrow\cat{Qutrit}_{\mathit{Cliff3}}.
\)

\subsection{Rule-set presentations}

Each displayed equation is an arity-preserving schema; gates outside the corresponding fragment
signature are shortcuts expanded as in \cref{fig:quantum-circuit-shortcuts}.

\begin{figure}[ht]
    \centering
    \fbox{\begin{minipage}{0.975\linewidth}
    \centering
    \vspace{-1em}
    \begin{minipage}[c]{0.22\linewidth}\centering
      \begin{equation}\tag{\omegarulabel{8}}\label{w8}
        \scalebox{0.9}{\input{gates/wc.tikz}}^{\otimes 8} = \scalebox{0.9}{\input{gates/empty.tikz}}
      \end{equation}
    \end{minipage}%
    \begin{minipage}[c]{0.24\linewidth}\centering
      \begin{equation}\tag{\rulabel{H^2}}\label{H2}
        \scalebox{0.9}{\input{cliffordaxioms/HH.tikz}} = \scalebox{0.9}{\input{gates/Id.tikz}}
      \end{equation}
    \end{minipage}%
    \begin{minipage}[c]{0.32\linewidth}\centering
      \begin{equation}\tag{\rulabel{S^4}}\label{S4}
        \scalebox{0.9}{\input{cliffordaxioms/SSSS.tikz}} = \scalebox{0.9}{\input{gates/Id.tikz}}
      \end{equation}
    \end{minipage}\vspace{-0.6em}
    \begin{minipage}[c]{0.54\linewidth}\centering
      \begin{equation}\tag{E}\label{E}
        \scalebox{0.9}{\input{cliffordaxioms/HSH.tikz}} = \scalebox{0.9}{\input{cliffordaxioms/wSdaggerHSdagger.tikz}}
      \end{equation}
    \end{minipage}%
    \begin{minipage}[c]{0.32\linewidth}\centering
      \begin{equation}\tag{CPh}\label{Cs}
        \scalebox{0.9}{\input{cliffordaxioms/CNOTSCNOT.tikz}} = \scalebox{0.9}{\input{cliffordaxioms/S.tikz}}
      \end{equation}
    \end{minipage}\vspace{-0.6em}
    \begin{minipage}[c]{0.29\linewidth}\centering
      \begin{equation}\tag{B}\label{B}
        \scalebox{0.9}{\input{cliffordaxioms/CNOTNOTC.tikz}} = \scalebox{0.9}{\input{cliffordaxioms/SWAPCNOT.tikz}}
      \end{equation}
    \end{minipage}%
    \begin{minipage}[c]{0.47\linewidth}\centering
      \begin{equation}\tag{CZ}\label{CZ}
        \scalebox{0.9}{\input{cliffordaxioms/HCNOTH.tikz}} = \scalebox{0.9}{\input{cliffordaxioms/CZ.tikz}}
      \end{equation}
    \end{minipage}\vspace{-0.6em}
    \begin{minipage}[c]{0.29\linewidth}\centering
      \begin{equation}\tag{I}\label{Inew}
        \scalebox{0.9}{\input{qutritcliffordaxioms/CNOT23CNOT12CNOT13.tikz}} = \scalebox{0.9}{\input{qutritcliffordaxioms/CNOT12CNOT23.tikz}}
      \end{equation}
    \end{minipage}
    \end{minipage}}
    \caption{Qubit Clifford presentation \(\QCclifford\), with the scalar, one-qubit, controlled-phase, swap-decomposition, controlled-\(Z\), and three-wire interaction laws used in the completeness and minimality arguments.}
    \label{fig:qubitcliffordaxioms}
\end{figure}
\begin{figure}[ht]
    \centering
    \fbox{\begin{minipage}{0.975\linewidth}
        \centering
        \vspace{-1em}
        \begin{minipage}[c]{0.22\linewidth}\centering
            \begin{equation}\tag{\rulabel{-^2}}\label{realminus2}
                \scalebox{0.9}{\input{gates/minus.tikz}}^{\otimes 2} = \scalebox{0.9}{\input{gates/empty.tikz}}
            \end{equation}
        \end{minipage}
        \begin{minipage}[c]{0.24\linewidth}\centering
            \begin{equation}\tag{\rulabel{H^2}}\label{realH2}
                \scalebox{0.9}{\input{realcliffordaxioms/H2.tikz}} = \scalebox{0.9}{\input{gates/Id.tikz}}
            \end{equation}
        \end{minipage}
        \begin{minipage}[c]{0.24\linewidth}\centering
            \begin{equation}\tag{\rulabel{Z^2}}\label{realZ2}
                \scalebox{0.9}{\input{realcliffordaxioms/Z2.tikz}} = \scalebox{0.9}{\input{gates/Id.tikz}}
            \end{equation}
        \end{minipage}\vspace{-0.6em}
        \begin{minipage}[c]{0.50\linewidth}\centering
            \begin{equation}\tag{F}\label{realF}
                \scalebox{0.9}{\input{realcliffordaxioms/ZHZH.tikz}} = \scalebox{0.9}{\input{realcliffordaxioms/minusHZHZ.tikz}}
            \end{equation}
        \end{minipage}
        \begin{minipage}[c]{0.30\linewidth}\centering
            \begin{equation}\tag{\rulabel{CX^2}}\label{realCX2}
                \scalebox{0.9}{\input{realcliffordaxioms/CNOTCNOT.tikz}} = \scalebox{0.9}{\input{realcliffordaxioms/II.tikz}}
            \end{equation}
        \end{minipage}\vspace{-0.6em}
        \begin{minipage}[c]{0.29\linewidth}\centering
            \begin{equation}\tag{B}\label{realB}
                \scalebox{0.9}{\input{cliffordaxioms/CNOTNOTC.tikz}} = \scalebox{0.9}{\input{cliffordaxioms/SWAPCNOT.tikz}}
            \end{equation}
        \end{minipage}
        \begin{minipage}[c]{0.35\linewidth}\centering
            \begin{equation}\tag{ZC}\label{realwC}
                \scalebox{0.9}{\input{realcliffordaxioms/Z2CNOTZ2.tikz}} = \scalebox{0.9}{\input{realcliffordaxioms/CNOTZ1.tikz}}
            \end{equation}
        \end{minipage}\vspace{-0.6em}
        \begin{minipage}[c]{0.32\linewidth}\centering
            \begin{equation}\tag{CZr}\label{realXC}
                \scalebox{0.9}{\input{realcliffordaxioms/HHCNOTHH.tikz}} = \scalebox{0.9}{\input{realcliffordaxioms/NOTC.tikz}}
            \end{equation}
        \end{minipage}
        \begin{minipage}[c]{0.57\linewidth}\centering
            \begin{equation}\tag{CF}\label{realCF}
                \scalebox{0.9}{\input{realcliffordaxioms/CZcommitCX-left.tikz}} = \scalebox{0.9}{\input{realcliffordaxioms/CZcommitCX-right.tikz}}
            \end{equation}
        \end{minipage}
        \begin{minipage}[c]{0.29\linewidth}\centering
            \begin{equation}\tag{I}\label{realInew}
                \scalebox{0.9}{\input{qutritcliffordaxioms/CNOT23CNOT12CNOT13.tikz}} = \scalebox{0.9}{\input{qutritcliffordaxioms/CNOT12CNOT23.tikz}}
            \end{equation}
        \end{minipage}
    \end{minipage}}
    \caption{Real Clifford presentation \(\QCrealclifford\), where the real phase, Hadamard-\(Z\), controlled-\(X\), controlled-\(Z\), and three-wire interaction laws replace the corresponding Clifford rules in the real fragment.}
    \label{fig:qubitrealcliffordaxioms}
\end{figure}
\begin{figure}[ht]
    \centering
    \fbox{\begin{minipage}{0.975\linewidth}
    \centering
    \vspace{-1em}
    \begin{minipage}[c]{0.24\linewidth}\centering
      \begin{equation}\tag{\omegarulabel{12}}\label{qt-omegapow12}
        \scalebox{0.9}{\input{gates/wb.tikz}}^{\otimes 12} = \scalebox{0.9}{\input{gates/empty.tikz}}
      \end{equation}
    \end{minipage}
    \begin{minipage}[c]{0.29\linewidth}\centering
      \begin{equation}\tag{\rulabel{H^4}}\label{qt-hpow4}
        \scalebox{0.9}{\input{qutritcliffordaxioms/hadpow4.tikz}} = \scalebox{0.9}{\input{cliffordaxioms/I.tikz}}
      \end{equation}
    \end{minipage}
    \begin{minipage}[c]{0.27\linewidth}\centering
      \begin{equation}\tag{\rulabel{S^3}}\label{qt-spow3}
        \scalebox{0.9}{\input{qutritcliffordaxioms/spow3.tikz}} = \scalebox{0.9}{\input{gates/Id.tikz}}
      \end{equation}
    \end{minipage}\vspace{-0.6em}
    \begin{minipage}[c]{0.58\linewidth}\centering
      \begin{equation}\tag{E}\label{qt-shpow3}
        \scalebox{0.9}{\input{qutritcliffordaxioms/shs.tikz}} = \scalebox{0.9}{\input{qutritcliffordaxioms/h3s2h3.tikz}}
      \end{equation}
    \end{minipage}\vspace{-0.6em}
    \begin{minipage}[c]{0.43\linewidth}\centering
      \begin{equation}\tag{\rulabel{SS'}}\label{qt-ssprime}
        \scalebox{0.9}{\input{qutritcliffordaxioms/ssprime.tikz}} = \scalebox{0.9}{\input{qutritcliffordaxioms/sprimes.tikz}}
      \end{equation}
    \end{minipage}
    \begin{minipage}[c]{0.35\linewidth}\centering
      \begin{equation}\tag{CPh}\label{qt-cnotremove}
        \scalebox{0.9}{\input{qutritcliffordaxioms/CNOT-S-K-CNOT.tikz}} = \scalebox{0.9}{\input{qutritcliffordaxioms/S-K.tikz}}
      \end{equation}
    \end{minipage}\vspace{-0.6em}
    \begin{minipage}[c]{0.38\linewidth}\centering
      \begin{equation}\tag{KC}\label{qt-kcnot}
        \scalebox{0.9}{\input{qutritcliffordaxioms/K-CNOT.tikz}} = \scalebox{0.9}{\input{qutritcliffordaxioms/CNOT-CNOT-K.tikz}}
      \end{equation}
    \end{minipage}
    \begin{minipage}[c]{0.57\linewidth}\centering
      \begin{equation}\tag{CZ}\label{qt-czdecomp}
        \scalebox{0.9}{\input{qutritcliffordaxioms/CZ.tikz}} = \scalebox{0.9}{\input{qutritcliffordaxioms/CZdecomp.tikz}}
      \end{equation}
    \end{minipage}\vspace{-0.6em}
    \begin{minipage}[c]{0.36\linewidth}\centering
      \begin{equation}\tag{B}\label{qt-swapdecomp}
        \scalebox{0.9}{\input{cliffordaxioms/SWAPCNOT.tikz}} = \scalebox{0.9}{\input{qutritcliffordaxioms/CNOT-NOTC-NOTC-K.tikz}}
      \end{equation}
    \end{minipage}
    \begin{minipage}[c]{0.29\linewidth}\centering
      \begin{equation}\tag{I}\label{qt-I}
        \scalebox{0.9}{\input{qutritcliffordaxioms/CNOT23CNOT12CNOT13.tikz}} = \scalebox{0.9}{\input{qutritcliffordaxioms/CNOT12CNOT23.tikz}}
      \end{equation}
    \end{minipage}
    \end{minipage}}
    \caption{Qutrit Clifford presentation \(\QCqutritclifford\), using the qutrit scalar convention and the qutrit versions of the phase, controlled-addition, controlled-\(Z\), swap-decomposition, and three-wire laws.}
    \label{fig:qutritcliffordaxioms}
\end{figure}
\begin{figure}[ht]
    \centering
    \fbox{\begin{minipage}{0.975\textwidth}
        \centering
        \vspace{-1em}
        \begin{minipage}[c]{0.22\textwidth}
            \centering
            \begin{equation}\tag{\omegarulabel{8}}\label{t-w8}
                \scalebox{0.9}{\input{gates/wc.tikz}}^{\otimes 8} = \scalebox{0.9}{\input{gates/empty.tikz}}
            \end{equation}
        \end{minipage}
        \begin{minipage}[c]{0.46\textwidth}
            \centering
            \begin{equation}\tag{\rulabel{T^8}}\label{t-Tpow8}
                \scalebox{0.9}{\input{cliffordplustaxioms/Tpow8.tikz}} = \scalebox{0.9}{\input{gates/Id.tikz}}
            \end{equation}
        \end{minipage}\vspace{-0.6em}
        \begin{minipage}[c]{0.24\textwidth}
            \centering
            \begin{equation}\tag{\rulabel{H^2}}\label{t-H2}
                \scalebox{0.9}{\input{cliffordaxioms/HH.tikz}} = \scalebox{0.9}{\input{gates/Id.tikz}}
            \end{equation}
        \end{minipage}
        \begin{minipage}[c]{0.52\textwidth}
            \centering
            \begin{equation}\tag{E}\label{t-E}
                \scalebox{0.9}{\input{cliffordaxioms/HSH.tikz}} = \scalebox{0.9}{\input{cliffordaxioms/wSdaggerHSdagger.tikz}}
            \end{equation}
        \end{minipage}\vspace{-0.6em}
        \begin{minipage}[c]{0.34\textwidth}
            \centering
            \begin{equation}\tag{TX}\label{t-TX}
                \scalebox{0.9}{\input{cnotdihedral/R11L.tikz}}
            =
            \scalebox{0.9}{\input{cnotdihedral/R11R.tikz}}
          \end{equation}
        \end{minipage}
        \begin{minipage}[c]{0.34\textwidth}
            \centering
            \begin{equation}\tag{CPh}\label{t-Ct}
                \scalebox{0.9}{\input{cliffordplustaxioms/CNOTTCNOT.tikz}} = \scalebox{0.9}{\input{cliffordplustaxioms/T.tikz}}
            \end{equation}
        \end{minipage}\vspace{-0.6em}
        \begin{minipage}[c]{0.29\textwidth}
            \centering
            \begin{equation}\tag{B}\label{t-B}
                \scalebox{0.9}{\input{cliffordaxioms/CNOTNOTC.tikz}} = \scalebox{0.9}{\input{cliffordaxioms/SWAPCNOT.tikz}}
            \end{equation}
        \end{minipage}
        \begin{minipage}[c]{0.43\textwidth}
            \centering
            \begin{equation}\tag{CZ}\label{t-CZ}
                \scalebox{0.9}{\input{cliffordaxioms/HCNOTH.tikz}} = \scalebox{0.9}{\input{cliffordaxioms/CZ.tikz}}
            \end{equation}
        \end{minipage}\vspace{-0.6em}
        \begin{minipage}[c]{0.32\textwidth}
            \centering
            \begin{equation}\tag{CSH}\label{t-CsCh}
                \scalebox{0.9}{\input{cliffordplustaxioms/CsCh.tikz}} = \scalebox{0.9}{\input{cliffordplustaxioms/ChCs.tikz}}
            \end{equation}
        \end{minipage}
        \begin{minipage}[c]{0.50\textwidth}
            \centering
            \begin{equation}\tag{\rulabel{HT^2}}\label{t-HTHT}
                \scalebox{0.9}{\input{cliffordplustaxioms/HTHT.tikz}} = \scalebox{0.9}{\input{cliffordplustaxioms/THTH.tikz}}
            \end{equation}
        \end{minipage}\vspace{-0.6em}
        \begin{minipage}[c]{0.65\textwidth}
            \centering
            \begin{equation}\tag{HTH}\label{t-C20}
                \scalebox{0.9}{\input{cliffordplustaxioms/C20l.tikz}} = \scalebox{0.9}{\input{cliffordplustaxioms/C20r.tikz}}
            \end{equation}
        \end{minipage}
    \end{minipage}}
    \caption{Qubit Clifford\(+T\) presentation \(\QCcliffordplust\), combining the Clifford core with the \(T\)-phase, controlled-phase, controlled-Hadamard, and \(HTH\)-type relations needed for the two-qubit completeness transfer.}
    \label{fig:qubitcliffordplustaxioms}
\end{figure}
\begin{figure}[ht]
    \centering
    \fbox{\begin{minipage}{0.975\textwidth}
        \centering
        \vspace{-1em}
        \begin{minipage}[c]{0.22\textwidth}
            \centering
            \begin{equation}\tag{\omegarulabel{8}}\label{cs-w8}
                \scalebox{0.9}{\input{gates/wc.tikz}}^{\otimes 8} = \scalebox{0.9}{\input{gates/empty.tikz}}
            \end{equation}
        \end{minipage}
        \begin{minipage}[c]{0.24\textwidth}
            \centering
            \begin{equation}\tag{\rulabel{H^2}}\label{cs-H2}
                \scalebox{0.9}{\input{cliffordaxioms/HH.tikz}} = \scalebox{0.9}{\input{gates/Id.tikz}}
            \end{equation}
        \end{minipage}
        \begin{minipage}[c]{0.32\textwidth}
            \centering
            \begin{equation}\tag{\rulabel{S^4}}\label{cs-S4}
                \scalebox{0.9}{\input{cliffordaxioms/SSSS.tikz}} = \scalebox{0.9}{\input{gates/Id.tikz}}
            \end{equation}
        \end{minipage}\vspace{-0.6em}
        \begin{minipage}[c]{0.51\textwidth}
            \centering
            \begin{equation}\tag{E}\label{cs-E}
                \scalebox{0.9}{\input{cliffordaxioms/HSH.tikz}} = \scalebox{0.9}{\input{cliffordaxioms/wSdaggerHSdagger.tikz}}
            \end{equation}
        \end{minipage}\vspace{-0.6em}
        \begin{minipage}[c]{0.41\textwidth}
            \centering
            \begin{equation}\tag{CPh}\label{cs-C}
                \scalebox{0.9}{\input{cliffordpluscsaxioms/CSpow4withS.tikz}} = \scalebox{0.9}{\input{cliffordaxioms/S.tikz}}
            \end{equation}
        \end{minipage}
        \begin{minipage}[c]{0.31\textwidth}
            \centering
            \begin{equation}\tag{CSr}\label{cs-CSrev}
                \scalebox{0.9}{\input{cliffordpluscsaxioms/CS.tikz}} = \scalebox{0.9}{\input{cliffordpluscsaxioms/CSrev.tikz}}
            \end{equation}
        \end{minipage}\vspace{-0.6em}
        \begin{minipage}[c]{0.28\textwidth}
            \centering
            \begin{equation}\tag{B}\label{cs-B}
                \scalebox{0.9}{\input{cliffordaxioms/CNOTNOTC.tikz}} = \scalebox{0.9}{\input{cliffordaxioms/SWAPCNOT.tikz}}
            \end{equation}
        \end{minipage}
        \begin{minipage}[c]{0.59\textwidth}
            \centering
            \begin{equation}\tag{XCS}\label{cs-XCS}
                \scalebox{0.9}{\input{cliffordpluscsaxioms/X2TopLeft.tikz}}=\scalebox{0.9}{\input{cliffordpluscsaxioms/X2TopRight.tikz}}
            \end{equation}
        \end{minipage}\vspace{-0.6em}
        \begin{minipage}[c]{0.49\textwidth}
            \centering
            \begin{equation}\tag{CE}\label{cs-SHCHC}
                \scalebox{0.9}{\input{cliffordpluscsaxioms/HTopLeft.tikz}}=\scalebox{0.9}{\input{cliffordpluscsaxioms/HTopRight.tikz}}
            \end{equation}
        \end{minipage}
        \begin{minipage}[c]{0.29\textwidth}
            \centering
            \begin{equation}\label{cs-I}\tag{I}\scalebox{0.9}{\input{qutritcliffordaxioms/CNOT23CNOT12CNOT13.tikz}} = \scalebox{0.9}{\input{qutritcliffordaxioms/CNOT12CNOT23.tikz}}\end{equation}
        \end{minipage}\vspace{-0.6em}
        \begin{minipage}[c]{0.64\textwidth}
            \centering
            \begin{equation}\label{cs-U}\tag{\rulabel{SH_k}}\scalebox{0.9}{\input{cliffordpluscsaxioms/CSmonsterL.tikz}} = \scalebox{0.9}{\input{cliffordpluscsaxioms/CSmonsterR.tikz}}\end{equation}
        \end{minipage}
    \end{minipage}}
    \caption{Qubit Clifford\(+CS\) presentation \(\QCcliffordpluscs\), combining the Clifford phase laws with controlled-\(S\) commutation and the two three-wire relations used for the three-qubit transfer; in \cref{cs-U}, the dashed box repeats the enclosed subcircuit \(k\) times for \(k\in\{0,1,2,3\}\).}
    \label{fig:qubitcliffordpluscsaxioms}
\end{figure}
\begin{figure}[ht]
    \centering
    \fbox{%
        \begin{minipage}{0.98\textwidth}
        \centering
        \vspace{-1em}
        \begin{minipage}[c]{0.22\textwidth}
            \centering
            \begin{equation}\tag{\omegarulabel{8}}\label{cnot-R10}
                \scalebox{0.9}{\input{gates/wc.tikz}}^{\otimes 8} = \scalebox{0.9}{\input{gates/empty.tikz}}
            \end{equation}
        \end{minipage}
        \begin{minipage}[c]{0.27\textwidth}
            \centering
            \begin{equation}\tag{\rulabel{X^2}}\label{cnot-R1}
            \scalebox{0.9}{\input{cnotdihedral/R1L.tikz}}
            =
            \scalebox{0.9}{\input{gates/Id.tikz}}
            \end{equation}
        \end{minipage}
        \begin{minipage}[c]{0.24\textwidth}
            \centering
            \begin{equation}\tag{\rulabel{T^8}}\label{cnot-R7}
            \scalebox{0.9}{\input{cnotdihedral/R7L.tikz}}
            =
            \scalebox{0.9}{\input{gates/Id.tikz}}
            \end{equation}
        \end{minipage}\vspace{-0.6em}
        \begin{minipage}[c]{0.34\textwidth}
            \centering
            \begin{equation}\tag{TX}\label{cnot-R11}
            \scalebox{0.9}{\input{cnotdihedral/R11L.tikz}}
            =
            \scalebox{0.9}{\input{cnotdihedral/R11R.tikz}}
            \end{equation}
        \end{minipage}
        \begin{minipage}[c]{0.34\textwidth}
            \centering
            \begin{equation}\tag{XC}\label{new-R3}
            \scalebox{0.9}{\input{identities/wCNOT-00.tikz}}
            =
            \scalebox{0.9}{\input{identities/wCNOT-16.tikz}}
            \end{equation}
        \end{minipage}
        \begin{minipage}[c]{0.29\textwidth}
            \centering
            \begin{equation}\tag{B}\label{new-R5}
            \scalebox{0.9}{\input{cliffordaxioms/CNOTNOTC.tikz}}
            =
            \scalebox{0.9}{\input{cliffordaxioms/SWAPCNOT.tikz}}
            \end{equation}
        \end{minipage}\vspace{-0.6em}
        \begin{minipage}[c]{0.35\textwidth}
            \centering
            \begin{equation}\tag{ZC}\label{new-R8}
            \scalebox{0.9}{\input{identities/zCNOT-00.tikz}}
            =
            \scalebox{0.9}{\input{identities/zCNOT-05.tikz}}
            \end{equation}
        \end{minipage}
        \begin{minipage}[c]{0.29\textwidth}
            \centering
            \begin{equation}\tag{CPh}\label{cnot-R12}
            \scalebox{0.9}{\input{cliffordplustaxioms/CNOTTCNOT.tikz}}
            =
            \scalebox{0.9}{\input{cliffordplustaxioms/T.tikz}}
            \end{equation}
        \end{minipage}\vspace{-0.6em}
        \begin{minipage}[c]{0.29\textwidth}
            \centering
            \begin{equation}\tag{I}\label{cnot-R6}
            \scalebox{0.9}{\input{cnotdihedral/R6L.tikz}}
            =
            \scalebox{0.9}{\input{cnotdihedral/R6R.tikz}}
            \end{equation}
        \end{minipage}
        \begin{minipage}[c]{0.48\textwidth}
            \centering
            \begin{equation}\tag{\rulabel{C^2T}}\label{cnot-R9}
            \scalebox{0.9}{\input{cnotdihedral/R9L.tikz}}
            =
            \scalebox{0.9}{\input{cnotdihedral/R9R.tikz}}
            \end{equation}
        \end{minipage}\vspace{-0.6em}
        \begin{minipage}[c]{0.84\textwidth}
            \centering
            \begin{equation}\tag{\rulabel{C^3T}}\label{cnot-R13}
            \scalebox{0.82}{\input{cnotdihedral/R13L.tikz}}
            =
            \scalebox{0.82}{\input{cnotdihedral/R13R.tikz}}
            \end{equation}
        \end{minipage}
        \end{minipage}%
    }
    \caption{CNOT-dihedral presentation \(\QCdihedral\), with the \(X\), \(T\), CNOT, phase-commutation, \(C^2T\), and \(C^3T\) relations used to recover the omitted source axioms.}
    \label{fig:qubitdihedralaxioms}
\end{figure}
\section{Completeness: from PROs to PROPs}
\label{sec:completeness}

For each fragment~\(\bullet\) of \cref{sec:circuits-relations} we have a presented PROP
\(\QCirc_\bullet=\cat{P}_\bullet/\QC_\bullet\) and a strict symmetric monoidal interpretation
\(\interp{\cdot}_\bullet:\QCirc_\bullet\to\cat{Qubit}_\bullet\) (or
\(\cat{Qutrit}_{\mathit{Cliff3}}\) in the qutrit case).  Completeness means faithfulness of this
interpretation.  Complete equational theories for these fragments already exist in the
literature
\cite{SelingerStabilizer,SelingerRealStabilizer,qutritclifford,SelingerCliffordPlusT,SelingerCliffordPlusCS,AmyCNOTDihedral}.
Those theorems supply the source completeness results: their normal forms prove faithfulness for
older source presentations, and the transfer argument in this section checks that the source syntax
and our smaller figure-defined target syntax present the same semantic subPROP.

The transfer has to align three conventions before the cited normal forms can be used for our
quotient PROPs.  If a source theorem is stated for a PRO, \cref{lem:fromprotoprop} adds the
structural symmetries needed to read it as a PROP theorem.  For qutrit Clifford and
Clifford\(+CS\), the source and target scalar subgroups differ: the source presentations make fewer
global phases explicit, which is harmless for the original proofs but does not match the standard
strict matrices fixed in \cref{sec:circuits-relations}.  The scalar-refinement step adjoins the
missing roots of unity, so source and target have the same strict semantics.  Encoding and decoding
are then identity-on-objects PROP morphisms, in the sense of \cref{def:prop-morphism}, that compare
the source generators and relations with ours; once these maps preserve semantics and decode source
axioms to derivable target equations, \cref{th:completeness} transports faithfulness.

\subsection{Scalar refinement}
\label{subsec:scalar-refinement}

Because our semantics uses strict unitary equality rather than projective equality, global phases
are visible.  Equivalently, the scalar subgroup \(S(\cat{C})=\cat{C}(0,0)\) of the semantic PROP is
part of the data that must match between source and target presentations.
For our purposes, a scalar mismatch is therefore a mismatch between presentation conventions: the
source convention makes fewer global phases explicit, while our convention keeps the standard gate
matrices and records the extra phases as scalars.

Only two source completeness results require scalar adjustment.  The qutrit Clifford presentation of
\cite{qutritclifford} uses an order-\(6\) scalar subgroup whereas our semantics uses order \(12\), and
the Clifford\(+CS\) presentation of \cite{SelingerCliffordPlusCS} uses order \(4\) whereas our qubit
semantics uses order \(8\).  In both cases we appeal to the scalar-refinement construction of
Appendix~\ref{app:scalar-refinement}, which adjoins the missing scalar generator and is conservative
by \cref{lem:scalar-refinement}.  Source and target can then be compared inside the same semantic
subPROP, while the refinement remains conservative on the old source signature.

\subsection{Encoding, decoding, and transfer of completeness}

The five fragments handled by encoding and decoding reduce to one generator-and-relation
comparison.

Fix one encoding/decoding transfer instance.  Here \(\cat{P}\) and \(\mathcal{R}\) denote the
target free PROP and target relations, while \(\cat{P}^{\mathrm{src}}\) and
\(\mathcal{R}^{\mathrm{src}}\) denote the corresponding source free PROP and source relations,
after PRO-to-PROP lifting and scalar refinement where those steps apply.

\begin{definition}\label{def:encdec}
An encoding/decoding pair consists of a target-to-source PROP morphism
\(E:\cat{P}\to\cat{P}^{\mathrm{src}}\) and a source-to-target PROP morphism
\(D:\cat{P}^{\mathrm{src}}\to\cat{P}\) such that
\(\interp{C}=\interp{E(C)}_{\mathrm{src}}\) for every target circuit \(C\in\cat{P}\) and
\(\interp{C'}_{\mathrm{src}}=\interp{D(C')}\) for every source circuit
\(C'\in\cat{P}^{\mathrm{src}}\).
\end{definition}

After importing a source completeness theorem, the comparison is local: it involves only the chosen
generators and source relations.  A typical mismatch is that a source presentation takes
controlled-\(Z\) as primitive, whereas our target signature takes CNOT.  The decoder replaces source
controlled-\(Z\) by its fixed expansion through CNOT and Hadamards; the target-generator check then
proves, for example, that decoding the encoded CNOT derives CNOT again, while the source-relation
check verifies each source equation after the same replacement.  The non-identity parts of the maps
used here are recorded in \cref{tab:ED-clifford-family}.

The transfer therefore has a finite checklist: the round-trip equality \(D(E(g))=g\) for each
target generator \(g\), and the decoded equality \(D(L)=D(R)\) for each source axiom \(L=R\).
This is the presentation-level analogue of Reidemeister--Schreier rewriting
\cite{Reidemeister1927,Schreier1927}: instead of computing a subgroup presentation, we compare two
chosen presentations by following the images of their generators and defining relations.

\begin{lemma}\label{lem:decenc}
Assume \(\cat{P}\) is presented by generators \(\Sigma\) and relations \(\mathcal{R}\).
If for every non-structural generator \(g\in\Sigma\),
\(
  \cat{P}/\mathcal{R}\vdash D(E(g)) = g,
\)
then for every circuit \(C\) built from \(\Sigma\),
\(
  \cat{P}/\mathcal{R}\vdash D(E(C)) = C.
\)
\end{lemma}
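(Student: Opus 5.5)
The proof goes by structural induction on circuits of the free PROP \(\cat{P}\). The key fact is that \(D\circ E:\cat{P}\to\cat{P}\) is a composite of PROP morphisms in the sense of \cref{def:prop-morphism}. It is therefore itself a strict symmetric monoidal, identity-on-objects functor, so it preserves identities, symmetries, \(\circ\) and \(\otimes\). Composing further with the quotient map \(\cat{P}\to\cat{P}/\mathcal{R}\) gives two PROP morphisms \(\cat{P}\to\cat{P}/\mathcal{R}\): \(q\circ D\circ E\) and \(q\) itself. The claim says exactly that these two agree on every morphism of \(\cat{P}\).

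Concretely, by \cref{def:free-prop} every circuit \(C\) of \(\cat{P}\) is represented by a term built from generators \(g\in\Sigma\), identities \(\id_n\), the swap \(\sigma\), sequential composition and tensor. I would induct on such a term.

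\textbf{Base cases.} For a generator the equation is exactly the hypothesis. For \(\id_n\) and for \(\sigma\) (hence every \(\sigma_{n,m}\)) we have \(D(E(\id_n))=\id_n\) and \(D(E(\sigma))=\sigma\) on the nose, because both \(E\) and \(D\) preserve structural morphisms.

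\textbf{Inductive step.} Suppose \(C=C_2\circ C_1\). Functoriality gives \(D(E(C))=D(E(C_2))\circ D(E(C_1))\). The induction hypotheses \(\cat{P}/\mathcal{R}\vdash D(E(C_i))=C_i\) then combine because derivable equality is a congruence for \(\circ\) (\cref{def:presented-prop}). The tensor case \(C=C_1\otimes C_2\) is identical, using closure under \(\otimes\).

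The only point needing care, and the one I expect to be the main (mild) obstacle, is well-definedness. Circuits are equivalence classes of terms modulo PROP coherence, so the induction must be insensitive to the chosen representative. This holds because \(E\) and \(D\) are functors on the free PROP and hence already respect the coherence quotient. The conclusion concerns the class \(D(E(C))\), which does not depend on the representative. Equivalently, one can avoid terms entirely. The set of morphisms \(C\) with \(\cat{P}/\mathcal{R}\vdash D(E(C))=C\) contains all structural morphisms and is closed under \(\circ\) and \(\otimes\), so it is a subPROP in the sense of \cref{def:subprop}. It also contains \(\Sigma\). Since \(\cat{P}\) is generated by \(\Sigma\) as a PROP, this subPROP is all of \(\cat{P}\).
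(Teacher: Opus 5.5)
Your proposal is correct and matches the paper's proof: a structural induction on \(C\) that uses the fact that \(E\) and \(D\) preserve \(\circ\), \(\otimes\), and the structural morphisms. Your remark on well-definedness modulo coherence, and the equivalent argument that the set of \(C\) with \(\cat{P}/\mathcal{R}\vdash D(E(C))=C\) is a subPROP containing \(\Sigma\), make explicit a point the paper leaves implicit.
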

\begin{proof}
By structural induction on \(C\), using that \(E\) and \(D\) preserve \(\circ\), \(\otimes\), and the
structural morphisms.
\end{proof}

\begin{lemma}\label{lem:decrelations}
Assume that for every axiom \(L=R\) in \(\mathcal{R}^{\mathrm{src}}\),
\(
  \cat{P}/\mathcal{R}\vdash D(L) = D(R).
\)
Then \(D\) induces a PROP morphism
\(
  \overline{D}:\cat{P}^{\mathrm{src}}/\mathcal{R}^{\mathrm{src}}\to\cat{P}/\mathcal{R},
\)
and any derivable equality in the source quotient transports to a derivable equality between its
decodings in the target theory.
\end{lemma}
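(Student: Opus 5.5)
The plan is to use the universal property of the presented PROP \(\cat{P}^{\mathrm{src}}/\mathcal{R}^{\mathrm{src}}\) as a quotient of the free PROP \(\cat{P}^{\mathrm{src}}\) by the congruence generated by \(\mathcal{R}^{\mathrm{src}}\) (\cref{def:presented-prop}). Let \(q:\cat{P}\to\cat{P}/\mathcal{R}\) be the quotient PROP morphism, and consider the composite \(q\circ D:\cat{P}^{\mathrm{src}}\to\cat{P}/\mathcal{R}\). This is a strict symmetric monoidal, identity-on-objects functor, because both \(D\) and \(q\) are. It therefore suffices to show that the kernel relation
\[
  C_1\sim C_2 \iff \cat{P}/\mathcal{R}\vdash D(C_1)=D(C_2)
\]
on \(\cat{P}^{\mathrm{src}}\) contains the congruence generated by \(\mathcal{R}^{\mathrm{src}}\).

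The first step is to check that \(\sim\) is a congruence of the required kind. It is an equivalence relation, since derivable equality in \(\cat{P}/\mathcal{R}\) is one. It is closed under \(\circ\) and \(\otimes\): if \(D(C_1)=D(C_2)\) and \(D(C_1')=D(C_2')\) are derivable, then \(D(C_1\circ C_1')=D(C_1)\circ D(C_1')=D(C_2)\circ D(C_2')=D(C_2\circ C_2')\), and similarly for \(\otimes\). This uses that \(D\) is a PROP morphism and that derivability in \(\cat{P}/\mathcal{R}\) is itself closed under these operations. Closure under composition with symmetries is the special case where one factor is a structural morphism, since \(D(\sigma_{n,m})=\sigma_{n,m}\). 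The relation \(\sim\) also contains every PROP coherence equation, because \(D\) is defined on the free PROP, which is already quotiented by coherence.

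The second step is that, by hypothesis, \(\sim\) contains every axiom \(L=R\) of \(\mathcal{R}^{\mathrm{src}}\). Since the congruence generated by \(\mathcal{R}^{\mathrm{src}}\) is the smallest such relation, it is contained in \(\sim\). Hence \(q\circ D\) is constant on its classes and factors uniquely as \(\overline{D}\circ q^{\mathrm{src}}\). This \(\overline{D}\) is again a strict symmetric monoidal, identity-on-objects functor, because the quotient operations are defined representative-wise. The transport statement is then immediate: if \(\cat{P}^{\mathrm{src}}/\mathcal{R}^{\mathrm{src}}\vdash C_1=C_2\), then \((C_1,C_2)\) lies in the generated congruence, so \(\cat{P}/\mathcal{R}\vdash D(C_1)=D(C_2)\).

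The only delicate point is making sure that the notion of congruence in \cref{def:presented-prop}, namely closure under \(\circ\), \(\otimes\) and symmetry, matches exactly what \(\sim\) satisfies. In particular, derivations may apply an axiom inside an arbitrary context \(A\circ(B\otimes L\otimes B')\circ A'\), possibly after structural rewiring. This is handled by the closure properties above, since any such context is built from \(\circ\), \(\otimes\) and structural morphisms, all of which \(D\) preserves. I expect no further obstacle beyond this routine check.
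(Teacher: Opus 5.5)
Your proposal is correct and takes the same route as the paper: the paper's proof simply observes that \(D\) equalises every source relation and therefore factors through the quotient \(\cat{P}^{\mathrm{src}}/\mathcal{R}^{\mathrm{src}}\). You spell out that universal-property step by checking that the kernel of \(q\circ D\) is a congruence closed under \(\circ\), \(\otimes\) and symmetries and containing \(\mathcal{R}^{\mathrm{src}}\); this is exactly the verification the paper leaves implicit.
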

\begin{proof}
The hypothesis says exactly that \(D\) equalises all source relations, hence factors through the
quotient.
\end{proof}

\begin{theorem}\label{th:completeness}
Assume \(\cat{P}^{\mathrm{src}}/\mathcal{R}^{\mathrm{src}}\) is complete for
\(\interp{\cdot}_{\mathrm{src}}\), and let \((E,D)\) be an encoding/decoding pair satisfying the
hypotheses of \cref{lem:decenc,lem:decrelations}.  Then \(\cat{P}/\mathcal{R}\) is complete for
\(\interp{\cdot}\).
\end{theorem}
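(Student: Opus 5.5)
The argument is a direct chase: take two target circuits with equal semantics, push them through the encoding to source circuits with equal semantics, use source completeness to get a source derivation, decode it back with \cref{lem:decrelations}, and remove the round trip with \cref{lem:decenc}.

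Fix target circuits \(C_1,C_2:n\to m\) in \(\cat{P}\) with \(\interp{C_1}=\interp{C_2}\).

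\textbf{Step 1 (encode).} By \cref{def:encdec}, the encoding \(E\) preserves semantics. Hence
\(\interp{E(C_1)}_{\mathrm{src}}=\interp{C_1}=\interp{C_2}=\interp{E(C_2)}_{\mathrm{src}}\).
Since \(E\) is an identity-on-objects PROP morphism, \(E(C_1)\) and \(E(C_2)\) are source circuits of the same type \(n\to m\).

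\textbf{Step 2 (source completeness).} By completeness of \(\cat{P}^{\mathrm{src}}/\mathcal{R}^{\mathrm{src}}\), we get \(\cat{P}^{\mathrm{src}}/\mathcal{R}^{\mathrm{src}}\vdash E(C_1)=E(C_2)\). In other words, \(E(C_1)\) and \(E(C_2)\) have the same image in the source quotient.

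\textbf{Step 3 (decode).} By \cref{lem:decrelations}, \(D\) factors through the quotient as \(\overline{D}\). Applying \(\overline{D}\) gives \(\cat{P}/\mathcal{R}\vdash D(E(C_1))=D(E(C_2))\). Concretely, the derivation is transported step by step. Each use of a source axiom \(L=R\) inside a context becomes \(D(L)=D(R)\) in the decoded context; this is derivable by hypothesis, and the congruence is closed under \(\circ\), \(\otimes\) and symmetries. Each structural step maps to a structural step, because \(D\) is a strict symmetric monoidal functor.

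\textbf{Step 4 (round trip).} By \cref{lem:decenc}, \(\cat{P}/\mathcal{R}\vdash D(E(C_i))=C_i\) for \(i=1,2\). Chaining the equalities,
\(C_1=D(E(C_1))=D(E(C_2))=C_2\)
in \(\cat{P}/\mathcal{R}\), which is faithfulness of \(\interp{\cdot}\).

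There is no real obstacle; the only point needing care is well-definedness.

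First, \(E\) must be applied to representatives in the free PROP \(\cat{P}\), not to classes. This is fine because completeness is a statement about circuits, and \(\interp{\cdot}\) is defined on the free PROP.

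Second, the bounded variant. If the source theorem is only complete up to \(k\) wires, the same chase works hom-set by hom-set: \(E\) and \(D\) preserve arities, so the target presentation is complete up to \(k\) wires. This is the form needed for Clifford\(+T\) and Clifford\(+CS\).
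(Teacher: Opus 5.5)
Your proof is correct and is essentially the paper's argument: encode using semantic preservation of \(E\), apply source completeness, decode through the induced quotient morphism \(\overline{D}\) from \cref{lem:decrelations}, and cancel the round trip with \cref{lem:decenc}. Your remark on the bounded variant also matches the paper's observation that the chase works hom-set by hom-set because \(E\) and \(D\) are identity-on-objects.
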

\begin{proof}
If target circuits \(C_1,C_2\) have the same semantics, then semantic preservation of \(E\) gives
\(\interp{E(C_1)}_{\mathrm{src}}=\interp{E(C_2)}_{\mathrm{src}}\).  Source completeness yields
\(E(C_1)=E(C_2)\) in
\(\cat{P}^{\mathrm{src}}/\mathcal{R}^{\mathrm{src}}\); applying the quotient morphism
\(\overline{D}\) from \cref{lem:decrelations} gives \(D(E(C_1))=D(E(C_2))\) in
\(\cat{P}/\mathcal{R}\), and \cref{lem:decenc} rewrites both sides to \(C_1=C_2\).
\end{proof}

The same proof gives the bounded form used in \cref{cor:clifford-family-complete}: if the source
interpretation is faithful on all hom-sets \(n\to m\) with \(n,m\le k\), then the target
interpretation is faithful on the same range, because \(E\) and \(D\) are identity-on-objects PROP
morphisms.

\subsection{Fragment-wise completeness}

Applying the transfer to the five encoding/decoding fragments and the direct CNOT-dihedral
argument gives the completeness bounds used by the minimality theorem.

\begin{theorem}\label{cor:clifford-family-complete}
\begin{enumerate}
\item \(\QCirc_{\mathit{Cliff}}\) is complete for \(\cat{Qubit}_{\mathit{Cliff}}\).
\item \(\QCirc_{\mathit{RCliff}}\) is complete for \(\cat{Qubit}_{\mathit{RCliff}}\).
\item \(\QCirc_{\mathit{Cliff3}}\) is complete for \(\cat{Qutrit}_{\mathit{Cliff3}}\).
\item \(\QCirc_{\mathit{CliffT}}\) is complete up to \(2\) qubits for \(\cat{Qubit}_{\mathit{CliffT}}\).
\item \(\QCirc_{\mathit{CliffCS}}\) is complete up to \(3\) qubits for \(\cat{Qubit}_{\mathit{CliffCS}}\).
\item \(\QCirc_{\mathit{CNOTdihe}}\) is complete for \(\cat{Qubit}_{\mathit{CNOTdihe}}\).
\end{enumerate}
\end{theorem}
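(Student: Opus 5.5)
The proof handles each of the six items by the same pattern: import the source completeness theorem, make it a PROP presentation over the same strict semantics, and then apply \cref{th:completeness}. For items 1, 2, 4 and 5 the sources are \cite{SelingerStabilizer}, \cite{SelingerRealStabilizer}, \cite{SelingerCliffordPlusT} and \cite{SelingerCliffordPlusCS}. Item 3 uses \cite{qutritclifford}, and item 6 uses \cite{AmyCNOTDihedral}. Where a source is stated as a PRO presentation, I first apply \cref{lem:fromprotoprop}, with \(\tau\) the standard three-CNOT circuit (or its qutrit analogue) denoting the swap. This produces a complete PROP presentation \(\cat{P}^{\mathrm{src}}/\mathcal{R}^{\mathrm{src}}\). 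For item 3 (scalar order \(6\) vs.\ \(12\)) and item 5 (order \(4\) vs.\ \(8\)), I next adjoin the missing root of unity as a new scalar generator. By \cref{lem:scalar-refinement} this step is conservative and preserves completeness, so the refined source semantics coincide with the strict semantics \(\cat{Qutrit}_{\mathit{Cliff3}}\), respectively \(\cat{Qubit}_{\mathit{CliffCS}}\).

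Second, for each fragment I fix an encoding/decoding pair \((E,D)\) as in \cref{def:encdec}, taking the non-identity clauses from \cref{tab:ED-clifford-family}. Typical examples are decoding a source controlled-\(Z\) as \((\id_1\otimes\gH)\circ\gCNOT\circ(\id_1\otimes\gH)\), or expressing \(S\) as \(T^2\). Semantic preservation only needs to be checked on generators, since \(E\) and \(D\) are strict monoidal functors and the interpretations are functorial; each such check is a small matrix identity. Next I verify the round trips \(D(E(g))=g\) for every target generator \(g\), which by \cref{lem:decenc} extend to all circuits. These round trips are short derivations, usually one use of \eqref{H2} or a power relation.

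The main work, and the expected obstacle, is the hypothesis of \cref{lem:decrelations}: every source axiom \(L=R\) must decode to an equation \(D(L)=D(R)\) derivable from the reduced target rules. I would treat these axioms in three groups. The first group consists of decoded source axioms that coincide with target axioms up to PROP coherence and are therefore immediate. The second group consists of source axioms that only describe how swaps or naturality interact with generators. These become structural in the PROP setting, or follow from \eqref{B} together with naturality; this is where most of the reduction from roughly \(15\)--\(18\) rules to \(8\)--\(14\) rules occurs. The third group consists of genuinely algebraic source axioms that were dropped from the target, such as \(\gCNOT^2=\id\), commutation of CNOTs sharing a control, or the order of \((\gS\gH)\), together with the higher-arity rules of the \(+T\) and \(+CS\) fragments. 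For this group I would derive small lemmas first, in roughly this order: the involutivity of \(\gS^\dagger\)-conjugated Hadamards, \(\gCNOT^2=\id\) from \eqref{B} and \eqref{involution}, and the commutation of controlled-\(Z\) gates from \eqref{CZ} and \eqref{Cs}. I would then reuse these lemmas inside the longer derivations, which are relegated to the appendix. The Clifford\(+CS\) three-qubit axioms and the qutrit axioms are where I expect the longest rewriting.

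Finally, \cref{th:completeness} gives items 1, 2, 3 and 6 in all arities. For items 4 and 5, the bounded remark after \cref{th:completeness} applies: \(E\) and \(D\) are identity on objects, so completeness of the source up to \(2\) (respectively \(3\)) wires transfers to completeness of the target up to the same bound. For CNOT-dihedral I expect the same transfer to work. If the source normal form already matches the target syntax closely, I would instead argue directly by deriving each of Amy et al.'s rules from \(\QCdihedral\), which amounts to taking \(E\) and \(D\) to be identities on shared generators.
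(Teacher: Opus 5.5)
Your plan matches the paper's proof. You lift the cited sources to PROPs and scalar-refine the qutrit Clifford and Clifford$+CS$ sources with \cref{lem:scalar-refinement}. You should at least state its no-hidden-phases hypothesis, as the paper does. You then use the pairs of \cref{tab:ED-clifford-family}, with the round-trip checks for \cref{lem:decenc} and the decoded-axiom checks for \cref{lem:decrelations}, and conclude with \cref{th:completeness} (bounded form for items 4 and 5). For CNOT-dihedral the paper takes exactly your fallback, deriving each removed axiom of Amy et al.\ inside $\QCdihedral$ (\cref{app:CNOT-dihedral}). What makes that work is that the source already has the same signature, an explicit swap rule and the same order-$8$ scalar; it has nothing to do with normal forms.

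One step of your sketch fails as written: $\mathrm{CNOT}\circ\mathrm{CNOT}=\mathrm{id}_2$ does not follow from \eqref{B} and the involutivity of the swap. The occurrence detector $?\gCNOT$ respects all PROP coherence and \eqref{B}, since both sides of \eqref{B} contain a CNOT. Yet it sends $\mathrm{CNOT}\circ\mathrm{CNOT}$ to $1$ and $\mathrm{id}_2$ to $0$. By the paper's separator table, the only Clifford axiom it breaks is \eqref{Cs}, so that axiom must enter any derivation.

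The missing ingredient is invertibility of CNOT. Conjugating the control phase with \eqref{Cs} and closing up with \eqref{S4} gives both a left and a right inverse of CNOT. Only then can \eqref{B} plus naturality be cancelled down to $\mathrm{CNOT}^2=\mathrm{id}_2$. This is the paper's derivation (\eqref{S4}, \eqref{Cs}, \eqref{B}, \eqref{B}, \eqref{Cs}, \eqref{S4}). It is redone with \eqref{t-Tpow8} and \eqref{t-Ct} for Clifford$+T$, and with \eqref{cnot-R7} and \eqref{cnot-R12} for CNOT-dihedral.

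Many later decodings rest on this lemma: the swap unfolding, the phase gadgets, and the $X$/CNOT commutations. So the fix matters, but it is local and leaves your overall transfer argument intact.
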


\begin{proof}
For the five fragments using encoding/decoding, start from the corresponding cited source
completeness theorem; in the qutrit Clifford and Clifford\(+CS\) cases, first pass to the
scalar-refined source presentation supplied by \cref{lem:scalar-refinement}.  The arity bounds in
the Clifford+T and Clifford+CS items are the bounds inherited from those source results.

The appendices record the data needed to invoke \cref{th:completeness}: the earlier rule sets in
\cref{appendix:oldgraphicallanguage1,appendix:oldgraphicallanguage2,appendix:oldgraphicallanguage3,appendix:oldgraphicallanguage4,appendix:oldgraphicallanguage5,appendix:oldgraphicallanguage6},
the encoding/decoding maps in \cref{tab:ED-clifford-family}, the generator checks for
\cref{lem:decenc} in Appendix~\ref{app:proofs-lem1}, and the source-axiom checks for
\cref{lem:decrelations} in Appendix~\ref{appendix:decodage}.  With these ingredients fixed, the
remaining derivability obligations are exactly the finite checks of the transfer lemmas; no further
normal-form argument is used here.

For CNOT-dihedral, use the completeness theorem of \cite{AmyCNOTDihedral} directly: no
encoding/decoding is needed, and Appendix~\ref{app:CNOT-dihedral} derives, in \(\QCdihedral\),
each source axiom removed from our simplified presentation.
\end{proof}
\section{Minimality}\label{sec:minimality}

Minimality asks which axioms of the six figure-defined presentations are genuinely necessary for
completeness, and the full and bounded statements in \cref{thm:minimality-all} are certified by
separating interpretations.  Throughout, fix a fragment \(\bullet\) with free PROP
\(\cat{P}_\bullet\) and finite rule set \(\QC_\bullet\).  The arity bounds in the theorem are dictated
by separator coverage: a bounded minimality claim is made only when the table contains a
non-\(\noseparator\) separator for every axiom in that truncation.  When this independence is read as
necessity for completeness, the Clifford\(+T\) and Clifford\(+CS\) cases are understood relative to the
bounded completeness results imported in \cref{sec:completeness}.

\subsection{Independence and minimality}

To test whether an axiom \(\rho\) is needed, remove it from the rule set and ask whether its two sides
are still equal in the quotient by the remaining rules.  We write
\(\QC_\bullet\setminus\{\rho\}\) for those remaining axioms, and
\(\QCirc_\bullet^{-\rho}\coloneqq \cat{P}_\bullet/(\QC_\bullet\setminus\{\rho\})\) for the reduced
presentation.

\begin{definition}\label{def:minimality}
Let \(\rho\) be an axiom \(C_1=C_2\) in \(\QC_\bullet\) (with \(C_1,C_2:n\to n\)).
We say that \(\rho\) is independent of the other axioms of the same fragment if
\(
  \QCirc_\bullet^{-\rho}\not\vdash C_1=C_2.
\)
We say that \(\QC_\bullet\) is minimal if every \(\rho\in\QC_\bullet\) is independent.
\end{definition}

Because the presentations are sound for their intended semantics, an axiom derivable from the
others cannot be necessary for completeness: removing it leaves the induced congruence unchanged.
Conversely, in any arity range where the reduced presentation were still complete, soundness would
force the removed equation to remain derivable.  Thus independence is the syntactic test used here,
and the separators in \cref{fig:minimality-all} instantiate it by occurrence detectors, counting
models, projective substitutions, and determinant phases.

Since derivations preserve arity by \cref{rem:arity-preservation}, an \(n\)-wire axiom can be tested
inside the subsystem generated by axioms of arity at most \(n\).  This gives the bounded notion used
when the separator table contains separators only in a finite arity range.

\begin{definition}\label{def:bounded-minimality}
Let \(k\in\mathbb{N}\).  Define
\(
  \QC_{\bullet,\le k}
   \coloneqq 
  \{ \rho\in\QC_\bullet \mid \rho \text{ has type } n\to n \text{ with } n\le k \}.
\)
We say that \(\QC_\bullet\) is minimal up to \(k\) wires if every \(\rho\in\QC_{\bullet,\le k}\) is independent
relative to \(\QC_{\bullet,\le k}\) (i.e.\ independent in the sense of \cref{def:minimality} after replacing
\(\QC_\bullet\) by \(\QC_{\bullet,\le k}\)).
\end{definition}

\subsection{Separation by alternative interpretations}

Our independence proofs are semantic: to show that an axiom \(\rho\) is not derivable from the others,
we construct a PROP-valued model that satisfies all remaining axioms but not \(\rho\).  The underlying
countermodel principle for equational logic is that derivable equations hold in every model of the
theory, so one model of the reduced theory that violates \(\rho\) rules out a derivation.  This
principle goes back at least to Birkhoff's correspondence between equational theories and varieties
of algebras \cite{Birkhoff1935} and is explicit in modern accounts of equational completeness
\cite[Thm.~14.19]{BurrisSankappanavar1981}.  In the quantum-circuit literature, the same proof
pattern is used in \cite{ClementDelormePerdrix2023}, which proves a minimal and complete
equational theory for general quantum circuits.

\begin{lemma}\label{lem:separation}
Let \(\rho:C_1=C_2\) be an axiom in \(\QC_\bullet\).  Assume there exist a PROP
\(\cat{P}_{\mathrm{alt}}\) and a PROP morphism \(F:\cat{P}_\bullet\to\cat{P}_{\mathrm{alt}}\) such that
\(F(L)=F(R)\) for every \((L=R)\in\QC_\bullet\setminus\{\rho\}\), but \(F(C_1)\neq F(C_2)\).  Then
\(\rho\) is independent.
\end{lemma}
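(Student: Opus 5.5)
We argue by contraposition, using the fact that derivable equations hold in every model of the remaining axioms. Suppose, for contradiction, that \(\QCirc_\bullet^{-\rho}\vdash C_1=C_2\). We show that the kernel relation of \(F\),
\[
  \equiv_F \;\coloneqq\; \{(A,B)\mid A,B:n\to m \text{ in } \cat{P}_\bullet,\ F(A)=F(B)\},
\]
is a congruence on \(\cat{P}_\bullet\) of the kind used in \cref{def:presented-prop}, and that it contains \(\QC_\bullet\setminus\{\rho\}\).

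\textbf{Step 1: \(\equiv_F\) is a congruence.} It is an equivalence relation on each hom-set, since it is defined by equality of images. It is closed under sequential composition: if \(F(A)=F(A')\) and \(F(B)=F(B')\), then \(F(B\circ A)=F(B)\circ F(A)=F(B')\circ F(A')=F(B'\circ A')\). The same computation with \(\otimes\) gives closure under parallel composition. Closure under composing or tensoring with symmetries follows because \(F\) preserves \(\sigma_{n,m}\) and identities (\cref{def:prop-morphism}). Hence \(\equiv_F\) is closed under \(\circ\), \(\otimes\) and symmetry.

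\textbf{Step 2: \(\equiv_F\) contains the remaining axioms.} By hypothesis \(F(L)=F(R)\) for every \((L=R)\in\QC_\bullet\setminus\{\rho\}\), so every such pair lies in \(\equiv_F\).

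\textbf{Step 3: conclusion.} The congruence generated by \(\QC_\bullet\setminus\{\rho\}\) is the smallest such congruence, so it is contained in \(\equiv_F\). Equivalently, \(F\) factors through the quotient \(\QCirc_\bullet^{-\rho}\), in the same way as in \cref{def:interpretation-functor}. A derivation of \(C_1=C_2\) in \(\QCirc_\bullet^{-\rho}\) therefore gives \(F(C_1)=F(C_2)\), which contradicts the hypothesis \(F(C_1)\neq F(C_2)\). Hence \(\rho\) is independent in the sense of \cref{def:minimality}.

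The only step needing care is Step 1, in particular checking that the closure conditions demanded in \cref{def:presented-prop} match exactly those that \(F\) respects as a strict symmetric monoidal, identity-on-objects functor. Since \(F\) is a PROP morphism, this check is immediate and no real obstacle remains.
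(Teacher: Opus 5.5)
Your proposal is correct and follows essentially the same route as the paper: the paper likewise notes that \(F\) equalises the remaining axioms, so it factors through \(\QCirc_\bullet^{-\rho}\), and a derivation of \(C_1=C_2\) would then force \(F(C_1)=F(C_2)\). Your Steps 1--2 only make explicit the kernel-congruence reason for this factorisation, which the paper takes for granted.
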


\begin{proof}
Since \(F\) equalises every axiom in \(\QC_\bullet\setminus\{\rho\}\), it factors through the quotient
\(\QCirc_\bullet^{-\rho}\), yielding a PROP morphism
\(\overline{F}:\QCirc_\bullet^{-\rho}\to\cat{P}_{\mathrm{alt}}\).
If \(\QCirc_\bullet^{-\rho}\vdash C_1=C_2\) then applying \(\overline{F}\) gives \(F(C_1)=F(C_2)\), a contradiction.
\end{proof}

Because morphisms of the free PROP are already quotiented by coherence, once \(F\) is a PROP morphism
it automatically respects all structural equalities, so the separator check is purely
fragment-specific.

\subsection{Families of separating interpretations}

\Cref{fig:minimality-all} is the finite certificate for the independence proofs: each
non-\(\noseparator\) row supplies one separator to be checked against the other axioms in the
relevant arity range, while the \(\noseparator\) rows mark exactly the gaps that prevent full
minimality claims for qutrit Clifford, Clifford\(+T\), and Clifford\(+CS\).  The separators fall into
four families--counting models, occurrence detectors, projective substitutions, and
determinant-phase interpretations--so the proof remains uniform even though the fragments differ.
In displayed rule sets and tables, abbreviated gates stand for fixed morphisms of the free PROP, and
every separating interpretation is applied to the expanded morphisms.

\emph{Counting and occurrence detectors.}
Counting models record selected generator multiplicities or swap parity, while occurrence detectors
record whether a chosen generator appears; both kinds of separator land in endomorphism-only PROPs,
so there are no morphisms between different arities.

\begin{definition}\label{def:endM}
Let \((M,\oplus,0)\) be a commutative monoid and let \(\epsilon\in M\) satisfy
\(\epsilon\oplus\epsilon=0\).  Define \(\cat{End}_{\epsilon}(M)\) to be the PROP with objects
\(\mathbb{N}\) and hom-sets
\(
  \cat{End}_{\epsilon}(M)(n,n)\coloneqq M \; (n\in\mathbb{N}),
\)
\(
  \cat{End}_{\epsilon}(M)(n,m)\coloneqq \varnothing \; (n\neq m).
\)
Composition and tensor are given by \(\oplus\), identities by \(0\), and the basic symmetry
\(\sigma_{1,1}\) is interpreted as \(\epsilon\).
\end{definition}

\begin{remark}
The condition \(\epsilon\oplus\epsilon=0\) implies \(\sigma_{m,n}\circ\sigma_{n,m}=\id_{n+m}\).
All other PROP coherence axioms hold because every structural map is interpreted as an element of
the commutative monoid \(M\), and both \(\circ\) and \(\otimes\) are interpreted as \(\oplus\).
\end{remark}

\begin{definition}\label{def:count-has}
For \(m\ge 2\) define
\(
  \cat{Count}_{m}\coloneqq \cat{End}_{0}(\mathbb{Z}_m,+,0)
\)
and
\(
  \cat{Has}\coloneqq \cat{End}_{0}(\mathbb{B},\vee,0),
\)
with \(\mathbb{B}=\{0,1\}\).
When we want to record the parity of the structural swap, we also use
\(
  \cat{Count}^{\mathrm{swap}}_{2}\coloneqq \cat{End}_{1}(\mathbb{Z}_2,+,0),
\)
so that the basic symmetry \(\sigma_{1,1}\) is interpreted as \(1\in\mathbb{Z}_2\).
\end{definition}

\begin{definition}\label{def:counting-notation}
Let \(m\ge 2\).  Given a finite list \((g_1,\ldots,g_r)\) of generators in \(\Sigma_\bullet\), with
repetitions counted with multiplicity, write
\(
  \#\{g_1,\ldots,g_r\}_{[m]}:\cat{P}_\bullet\to\cat{Count}_m
\)
for the unique PROP morphism sending each generator \(g\) to the number of its occurrences in the
list modulo \(m\).
\end{definition}

When \(m=2\), we use the symbol \(\gSWAP\) inside the braces only as a mnemonic that swaps are counted:
\(
  \#\{g_1,\ldots,g_r,\gSWAP\}_{[2]}:\cat{P}_\bullet\to\cat{Count}^{\mathrm{swap}}_2.
\)
In particular, \(\#\{\gSWAP\}_{[2]}\) denotes the morphism that sends every non-structural generator
to \(0\) and counts each structural swap as \(1\).

\begin{definition}\label{def:has-functors}
For a generator \(g\in\Sigma_\bullet\), write
\(
  ?g:\cat{P}_\bullet\to\cat{Has}
\)
for the unique PROP morphism with \(?g(g)=1\) and \(?g(h)=0\) for every generator \(h\neq g\).
Equivalently, for any circuit \(C\), one has \(?g(C)=1\) if and only if \(g\) occurs at least once in
\(C\).
\end{definition}

\begin{example}\label{ex:clifford-t-h2-separator}
In the one-wire Clifford\(+T\) truncation, the detector \(?\gH\) separates the axiom
\eqref{t-H2} from the remaining rules.  The equations \eqref{t-w8} and \eqref{t-Tpow8} contain no
\(\gH\) on either side, while \eqref{t-E} contains an occurrence of \(\gH\) on both sides.  The same is
true of \eqref{t-TX} after expanding the displayed shortcut \(\gX\).  Thus \(?\gH\) sends each
remaining equation in \(\QC_{\mathit{CliffT},\le 1}\) to either \(0=0\) or \(1=1\).  On the removed axiom,
however, \(?\gH(\gHsquare)=1\) and \(?\gH(\id_1)=0\), so \eqref{t-H2} is independent by
\cref{lem:separation}.
\end{example}

\emph{Projective substitutions.}
These separators alter one generator while quotienting only by global phase, which separates rows
whose two sides have the same generator multiplicities but differ after a projective change of one
generator.

\begin{definition}\label{def:projective-quotient}
For \(d\in\{2,3\}\), let \(\cat{Qudit}_d^{\sim}\) be the projective quotient of
\(\cat{Qudit}_d\) from \cref{def:qubit-qutrit-props}: its endomorphism hom-set
is \(\cat{Qudit}_d(n,n)/\mathrm{U}(1)\), where \(\mathrm{U}(1)\) acts by global
phase, and its non-endomorphism hom-sets remain empty.  Write
\(\Pi_d:\cat{Qudit}_d\to\cat{Qudit}_d^{\sim}\) for the quotient PROP morphism.
\end{definition}

\begin{definition}\label{def:projective-interpretation}
For a fragment \(\bullet\), set \(d(\bullet)=3\) for the qutrit Clifford
fragment and \(d(\bullet)=2\) for the qubit fragments.  Its usual projective
interpretation on the free presentation is
\(\interp{\cdot}_{\bullet,\sim}\coloneqq
\Pi_{d(\bullet)}\circ\interp{\cdot}^{\mathrm{free}}_\bullet\), regarded as a
PROP morphism \(\cat{P}_\bullet\to\cat{Qudit}_{d(\bullet)}^{\sim}\), where
\(\interp{\cdot}^{\mathrm{free}}_\bullet:\cat{P}_\bullet\to
\cat{Qudit}_{d(\bullet)}\) is the standard unitary interpretation before
quotienting by the fragment equations.
\end{definition}

\begin{definition}\label{def:projective-subst}
Given a fragment \(\bullet\), a generator \(g:n\to n\), and a circuit \(C:n\to n\) of that fragment, write
\(\interp{\cdot}_{\bullet,\sim}^{g:=C}\) for the unique PROP morphism
\(\cat{P}_\bullet\to\cat{Qudit}_{d(\bullet)}^{\sim}\) sending \(g\) to the already-evaluated
class \(\interp{C}_{\bullet,\sim}\) and every generator \(h\neq g\) to
\(\interp{h}_{\bullet,\sim}\).  Displayed abbreviations in \(C\) are expanded before this
interpretation.
\end{definition}

Thus a table entry \(\psubst{g}{C}\) means that every generator is interpreted projectively as
usual except \(g\), whose value is replaced by the projective value of \(C\).

\emph{Determinant phases.}
These separators record a scaled determinant phase for the qubit rows, chosen so as to be strictly
monoidal on a bounded number of wires.

\begin{definition}\label{def:argdetk}
Fix a qubit fragment, let \(k\ge 2\), and write \(\arg\det(U)\in\mathbb{R}/2\pi\mathbb{Z}\) for the
phase of \(\det(U)\).  Set
\(
  \epsilon_k\coloneqq 2^{k-2}\pi \bmod 2\pi,
\)
so \(\epsilon_2=\pi\) and \(\epsilon_k=0\) for \(k>2\).

Define a PROP morphism
\(
  \arg\det_k:\cat{P}_\bullet\to \cat{End}_{\epsilon_k}(\mathbb{R}/2\pi\mathbb{Z},+,0)
\)
on generators \(g:n\to n\) by
\(
  (\arg\det_k)(g)\coloneqq 2^{k-n}\arg\det(\interp{g}) \bmod 2\pi,
\)
and extend it to all circuits using preservation of \(\circ\), \(\otimes\), and symmetries.
\end{definition}

\begin{remark}
The scaling factor \(2^{k-n}\) is chosen so that \(\arg\det_k\) is compatible with tensoring on
arities \(\le k\), using \(\det(U\otimes V)=\det(U)^{2^m}\det(V)^{2^n}\) for \(U\) on \(n\) qubits and \(V\)
on \(m\) qubits.  The choice of \(\epsilon_k\) makes the image of the structural swap
\(\sigma_{1,1}\) coincide with its determinant phase after scaling: it contributes \(\pi\) when \(k=2\)
and becomes \(0\) for \(k>2\).
For equations on more than \(k\) wires, \(\arg\det_k\) is still used only as the generator-defined
PROP morphism above; the row-wise separator check is a check in
\(\cat{End}_{\epsilon_k}(\mathbb{R}/2\pi\mathbb{Z},+,0)\), not an assertion that the displayed
value is the scaled determinant of the whole higher-arity unitary.
\end{remark}

For each non-\(\noseparator\) row of \cref{fig:minimality-all}, the relation named on the left is
removed and the separator on the right equalises the other axioms in the relevant arity range while
distinguishing the removed relation.  The table records the separator rather than every value in the
finite check.  Detector and counting rows use the relevant commutative monoid,
projective substitution rows use the projective quotient from \cref{def:projective-interpretation},
and determinant rows use the scaled determinant phase from \cref{def:argdetk}.  When a higher-arity
separator is used in \cref{thm:minimality-all}, its check is against all axioms in the relevant arity
truncation, including same-arity rows marked \(\noseparator\) in \cref{fig:minimality-all}.

\begin{figure}[ht]
  \centering
  \begin{subfigure}[t]{0.31\textwidth}
    \centering
    \caption{Clifford}
    \begin{tabular}{|c|c|}
      \hline
      \textbf{Rel.} & \textbf{Separator} \\
      \hline
      \eqref{w8}   & \(?\gw\)   \\
      \eqref{H2}   & \(?\gH\)   \\
      \eqref{S4}   & \(?\gS\)   \\
      \eqref{E}    & \(\#\{\gH\}_{[2]}\) \\
      \eqref{Cs}   & \(?\gCNOT\)\\
      \eqref{B}    & \(\#\{\gSWAP\}_{[2]}\) \\
      \eqref{CZ}   & \(\#\{\gCNOT,\!\gSWAP\}_{[2]}\) \\
      \eqref{Inew} & \(\arg\det_{2}\) \\
      \hline
    \end{tabular}
  \end{subfigure}
  \begin{subfigure}[t]{0.31\textwidth}
    \centering
    \caption{Real Clifford}
    \begin{tabular}{|c|c|}
      \hline
      \textbf{Rel.}      & \textbf{Separator} \\
      \hline
      \eqref{realminus2}     & \(?\gminus\) \\
      \eqref{realH2}         & \(?\gH\) \\
      \eqref{realZ2}         & \(?\gZ\) \\
      \eqref{realF}          & \(\#\{\gminus\}_{[2]}\) \\
      \eqref{realCX2}        & \(?\gCNOT\) \\
      \eqref{realB}          & \(\#\{\gSWAP\}_{[2]}\) \\
      \eqref{realwC}         & \(\#\{\gZ\}_{[2]}\) \\
      \eqref{realCF}         & \(\psubst{\gZ}{\gI}\) \\
      \eqref{realXC}         & \(\psubst{\gH}{\gI}\) \\
      \eqref{realInew}       & \(\arg\det_{2}\) \\
      \hline
    \end{tabular}
  \end{subfigure}
  \begin{subfigure}[t]{0.31\textwidth}
    \centering
    \caption{Qutrit Clifford}
    \begin{tabular}{|c|c|}
      \hline
      \textbf{Rel.}            & \textbf{Separator} \\
      \hline
      \eqref{qt-omegapow12}        & \(?\gww\) \\
      \eqref{qt-hpow4}             & \(?\gH\) \\
      \eqref{qt-spow3}             & \(?\gS\) \\
      \eqref{qt-shpow3}            & \(\#\{\gH\}_{[2]}\) \\
      \eqref{qt-ssprime}           & \(\psubst{\gS}{\gSX}\)\footnote{See \cref{qt-prop:SSpnecessity}.} \\
      \eqref{qt-cnotremove}        & \(?\gCNOT\) \\
      \eqref{qt-swapdecomp}        & \(\#\{\gSWAP\}_{[2]}\) \\
      \eqref{qt-czdecomp}          & \(\#\{\gS\}_{[3]}\) \\
      \eqref{qt-kcnot}             & \(\#\{\gCNOT\}_{[2]}\) \\
      \eqref{qt-I}             & \noseparator \\
      \hline
    \end{tabular}
  \end{subfigure}
  \begin{subfigure}[t]{0.35\textwidth}
    \centering
    \caption{Clifford+T}
    \begin{tabular}{|c|c|}
      \hline
      \textbf{Rel.}     & \textbf{Separator} \\
      \hline
      \eqref{t-w8}         & \(?\gw\) \\
      \eqref{t-H2}         & \(?\gH\) \\
      \eqref{t-Tpow8}      & \(?\gT\) \\
      \eqref{t-E}          & \(\#\{\gH\}_{[2]}\) \\
      \eqref{t-TX}         & \(\#\{\gH,\!\gT,\!\gw\}_{[2]}\) \\
      \eqref{t-Ct}         & \(?\gCNOT\) \\
      \eqref{t-B}          & \(\#\{\gSWAP\}_{[2]}\) \\
      \eqref{t-CZ}         & \(\#\{\gCNOT,\!\gSWAP\}_{[2]}\) \\
      \eqref{t-CsCh}         & \noseparator \\
      \eqref{t-HTHT}         & \noseparator \\
      \eqref{t-C20}         & \noseparator \\
      \hline
    \end{tabular}
  \end{subfigure}
  \begin{subfigure}[t]{0.31\textwidth}
    \centering
    \caption{Clifford+CS}
    \begin{tabular}{|c|c|}
      \hline
      \textbf{Rel.}     & \textbf{Separator} \\
      \hline
      \eqref{cs-w8}         & \(?\gw\) \\
      \eqref{cs-H2}         & \(?\gH\) \\
      \eqref{cs-S4}         & \(?\gS\) \\
      \eqref{cs-E}          & \(\#\{\gH\}_{[2]}\) \\
      \eqref{cs-C}          & \(?\gCS\) \\
      \eqref{cs-B}          & \(\#\{\gSWAP\}_{[2]}\) \\
      \eqref{cs-XCS}        & \(\#\{\gS,\!\gH\}_{[2]}\) \\
      \eqref{cs-CSrev}      & \(\psubst{\gCS}{\gCSz}\) \\
      \eqref{cs-SHCHC}      & \(\psubst{\gCS}{\gCSzz}\) \\
      \eqref{cs-I}          & \(\arg\det_{2}\) \\
      \eqref{cs-U}          & \noseparator \\
      \hline
    \end{tabular}
  \end{subfigure}
  \begin{subfigure}[t]{0.32\textwidth}
    \centering
    \caption{CNOT-dihedral}
    \begin{tabular}{|c|c|}
      \hline
      \textbf{Rel.} & \textbf{Separator} \\
      \hline
      \eqref{cnot-R10} & \(?\gw\) \\
      \eqref{cnot-R7}  & \(?\gT\) \\
      \eqref{cnot-R1}  & \(?\gX\) \\
      \eqref{cnot-R11} & \(\#\{\gw\}_{[2]}\) \\
      \eqref{new-R3}   & \(\#\{\gX\}_{[2]}\) \\
      \eqref{cnot-R12} & \(?\gCNOT\) \\
      \eqref{new-R5}   & \(\#\{\gSWAP\}_{[2]}\) \\
      \eqref{new-R8}   & \(\#\{\gT,\!\gw,\!\gw\}_{[8]}\) \\
      \eqref{cnot-R6}  & \(\#\{\gCNOT,\!\gSWAP\}_{[2]}\) \\
      \eqref{cnot-R9}  & \(\#\{\gT,\!\gw,\!\gw\}_{[4]}\) \\
      \eqref{cnot-R13} & \(\arg\det_{3}\) \\
      \hline
    \end{tabular}
  \end{subfigure}
  \caption{Separators witnessing independence, indexed by the equation labels from \cref{sec:circuits-relations}; entries marked \(\noseparator\) are the cases for which no separator is currently provided.}\label{fig:minimality-all}
\end{figure}

\subsection{Minimality statements}

The \(\noseparator\) rows lie outside the bounded ranges claimed in \cref{thm:minimality-all}.  When
the table contains a separator beyond such a range, the corresponding higher-arity axiom is stated
separately.

\begin{theorem}\label{thm:minimality-all}
The indicated presented circuit fragments have minimal rule sets as listed:
\begin{enumerate}
\item \(\QCclifford\) is minimal for \(\cat{P}_{\mathit{Cliff}}\)-circuits (all arities).
\item \(\QCrealclifford\) is minimal for \(\cat{P}_{\mathit{RCliff}}\)-circuits (all arities).
\item \(\QCqutritclifford\) is minimal for \(\cat{P}_{\mathit{Cliff3}}\)-circuits up to \(2\) qutrit wires.\footnote{Full minimality would follow from a separator for the \(3\)-qutrit axiom \cref{qt-I}.}
\item \(\QCcliffordplust\) is minimal for \(\cat{P}_{\mathit{CliffT}}\)-circuits up to \(1\) qubit wire.
\item \(\QCcliffordpluscs\) is minimal for \(\cat{P}_{\mathit{CliffCS}}\)-circuits up to \(2\) qubit wires.
\item \(\QCdihedral\) is minimal for \(\cat{P}_{\mathit{CNOTdihe}}\)-circuits (all arities).
\end{enumerate}

For the bounded fragments, the constructed higher-arity separators give:
\begin{itemize}
\item In \(\QCcliffordplust\), the \(2\)-qubit axioms \eqref{t-Ct}, \eqref{t-B} and \eqref{t-CZ}
      are independent.
\item In \(\QCcliffordpluscs\), the \(3\)-qubit axiom \eqref{cs-I} is independent.
\end{itemize}
\end{theorem}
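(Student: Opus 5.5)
The plan is to reduce every item of \cref{thm:minimality-all} to \cref{lem:separation}, using the separator named in the matching row of \cref{fig:minimality-all}. For each fragment \(\bullet\) and each axiom \(\rho\) in the claimed arity range, three things must be checked. First, the separator \(F\) is a genuine PROP morphism out of \(\cat{P}_\bullet\). Second, \(F\) equalises every other axiom of \(\QC_\bullet\), or of \(\QC_{\bullet,\le k}\) in the bounded cases. Third, \(F\) distinguishes the two sides of \(\rho\).

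For the bounded statements, I would first use \cref{rem:arity-preservation}: derivations preserve arity, so only axioms of arity at most \(k\) can occur in a derivation of a \(k\)-wire equation. This justifies checking a separator only against \(\QC_{\bullet,\le k}\). The extra bullet claims need more, namely independence of \eqref{t-Ct}, \eqref{t-B}, \eqref{t-CZ} in \(\QCcliffordplust\) and of \eqref{cs-I} in \(\QCcliffordpluscs\). For these I would check the separator against all axioms of the truncation, including the rows marked \noseparator{}, exactly as announced before the table.

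The well-definedness step is uniform across the four families of separators.
\begin{itemize}
\item \emph{Counting and occurrence detectors.} \(\cat{Count}_m\), \(\cat{Count}_2^{\mathrm{swap}}\) and \(\cat{Has}\) are PROPs by the remark after \cref{def:endM}. Any assignment on generators therefore extends uniquely by freeness of \(\cat{P}_\bullet\).
\item \emph{Projective substitutions.} These are defined generator-wise into \(\cat{Qudit}_d^{\sim}\), which is a PROP quotient, so they are PROP morphisms.
\item \emph{Determinant phases.} For \(\arg\det_k\) I rely on the remark after \cref{def:argdetk}. The target \(\cat{End}_{\epsilon_k}\) is a PROP because \(2\epsilon_k=0\), and freeness gives the extension.
\end{itemize}

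The checks themselves are then finite, row-by-row computations, organised per family.

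\emph{Detectors \(?g\).} It suffices to confirm that every other axiom contains \(g\) on both sides or on neither, after expanding the shortcuts of \cref{fig:quantum-circuit-shortcuts}. The removed axiom must have \(g\) on exactly one side, for instance \eqref{H2}, \eqref{S4} or \eqref{Cs}, whose right-hand side drops \(\gCNOT\). This mirrors \cref{ex:clifford-t-h2-separator}.

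\emph{Counting morphisms.} I count occurrences modulo \(m\) on both sides of every axiom. The schema \eqref{cs-U} is indexed by \(k\), so I treat each \(k\in\{0,1,2,3\}\) as its own axiom. Parameters such as \(\gS^4\) or \(\gw^8\) are harmless modulo \(2\), but they need care in the modulo-\(8\) and modulo-\(4\) counts of the CNOT-dihedral rows \eqref{new-R8} and \eqref{cnot-R9}. There the weight given to \(\gw\) (listed twice) must balance the scalars against the \(\gT\) counts. I would tabulate these counts explicitly.

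\emph{Projective substitutions.} Soundness of the usual projective semantics already validates every axiom not mentioning \(g\). For the axioms containing \(g\), I would compute small matrices up to phase with \(g\) replaced by its substitute, such as \(\gI\) for \(\gZ\) or \(\gH\), \(\gSX\) for the qutrit \(\gS\), and \(\gCSz\), \(\gCSzz\) for \(\gCS\). The qutrit \eqref{qt-ssprime} row is deferred to its own proposition, as the footnote indicates.

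\emph{Determinant phases.} For every axiom on at most \(k\) wires, \(\arg\det_k\) agrees on the two sides by the genuine determinant identity. I would rescale one-wire and two-wire axioms via \(\det(U\otimes V)\) and account for the swap contribution \(\epsilon_k\).

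The main obstacle I expect is the \(\arg\det\) rows in all-arity statements. Two cases matter.
\begin{itemize}
\item \(\arg\det_2\) for \eqref{Inew}, \eqref{realInew} and \eqref{cs-I}. These three-wire axioms must be separated by a morphism that is only meaningful up to two wires. I would compute formally: \(\arg\det_2(\gCNOT)=\pi\) because a CNOT is a transposition. The left side of \eqref{Inew} has three CNOTs and the right side two, giving \(3\pi\neq 2\pi\). For the lower-arity axioms, I check that \(\arg\det_2\) coincides with the true scaled determinant, which is invariant under semantic equality.
\item \(\arg\det_3\) for \eqref{cnot-R13}. Here the four-wire axiom must be separated while every at-most-three-wire axiom is respected, again by exact phase counting.
\end{itemize}
The subtlety is that the axioms of arity above \(k\) other than the removed one must also be equalised. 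For the Clifford and real Clifford presentations, \eqref{Inew} is the only three-wire axiom. For CNOT-dihedral, \eqref{cnot-R13} is the only four-wire axiom, so the check closes.
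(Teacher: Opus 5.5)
\textbf{Gap: the separators are checked against too many axioms.} Your overall route is the paper's: one separator per row of \cref{fig:minimality-all}, then \cref{lem:separation}. The problem is the set of axioms you require each separator to equalise. With your choice, several of the table's separators fail.

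You ask $F_\rho$ to equalise every other axiom of $\QC_\bullet$ in the full-arity items, and of $\QC_{\bullet,\le k}$ in the bounded items. This breaks already at the first row.
\begin{itemize}
\item The right-hand side of \eqref{E} contains $\gw$ and its left-hand side does not, so $?\gw$ does not equalise \eqref{E}. The same happens with \eqref{t-E} and \eqref{cs-E} in the bounded cases, and with $?\gminus$ against \eqref{realF}.
\item Both sides of \eqref{Inew} consist only of $\gCNOT$s and swaps, each sent to $\pi$ by $\arg\det_2$. So $\arg\det_2$ separating \eqref{Inew} means $\#\{\gCNOT,\gSWAP\}_{[2]}$, the separator for \eqref{CZ}, separates it too.
\item $\#\{\gT,\gw,\gw\}_{[8]}$ cannot equalise \eqref{cnot-R9}, because its reduction mod $4$ is the separator of that very axiom.
\item In the bounded qutrit case, $\psubst{\gS}{\gSX}$ violates the two-qutrit rule \eqref{qt-cnotremove}. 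There $\gS$ sits on a $\gCNOT$ control, and the $X$-factor of $\gSX$ does not commute through. This is why \cref{qt-prop:SSpnecessity} only checks axioms acting on at most one qutrit.
\end{itemize}

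\textbf{The fix.} Apply \cref{rem:arity-preservation} axiom by axiom, in the full-arity items as well as the bounded ones. Suppose $\rho$ has arity $n$. Every derivation of an $n$-wire equation uses only axioms of arity at most $n$. Hence $\rho$ is independent in $\QC_\bullet$, or in $\QC_{\bullet,\le k}$ for $k\ge n$, if and only if it is independent in $\QC_{\bullet,\le n}$. A separator for $\rho$ therefore only has to equalise the other axioms of arity at most $n$, including same-arity \(\noseparator\) rows. This is what the paper's proof means by the ``relevant truncated system''.

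With this reduction the failures disappear:
\begin{itemize}
\item $?\gw$ for the $0$-wire axiom \eqref{w8} has nothing left to equalise.
\item Each counting row is checked only up to the arity of its own axiom.
\item Your remark that \eqref{Inew}, \eqref{realInew} and \eqref{cnot-R13} are alone in their arity is exactly what lets $\arg\det_k$ be checked only on axioms where it is the true scaled determinant.
\end{itemize}
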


\begin{proof}
For each axiom \(\rho\) covered by the statement, the corresponding non-\(\noseparator\) row of
\cref{fig:minimality-all} gives a separator \(F_\rho\) that equalises the
remaining axioms of the relevant truncated system while distinguishing the two sides of \(\rho\).
For the additional higher-arity statements, the relevant truncated system is the one at the arity of
the displayed axiom.  Apply \cref{lem:separation} and
\cref{def:bounded-minimality}.
\end{proof}
\section{Conclusion}
By placing the six fragments in a common PROP syntax with structural symmetries, the paper isolates
fewer non-structural axioms while preserving the strict unitary semantics inherited from the source
presentations.  The completeness transfer and separator table show that this simplification often
preserves irredundancy: qubit Clifford, real Clifford, and CNOT-dihedral are minimal in all arities,
while the other fragments are minimal in the bounded ranges certified in \cref{sec:minimality}.  The
\(\noseparator\) rows of \cref{fig:minimality-all} identify the remaining separator-construction
problem, and the reduced presentations give mechanised rewriting a smaller algebraic core: related
work on \(2\)-qubit Clifford\(+T\) and \(3\)-qubit Clifford\(+CS\) already gives Agda checks for this style of
argument \cite{SelingerCliffordPlusT,SelingerCliffordPlusCS}, while recent Isabelle/HOL work on
verified rewriting infrastructure for diagrammatic equivalence
\cite{CaillerDelormePerdrixTourret2026} offers another benchmark for future formalisation.


\bibliography{references}

\appendix
\newpage
\begin{figure}[H]
    \centering
    \scalebox{0.9}{
    \fbox{\begin{minipage}{0.975\textwidth}
    \centering
    \textbf{Shortcuts for $\qubitclifford$-circuits.}\par\vspace{-0.6em}
    \begin{minipage}{0.24\textwidth}
        \centering
        \begin{equation}
            \tag{Z}\label{Z}
            \input{shortcut/Z.tikz} \defeq \input{shortcut/SS.tikz}
        \end{equation}
    \end{minipage}
    \begin{minipage}{0.34\textwidth}
        \centering
        \begin{equation}
            \tag{X}\label{X}
            \input{shortcut/X.tikz} \defeq \input{shortcut/HZH.tikz}
        \end{equation}
    \end{minipage}\vspace{-0.6em}
    \begin{minipage}{0.25\textwidth}
        \centering
        \begin{equation}
            \tag{S$^\dagger$}\label{Sdagger}
            \input{shortcut/Sdagger.tikz} \defeq \input{shortcut/SSS.tikz}
        \end{equation}
    \end{minipage}
    \begin{minipage}{0.22\textwidth}
        \centering
        \begin{equation}
            \tag{$\omega^\dagger$}\label{wdagger}
            \input{shortcut/wdagger.tikz} \defeq \input{shortcut/w7.tikz}
        \end{equation}
    \end{minipage}
    \end{minipage}}
    }
    \scalebox{0.9}{
    \fbox{\begin{minipage}{0.975\textwidth}
    \centering
    \textbf{Shortcuts for $\qubitrealclifford$-circuits.}\par\vspace{-0.6em}
    \begin{minipage}{0.37\textwidth}
        \centering
        \begin{equation}
            \tag{X}\label{real-X}
            \input{shortcut/X.tikz} \defeq \input{shortcut/HZH.tikz}
        \end{equation}
    \end{minipage}
    \end{minipage}}
    }
    \scalebox{0.9}{
    \fbox{\begin{minipage}{0.975\textwidth}
    \centering
    \textbf{Shortcuts for $\qutritclifford$-circuits.}\par\vspace{-0.6em}
    \begin{minipage}{0.25\textwidth}
        \centering
        \begin{equation}
            \tag{K}\label{qt-K}
            \input{shortcut/K.tikz} \defeq \input{shortcut/HH.tikz}
        \end{equation}
    \end{minipage}
    \begin{minipage}{0.28\textwidth}
        \centering
        \begin{equation}
            \tag{H$^3$}\label{qt-H3}
            \input{shortcut/Hdagger.tikz} \defeq \input{shortcut/HHH.tikz}
        \end{equation}
    \end{minipage}
    \begin{minipage}{0.28\textwidth}
        \centering
        \begin{equation}
            \tag{S$^\dagger$}\label{qt-Sdagger}
            \input{shortcut/Sdagger.tikz} \defeq \input{shortcut/SS.tikz}
        \end{equation}
    \end{minipage}\vspace{-0.6em}
    \begin{minipage}{0.33\textwidth}
        \centering
        \begin{equation}
            \tag{S'}\label{qt-Sprime}
            \input{shortcut/Sprime.tikz} \defeq \input{shortcut/KSK.tikz}
        \end{equation}
    \end{minipage}
    \begin{minipage}{0.28\textwidth}
        \centering
        \begin{equation}
            \tag{Z}\label{qt-Z}
            \input{shortcut/Z.tikz} \defeq \input{shortcut/SSSprime.tikz}
        \end{equation}
    \end{minipage}
    \begin{minipage}{0.33\textwidth}
        \centering
        \begin{equation}
            \tag{X}\label{qt-X}
            \input{shortcut/X.tikz} \defeq \input{shortcut/HZHHH.tikz}
        \end{equation}
    \end{minipage}
    \end{minipage}}
    }
    \scalebox{0.9}{
    \fbox{\begin{minipage}{0.975\textwidth}
    \centering
    \textbf{Shortcuts for $\qubitcliffordplust$-circuits.}\par\vspace{-0.6em}
    \begin{minipage}{0.25\textwidth}
        \centering
        \begin{equation}
            \tag{S}\label{t-S}
            \input{shortcut/S.tikz} \defeq \input{shortcut/TT.tikz}
        \end{equation}
    \end{minipage}
    \begin{minipage}{0.25\textwidth}
        \centering
        \begin{equation}
            \tag{S$^{\dagger}$}\label{t-Sdagger}
            \input{shortcut/Sdagger.tikz} \defeq \input{shortcut/SSS.tikz}
        \end{equation}
    \end{minipage}
    \begin{minipage}{0.25\textwidth}
        \centering
        \begin{equation}
            \tag{Z}\label{t-Z}
            \input{shortcut/Z.tikz} \defeq \input{shortcut/SS.tikz}
        \end{equation}
    \end{minipage}\vspace{-0.6em}
    \begin{minipage}{0.33\textwidth}
        \centering
        \begin{equation}
            \tag{X}\label{t-X}
            \input{shortcut/X.tikz} \defeq \input{shortcut/HZH.tikz}
        \end{equation}
    \end{minipage}
    \begin{minipage}{0.26\textwidth}
        \centering
        \begin{equation}
            \tag{T$^\dagger$}\label{t-Tdagger}
            \input{shortcut/Tdagger.tikz} \defeq \input{shortcut/TTTTTTT.tikz}
        \end{equation}
    \end{minipage}
    \begin{minipage}{0.35\textwidth}
        \centering
        \begin{equation}
            \tag{0CX}\label{t-CnotWhite}
            \input{shortcut/CnotWhite.tikz} \defeq \input{shortcut/CnotX.tikz}
        \end{equation}
    \end{minipage}\vspace{-0.8em}
    \begin{minipage}{0.53\textwidth}
        \centering
        \begin{equation}
            \tag{CH}\label{t-Ch}
            \input{shortcut/Ch.tikz} \defeq \input{shortcut/ChDecomp.tikz}
        \end{equation}
    \end{minipage}
    \begin{minipage}{0.35\textwidth}
        \centering
        \begin{equation}
            \tag{0CH}\label{t-ChWhite}
            \input{shortcut/ChWhite.tikz} \defeq \input{shortcut/ChH.tikz}
        \end{equation}
    \end{minipage}\vspace{-0.6em}
    \begin{minipage}{0.43\textwidth}
        \centering
        \begin{equation}
            \tag{CS}\label{t-Cs}
            \input{shortcut/CS.tikz} \defeq \input{shortcut/CSdecomp.tikz}
        \end{equation}
    \end{minipage}
    \begin{minipage}{0.39\textwidth}
        \centering
        \begin{equation}
            \tag{CS$^{\dagger}$}\label{t-Csdagger}
            \input{shortcut/CSdagger.tikz} \defeq \input{shortcut/CSCSCS.tikz}
        \end{equation}
    \end{minipage}
    \end{minipage}}
    }
    \scalebox{0.9}{
        \fbox{\begin{minipage}{0.975\textwidth} 
            \centering
            \textbf{Shortcuts for $\qubitcliffordpluscs$-circuits.}\par\vspace{-0.6em}
            \begin{minipage}{0.24\textwidth}
                \centering
                \begin{equation}
                    \tag{Z}\label{cs-Z}
                    \input{shortcut/Z.tikz} \defeq \input{shortcut/SS.tikz}
                \end{equation}
            \end{minipage}
            \begin{minipage}{0.37\textwidth}
                \centering
                \begin{equation}
                    \tag{X}\label{cs-X}
                    \input{shortcut/X.tikz} \defeq \input{shortcut/HZH.tikz}
                \end{equation}
            \end{minipage}
            \begin{minipage}{0.25\textwidth}
                \centering
                \begin{equation}
                    \tag{S$^\dagger$}\label{cs-Sdagger}
                    \input{shortcut/Sdagger.tikz} \defeq \input{shortcut/SSS.tikz}
                \end{equation}
            \end{minipage}\vspace{-0.6em}
            \begin{minipage}{0.24\textwidth}
                \centering
                \begin{equation}
                    \tag{$\omega^\dagger$}\label{cs-wdagger}
                    \input{shortcut/wdagger.tikz} \defeq \input{shortcut/w7.tikz}
                \end{equation}
            \end{minipage}
            \begin{minipage}{0.4\textwidth}
                \centering
                \begin{equation}
                    \tag{CX}\label{cs-CNOT}
                    \input{gates/CNOT.tikz} \defeq \input{shortcut/HCSCSH.tikz}
                \end{equation}
            \end{minipage}
            \end{minipage}}
        }
        \scalebox{0.9}{
        \fbox{\begin{minipage}{0.975\textwidth} 
            \centering
            \textbf{Shortcuts for $\qubitcnotdihedral$-circuits.}\par\vspace{-0.6em}
            \begin{minipage}{0.25\textwidth}
                \centering
                \begin{equation}
                    \tag{Z}\label{cnot-Z}
                    \input{shortcut/Z.tikz} \defeq \input{shortcut/SS.tikz}
                \end{equation}
            \end{minipage}
            \begin{minipage}{0.25\textwidth}
                \centering
                \begin{equation}
                    \tag{S}\label{cnot-S}
                    \input{shortcut/S.tikz} \defeq \input{shortcut/TT.tikz}
                \end{equation}
            \end{minipage}\vspace{-0.6em}

            \begin{minipage}{0.39\textwidth}
                \centering
                \begin{equation}
                    \tag{CT}\label{cnot-CU}
                    \input{shortcut/CU.tikz} \defeq \input{shortcut/CUr.tikz}
                \end{equation}
            \end{minipage}
            \begin{minipage}{0.49\textwidth}
                \centering
                \begin{equation}
                    \tag{CCT}\label{cnot-CV}
                    \input{shortcut/CV.tikz} \defeq \input{shortcut/CVr.tikz}
                \end{equation}
            \end{minipage}
            \end{minipage}}
        }
    \scalebox{0.9}{
    \fbox{\begin{minipage}{0.975\textwidth}
    \centering
    \textbf{Common shorthand notations}\par\vspace{-0.6em}
    \begin{minipage}{0.34\textwidth}
        \centering
        \begin{equation}
            \tag{$\omega^{k}$}\label{wk}
            \input{shortcut/alphak.tikz} \defeq \input{shortcut/alphatimesk.tikz}
        \end{equation}
    \end{minipage}
    \begin{minipage}{0.39\textwidth}
        \centering
        \begin{equation}
            \tag{$U^k$}\label{Uk}
            \input{shortcut/Uk.tikz} \defeq \input{shortcut/Utimesk.tikz}
        \end{equation}
    \end{minipage}\vspace{-0.6em}
    \begin{minipage}{0.45\textwidth}
        \centering
        \begin{equation}
            \label{bigCNOT}
            \input{shortcut/bigCNOT-left.tikz} \defeq \input{shortcut/bigCNOT-right.tikz}
        \end{equation}
    \end{minipage}
    \begin{minipage}{0.45\textwidth}
        \centering
        \begin{equation}
            \label{bigNOTC}
            \input{shortcut/bigNOTC-left.tikz} \defeq \input{shortcut/bigNOTC-right.tikz}
        \end{equation}
    \end{minipage}
    \end{minipage}}
    }
    \caption{Summary of Shortcuts used for all considered fragments.}
    \label{fig:quantum-circuit-shortcuts}
\end{figure}

\begin{figure}[ht]
    \centering
    \scalebox{0.9}{
    \fbox{\begin{minipage}{0.975\textwidth}
    \centering
    \textbf{Shortcuts for $\altqubitclifford$-circuits.}\par\vspace{-0.6em}
    \begin{minipage}{0.27\textwidth}
        \centering
        \begin{equation}
            \tag{$\omega^\dagger$}\label{wdagger-old}
            \input{shortcut/wdagger.tikz} \defeq \input{shortcut/w7.tikz}
        \end{equation}
    \end{minipage}
    \end{minipage}}
    }
    \scalebox{0.9}{
    \fbox{\begin{minipage}{0.975\textwidth}
    \centering
    \textbf{Shortcuts for $\altqutritclifford$-circuits.}\par\vspace{-0.6em}
    \begin{minipage}{0.39\textwidth}
        \centering
        \begin{equation}
            \tag{CNOT}\label{qt-CNOT}
            \input{shortcut/CNOT.tikz} \defeq \input{shortcut/HCZHHH.tikz}
        \end{equation}
    \end{minipage}
    \begin{minipage}{0.55\textwidth}
        \centering
        \begin{equation}
            \tag{SWAP}\label{qt-SWAP-shortcut}
            \input{shortcut/SWAP-QUTRIT.tikz} \defeq \input{shortcut/SWAP-QUTRIT-DECOMP.tikz}
        \end{equation}
    \end{minipage}
    \end{minipage}}
    }
    \scalebox{0.9}{
    \fbox{\begin{minipage}{0.975\textwidth}
    \centering
    \textbf{Shortcuts for $\altqubitcliffordplust$-circuits.}\par\vspace{-0.6em}
    \begin{minipage}{0.25\textwidth}
        \centering
        \begin{equation}
            \tag{S}\label{t-oldS}
            \input{shortcut/S.tikz} \defeq \input{shortcut/TT.tikz}
        \end{equation}
    \end{minipage}
    \begin{minipage}{0.25\textwidth}
        \centering
        \begin{equation}
            \tag{Z}\label{t-oldZ}
            \input{shortcut/Z.tikz} \defeq \input{shortcut/SS.tikz}
        \end{equation}
    \end{minipage}
    \begin{minipage}{0.35\textwidth}
        \centering
        \begin{equation}
            \tag{X}\label{t-oldX}
            \input{shortcut/X.tikz} \defeq \input{shortcut/HZH.tikz}
        \end{equation}
    \end{minipage}\vspace{-0.6em}
    \begin{minipage}{0.25\textwidth}
        \centering
        \begin{equation}
            \tag{T$^\dagger$}\label{t-oldTdagger}
            \input{shortcut/Tdagger.tikz} \defeq \input{shortcut/TTTTTTT.tikz}
        \end{equation}
    \end{minipage}
    \begin{minipage}{0.35\textwidth}
        \centering
        \begin{equation}
            \tag{0CX}\label{t-oldCnotWhite}
            \input{shortcut/CnotWhite.tikz} \defeq \input{shortcut/CnotX.tikz}
        \end{equation}
    \end{minipage}\vspace{-0.6em}
    \begin{minipage}{0.54\textwidth}
        \centering
        \begin{equation}
            \tag{CH}\label{t-oldCh}
            \input{shortcut/Ch.tikz} \defeq \input{shortcut/ChDecomp.tikz}
        \end{equation}
    \end{minipage}
    \begin{minipage}{0.36\textwidth}
        \centering
        \begin{equation}
            \tag{0CH}\label{t-oldChWhite}
            \input{shortcut/ChWhite.tikz} \defeq \input{shortcut/ChH.tikz}
        \end{equation}
    \end{minipage}
    \end{minipage}}
    }
    \scalebox{0.9}{
    \fbox{\begin{minipage}{0.975\textwidth} 
        \centering
        \textbf{Shortcuts for $\altqubitcliffordpluscs$-circuits.}\par\vspace{-0.6em}
        \begin{minipage}{0.25\textwidth}
            \centering
            \begin{equation}
                \tag{Z}\label{cs-oldZ}
                \input{shortcut/Z.tikz} \defeq \input{shortcut/SS.tikz}
            \end{equation}
        \end{minipage}
        \begin{minipage}{0.27\textwidth}
            \centering
            \begin{equation}
                \tag{$\omega^\dagger$}\label{cs-oldwdagger}
                \input{shortcut/wdagger.tikz} \defeq \input{shortcut/w7.tikz}
            \end{equation}
        \end{minipage}\vspace{-0.6em}
        \begin{minipage}{0.5\textwidth}
            \centering
            \begin{equation}
                \tag{oldCX}\label{cs-CNOTred}
                \input{gates/CNOTRED.tikz} \defeq \input{shortcut/KCSCSK.tikz}
            \end{equation}
        \end{minipage}
        \end{minipage}}
    }
    \scalebox{0.9}{
    \fbox{\begin{minipage}{0.975\textwidth}
    \centering
    \textbf{Common shorthand notations}\par\vspace{-0.6em}
    \begin{minipage}{0.32\textwidth}
        \centering
        \begin{equation}
            \tag{$\omega^{k}$}\label{oldwk}
            \input{shortcut/alphak.tikz} \defeq \input{shortcut/alphatimesk.tikz}
        \end{equation}
    \end{minipage}
    \begin{minipage}{0.39\textwidth}
        \centering
        \begin{equation}
            \tag{$U^k$}\label{oldUk}
            \input{shortcut/Uk.tikz} \defeq \input{shortcut/Utimesk.tikz}
        \end{equation}
    \end{minipage}\vspace{-0.6em}
    \begin{minipage}{0.45\textwidth}
        \centering
        \begin{equation}
            \label{oldbigCNOT}
            \input{shortcut/bigCNOT-left.tikz} \defeq \input{shortcut/bigCNOT-right.tikz}
        \end{equation}
    \end{minipage}
    \begin{minipage}{0.45\textwidth}
        \centering
        \begin{equation}
            \label{oldbigNOTC}
            \input{shortcut/bigNOTC-left.tikz} \defeq \input{shortcut/bigNOTC-right.tikz}
        \end{equation}
    \end{minipage}
    \end{minipage}}
    }
    \caption{Summary of Shortcuts used for all considered fragments used in the original PROs.}
    \label{fig:quantum-circuit-shortcuts-pros}
\end{figure}
\section{Auxiliary derivations for
  \texorpdfstring{$\QCclifford$}{QCclifford}}
\label{appendix:clifford-decoding-lemmas}


\begin{derivation}\label{Z2}
\begin{gather*}
  \input{identities/ZZ-00.tikz}
  \eqeqref{Z}\input{identities/ZZ-01.tikz}
  \eqeqref{S4}\input{identities/ZZ-02.tikz}
\end{gather*}
\end{derivation}

\begin{derivation}\label{X2}
\begin{gather*}
  \input{identities/XX-00.tikz}
  \eqeqref{X}\input{identities/XX-01.tikz}
  \eqeqref{H2}\input{identities/XX-02.tikz}
  \eqeqref{Z2}\input{identities/XX-03.tikz}
  \eqeqref{H2}\input{identities/XX-04.tikz}
\end{gather*}
\end{derivation}

\begin{derivation}\label{XSX}
\begin{gather*}
  \input{identities/XSX-00.tikz}
  \eqeqref{X}\input{identities/XSX-01.tikz}\\
  \eqeqref{E}\input{identities/XSX-02.tikz}
  \eqeqref{S4}\input{identities/XSX-03.tikz}\\
  \eqeqref{E}\input{identities/XSX-04.tikz}
  \eqdeuxeqref{S4}{H2}\input{identities/XSX-05.tikz}
\end{gather*}
\end{derivation}

\begin{derivation}\label{eulerH}
\begin{gather*}
  \input{identities/eulerH-00.tikz}
  \eqeqref{S4}\input{identities/eulerH-01.tikz}
  \eqeqref{H2}\input{identities/eulerH-02.tikz}
  \eqeqref{E}\input{identities/eulerH-03.tikz}
\end{gather*}
\end{derivation}

\begin{derivation}\label{CX2}
\begin{gather*}
  \input{cliffordaxioms/CNOTCNOT.tikz}
  \eqeqref{S4}\input{identities/CNOTCNOT-01.tikz}
  \eqeqref{Cs}\input{identities/CNOTCNOT-02.tikz}\\
  \eqeqref{B}\input{identities/CNOTCNOT-03.tikz}
  =\input{identities/CNOTCNOT-04.tikz}\\
  \eqeqref{B}\input{identities/CNOTCNOT-05.tikz}
  =\input{identities/CNOTCNOT-06.tikz}\\
  =\input{identities/CNOTCNOT-07.tikz}\\
  \eqeqref{Cs}\input{identities/CNOTCNOT-08.tikz}
  \eqeqref{S4}\input{cliffordaxioms/II.tikz}
\end{gather*}
\end{derivation}

\begin{derivation}\label{C}
\begin{gather*}
  \input{cliffordaxioms/SCNOT.tikz}
  \eqeqref{Cs}\input{identities/SCNOT-01.tikz}
  \eqeqref{CX2}\input{cliffordaxioms/CNOTS.tikz}
\end{gather*}
\end{derivation}

\begin{derivation}\label{phasegadget}
\begin{gather*}
  \input{identities/phasegadget-00.tikz}
  =\input{identities/phasegadget-01.tikz}
  \eqeqref{B}\input{identities/phasegadget-02.tikz}\\
  \eqeqref{C}\input{identities/phasegadget-03.tikz}
  \eqeqref{CX2}\input{identities/phasegadget-04.tikz}
\end{gather*}
\end{derivation}

\begin{derivation}\label{CNOTHH}
\begin{gather*}
  \input{identities/CNOTHH-00.tikz}
  \eqeqref{H2}\input{identities/CNOTHH-01.tikz}
  \eqeqref{CZ}\input{identities/CNOTHH-02.tikz}\\
  \eqeqref{phasegadget}\input{identities/CNOTHH-03.tikz}
  \eqeqref{CZ}\input{identities/CNOTHH-04.tikz}
  \eqeqref{H2}\input{identities/CNOTHH-05.tikz}
\end{gather*}
\end{derivation}

\begin{derivation}\label{czrev}
\begin{gather*}
  \input{identities/czrev-00.tikz}
  \eqeqref{H2}\input{identities/czrev-01.tikz}
  \eqeqref{CZ}\input{identities/czrev-02.tikz}
  \eqeqref{H2}\input{identities/czrev-03.tikz}
\end{gather*}
\end{derivation}

\begin{derivation}\label{XcommutCNOT}
\begin{gather*}
  \input{identities/XcommutCNOT-00.tikz}
  \eqtroiseqref{H2}{X}{Z}\input{identities/XcommutCNOT-01.tikz}
  \eqeqref{CNOTHH}\input{identities/XcommutCNOT-02.tikz}\\
  \eqeqref{C}\input{identities/XcommutCNOT-03.tikz}
  \eqeqref{CNOTHH}\input{identities/XcommutCNOT-04.tikz}
  \eqtroiseqref{H2}{X}{Z}\input{identities/XcommutCNOT-05.tikz}
\end{gather*}
\end{derivation}

\begin{derivation}\label{wCNOT}
\begin{gather*}
  \input{identities/wCNOT-00.tikz}
  \eqeqref{H2}\input{identities/wCNOT-01.tikz}
  \eqeqref{CZ}\input{identities/wCNOT-02.tikz}\\
  \eqeqref{phasegadget}\input{identities/wCNOT-03.tikz}
  \eqeqref{X}\input{identities/wCNOT-04.tikz}\\
  \eqdeuxeqref{Z}{C}\input{identities/wCNOT-05.tikz}
  \eqdeuxeqref{Z}{Sdagger}\input{identities/wCNOT-06.tikz}\\
  \eqeqref{XSX}\input{identities/wCNOT-07.tikz}\\
  \eqeqref{w8}\input{identities/wCNOT-08.tikz}\\
  \eqdeuxeqref{XcommutCNOT}{X2}\input{identities/wCNOT-09.tikz}
  \eqeqref{XcommutCNOT}\input{identities/wCNOT-10.tikz}\\
  \eqeqref{phasegadget}\input{identities/wCNOT-11.tikz}
  \eqeqref{CX2}\input{identities/wCNOT-12.tikz}\\
  \eqdeuxeqref{C}{phasegadget}\input{identities/wCNOT-13.tikz}
  \eqeqref{CZ}\input{identities/wCNOT-14.tikz}\\
  \eqeqref{CZ}\input{identities/wCNOT-15.tikz}
  \eqtroiseqref{S4}{CX2}{C}\input{identities/wCNOT-16.tikz}
\end{gather*}
\end{derivation}

\begin{derivation}\label{zCNOT}
\begin{gather*}
  \input{identities/zCNOT-00.tikz}
  \eqdeuxeqref{H2}{CNOTHH}\input{identities/zCNOT-01.tikz}
  \eqeqref{X}\input{identities/zCNOT-02.tikz}
  \eqeqref{wCNOT}\input{identities/zCNOT-03.tikz}\\
  \eqeqref{X}\input{identities/zCNOT-04.tikz}
  \eqdeuxeqref{H2}{CNOTHH}\input{identities/zCNOT-05.tikz}
\end{gather*}
\end{derivation}

\begin{derivation}\label{I}
\begin{gather*}
  \input{cliffordaxioms/CNOT13CNOT23.tikz}
  \eqeqref{CX2}\input{cliffordczaxioms/I2_00.tikz}
  \eqeqref{Inew}\input{cliffordczaxioms/I2_01.tikz}
  \eqeqref{CX2}\input{cliffordaxioms/CNOT12CNOT23CNOT12.tikz}
\end{gather*}
\end{derivation}

\begin{derivation}\label{3CNOTtargetcommut}
\begin{gather*}
  \input{identities/3CNOTtargetcommut-00.tikz}
  \eqeqref{CX2}\input{identities/3CNOTtargetcommut-01.tikz}
  \eqeqref{I}\input{identities/3CNOTtargetcommut-02.tikz}
  \eqeqref{CX2}\input{identities/3CNOTtargetcommut-03.tikz}
\end{gather*}
\end{derivation}

\begin{derivation}\label{3CNOTcontrolcommut}
\begin{gather*}
  \input{identities/3CNOTcontrolcommut-00.tikz}
  \eqeqref{CNOTHH}\input{identities/3CNOTcontrolcommut-01.tikz}
  \eqeqref{I}\input{identities/3CNOTcontrolcommut-02.tikz}
  \eqeqref{CNOTHH}\input{identities/3CNOTcontrolcommut-03.tikz}
\end{gather*}
\end{derivation}


\begin{derivation}\label{lem:dswap-unfolding}
\begin{gather*}
  \input{gates/SWAP.tikz}
  \eqeqref{H2}\input{cliffordczaxioms/DSWAPdef-00b.tikz}
  \eqeqref{CX2}\input{cliffordczaxioms/DSWAPdef-01.tikz}\\
  \eqeqref{B}\input{cliffordczaxioms/DSWAPdef-01b.tikz}
  \eqeqref{H2}\input{cliffordczaxioms/DSWAPdef-03.tikz}\\
  \eqeqref{czrev}\input{cliffordczaxioms/DSWAPdef-03b.tikz}
\end{gather*}
\end{derivation}

\begin{derivation}\label{lem:C4-dec-eq}
\begin{gather*}
  \input{cliffordczaxioms/C4-00.tikz}
  \eqeqref{E}\input{cliffordczaxioms/C4-01.tikz}\\
  \eqeqref{S4}\input{cliffordczaxioms/C4-02.tikz}
  \eqeqref{H2}\input{cliffordczaxioms/C4-03.tikz}
\end{gather*}
\end{derivation}

\begin{derivation}\label{lem:C5-dec-eq}
\begin{gather*}
  \input{cliffordczaxioms/C5-01.tikz}
  \eqeqref{H2}\input{cliffordczaxioms/C5-02.tikz}
  \eqeqref{CX2}\input{cliffordczaxioms/C5-03.tikz}
  \eqeqref{H2}\input{cliffordczaxioms/C5-04.tikz}
\end{gather*}
\end{derivation}

\begin{derivation}\label{lem:C6-dec-eq}
\begin{gather*}
  \input{cliffordczaxioms/C6-01.tikz}
  \eqeqref{C}\input{cliffordczaxioms/C6-02.tikz}
\end{gather*}
\end{derivation}

\begin{derivation}\label{lem:C7-dec-eq}
\begin{gather*}
  \input{cliffordczaxioms/C7-01.tikz}
  \eqeqref{czrev}\input{cliffordczaxioms/C7-02.tikz}\\
  \eqeqref{lem:C6-dec-eq}\input{cliffordczaxioms/C7-03.tikz}
  \eqeqref{czrev}\input{cliffordczaxioms/C7-04.tikz}
\end{gather*}
\end{derivation}

\begin{derivation}\label{lem:C8-dec-eq}
\begin{gather*}
  \input{cliffordczaxioms/C8-01.tikz}
  \eqeqref{X}\input{cliffordczaxioms/C8-02.tikz}
  \eqeqref{X2}\input{cliffordczaxioms/C8-03.tikz}\\
  \eqeqref{wCNOT}\input{cliffordczaxioms/C8-04.tikz}
  \eqdeuxeqref{X}{Z}\input{cliffordczaxioms/C8-05.tikz}\\
  \eqeqref{X}\input{cliffordczaxioms/C8-06.tikz}
  \eqeqref{Z}\input{cliffordczaxioms/C8-07.tikz}
\end{gather*}
\end{derivation}

\begin{derivation}\label{lem:C9-dec-eq}
\begin{gather*}
  \input{cliffordczaxioms/C9-01.tikz}
  \eqeqref{czrev} \input{cliffordczaxioms/C9-01b.tikz}\\
  = \input{cliffordczaxioms/C9-0ab.tikz}
  \eqeqref{lem:C8-dec-eq}\input{cliffordczaxioms/C9-0bb.tikz}\\
  = \input{cliffordczaxioms/C9-0cb.tikz}
  \eqeqref{czrev} \input{cliffordczaxioms/C9-08.tikz}
\end{gather*}
\end{derivation}

\begin{derivation}\label{lem:C10-dec-eq}
\begin{gather*}
  \input{cliffordczaxioms/C10-01.tikz}
  \eqeqref{eulerH}\input{cliffordczaxioms/C10-02.tikz}\\
  \eqeqref{C}\input{cliffordczaxioms/C10-03.tikz}
  \eqeqref{CNOTHH}\input{cliffordczaxioms/C10-04.tikz}\\
  \eqeqref{H2}\input{cliffordczaxioms/C10-05.tikz}
  \eqeqref{S4}\input{cliffordczaxioms/C10-06.tikz}\\
  \eqeqref{C}\input{cliffordczaxioms/C10-07.tikz}
  \eqeqref{phasegadget}\input{cliffordczaxioms/C10-08.tikz}\\
  \eqeqref{zCNOT}\input{cliffordczaxioms/C10-09.tikz}\\
  \eqdeuxeqref{Z}{Sdagger}\input{cliffordczaxioms/C10-10.tikz}\\
  \eqeqref{Z}\input{cliffordczaxioms/C10-11.tikz}\\
  \eqeqref{C}\input{cliffordczaxioms/C10-12.tikz}\\
  \eqeqref{E}\input{cliffordczaxioms/C10-13.tikz}
\end{gather*}
\end{derivation}

\begin{derivation}\label{lem:C11-dec-eq}
\begin{gather*}
  \input{cliffordczaxioms/C11-00b.tikz}
  \eqeqref{czrev}\input{cliffordczaxioms/C11-0ab.tikz}\\
  \eqeqref{lem:C10-dec-eq}\input{cliffordczaxioms/C11-0bb.tikz}\\
  \eqeqref{czrev}\input{cliffordczaxioms/C11-0cb.tikz}\\
  = \input{cliffordczaxioms/C11-01b.tikz}
\end{gather*}
\end{derivation}

\begin{derivation}\label{lem:C12-dec-eq}
\begin{gather*}
  \input{cliffordczaxioms/C12-01.tikz}
  \eqeqref{CNOTHH}\input{cliffordczaxioms/C12-02.tikz}
  \eqeqref{3CNOTtargetcommut}\input{cliffordczaxioms/C12-03.tikz}\\
  \eqeqref{H2}\input{cliffordczaxioms/C12-04.tikz}
  \eqeqref{CNOTHH}\input{cliffordczaxioms/C12-05.tikz}
\end{gather*}
\end{derivation}

\begin{derivation}\label{lem:C13-dec-eq}
\begin{gather*}
  \input{cliffordczaxioms/C13-01.tikz}\\
  \eqeqref{H2}\input{cliffordczaxioms/C13-02.tikz}
  \eqeqref{CNOTHH}\input{cliffordczaxioms/C13-03.tikz}\\
  \eqeqref{H2}\input{cliffordczaxioms/C13-04.tikz}
  \eqdeuxeqref{H2}{CNOTHH}\input{cliffordczaxioms/C13-05.tikz}\\
  \eqeqref{B}\input{cliffordczaxioms/C13-06.tikz}\\
  \eqeqref{3CNOTtargetcommut}\input{cliffordczaxioms/C13-07.tikz}
  \eqeqref{CX2}\input{cliffordczaxioms/C13-08.tikz}
  =\input{cliffordczaxioms/C13-09.tikz}\\
  \eqeqref{H2}\input{cliffordczaxioms/C13-10.tikz}
  =\input{cliffordczaxioms/C13-11.tikz}
  \eqdeuxeqref{H2}{CNOTHH}\input{cliffordczaxioms/C13-12.tikz}
  =\input{cliffordczaxioms/C13-13.tikz}\\
  \eqeqref{H2}\input{cliffordczaxioms/C13-14.tikz}
  =\input{cliffordczaxioms/C13-15.tikz}
  \eqeqref{CX2}\input{cliffordczaxioms/C13-16.tikz}\\
  \eqeqref{3CNOTtargetcommut}\input{cliffordczaxioms/C13-17.tikz}
  \eqeqref{B}\input{cliffordczaxioms/C13-18.tikz}\\
  \eqeqref{H2}\input{cliffordczaxioms/C13-22.tikz}\\
  \eqdeuxeqref{H2}{CNOTHH}\input{cliffordczaxioms/C13-23.tikz}\\
  \eqeqref{H2}\input{cliffordczaxioms/C13-24.tikz}
\end{gather*}
\end{derivation}

\begin{derivation}\label{lem:C14-dec-eq}
\begin{gather*}
  \scalebox{0.8}{\input{cliffordczaxioms/C14-01.tikz}}\\
  \eqeqref{H2}\input{cliffordczaxioms/C14-02.tikz}\\
  \eqdeuxeqref{H2}{CNOTHH}\input{cliffordczaxioms/C14-03.tikz}\\
  \eqeqref{B}\input{cliffordczaxioms/C14-04.tikz}\\
  \eqeqref{I}\input{cliffordczaxioms/C14-05.tikz}
  =\input{cliffordczaxioms/C14-06.tikz}\\
  \eqdeuxeqref{3CNOTtargetcommut}{3CNOTcontrolcommut}\input{cliffordczaxioms/C14-07.tikz}
  \eqeqref{CX2}\input{cliffordczaxioms/C14-08.tikz}
  \eqeqref{H2}\input{cliffordczaxioms/C14-09.tikz}
\end{gather*}
\end{derivation}

\begin{derivation}\label{lem:C15-dec-eq}
\begin{gather*}
  \scalebox{0.8}{\input{cliffordczaxioms/C15-00b.tikz}}\\
  \eqeqref{czrev}\scalebox{0.7}{\input{cliffordczaxioms/C15-0ab.tikz}}\\
  \eqeqref{lem:C14-dec-eq}\input{cliffordczaxioms/C15-0bb.tikz}
  = \input{cliffordczaxioms/C15-01.tikz}
\end{gather*}
\end{derivation}
\section{Auxiliary derivations for
  \texorpdfstring{$\QCrealclifford$}{QCrealclifford}}
\label{appendix:real-decoding-lemmas}

\begin{derivation}\label{realczrev}
  \begin{gather*}
    \input{identities/czrev-00.tikz}
    \eqeqref{realH2}\input{identities/czrev-01.tikz}
    \eqeqref{realXC}\input{identities/czrev-02.tikz}
    \eqeqref{realH2}\input{identities/czrev-03.tikz}
  \end{gather*}
\end{derivation}

\begin{derivation}\label{lem:R4-dec-eq}
  \begin{gather*}
    \input{realcliffordczaxioms/R4-00.tikz}\\
    \eqeqref{realF}\input{realcliffordczaxioms/R4-01.tikz}
    \eqdeuxeqref{realH2}{realZ2}\input{realcliffordczaxioms/R4-02.tikz}
  \end{gather*}
\end{derivation}

\begin{derivation}\label{lem:R5-dec-eq}
  \begin{gather*}
    \input{realcliffordczaxioms/R5-01.tikz}
    \eqeqref{realH2}\input{realcliffordczaxioms/R5-02.tikz}
    \eqeqref{realCX2}\input{realcliffordczaxioms/R5-03.tikz}
    \eqeqref{realH2}\input{realcliffordczaxioms/R5-04.tikz}
  \end{gather*}
\end{derivation}

\begin{derivation}\label{lem:R6-dec-eq}
  \begin{gather*}
    \input{realcliffordczaxioms/R6-01.tikz}
    \eqeqref{realZ2}\input{realcliffordczaxioms/R6-02.tikz}
    \eqeqref{realCX2}\input{realcliffordczaxioms/R6-03.tikz}\\
    \eqeqref{realZ2}\input{realcliffordczaxioms/R6-04.tikz}
    \eqeqref{realwC}\input{realcliffordczaxioms/R6-05.tikz}\\
    \eqdeuxeqref{realZ2}{realCX2}\input{realcliffordczaxioms/R6-06.tikz}
  \end{gather*}
\end{derivation}

\begin{derivation}\label{lem:R7-dec-eq}
  \begin{gather*}
    \input{realcliffordczaxioms/R7-00b.tikz}
    \eqeqref{realczrev}\input{realcliffordczaxioms/R7-0ab.tikz}
    =\input{realcliffordczaxioms/R7-0bb.tikz}\\
    \eqeqref{lem:R6-dec-eq}\input{realcliffordczaxioms/R7-0cb.tikz}
    \eqeqref{realczrev}\input{realcliffordczaxioms/R7-01b.tikz}
  \end{gather*}
\end{derivation}

\begin{derivation}\label{lem:R8-dec-eq}
  \begin{gather*}
    \input{realcliffordczaxioms/R8-01.tikz}
    \eqdeuxeqref{realH2}{realXC}\input{realcliffordczaxioms/R8-02.tikz}
    \eqeqref{realZ2}\input{realcliffordczaxioms/R8-03.tikz}\\
    \eqeqref{realwC}\input{realcliffordczaxioms/R8-04.tikz}
    \eqdeuxeqref{realH2}{realXC}\input{realcliffordczaxioms/R8-05.tikz}
  \end{gather*}
\end{derivation}

\begin{derivation}\label{lem:R9-dec-eq}
  \begin{gather*}
    \input{realcliffordczaxioms/R9-00b.tikz}
    \eqeqref{realczrev}\input{realcliffordczaxioms/R9-0ab.tikz}
    =\input{realcliffordczaxioms/R9-0bb.tikz}\\
    \eqeqref{lem:R8-dec-eq}\input{realcliffordczaxioms/R9-0cb.tikz}
    \eqeqref{realczrev}\input{realcliffordczaxioms/R9-01b.tikz}
  \end{gather*}
\end{derivation}

\begin{derivation}\label{lem:R10-dec-eq}
  \begin{gather*}
    \input{realcliffordczaxioms/R10-01.tikz}
    \eqdeuxeqref{realH2}{realXC}\input{realcliffordczaxioms/R10-02.tikz}\\[0.2cm]
    \eqeqref{realZ2}\input{realcliffordczaxioms/R10-03.tikz}
    \eqeqref{realwC}\input{realcliffordczaxioms/R10-04.tikz}\\[0.2cm]
    \eqeqref{lem:R6-dec-eq}\input{realcliffordczaxioms/R10-05.tikz}
    \eqdeuxeqref{realH2}{realXC}\input{realcliffordczaxioms/R10-06.tikz}\\[0.2cm]
    \eqeqref{realCF}\input{realcliffordczaxioms/R10-07.tikz}
    \eqdeuxeqref{realH2}{realZ2}\input{realcliffordczaxioms/R10-08.tikz}\\[0.2cm]
    \eqdeuxeqref{realH2}{realXC}\input{realcliffordczaxioms/R10-09.tikz}
  \end{gather*}
\end{derivation}

\begin{derivation}\label{lem:R11-dec-eq}
  \begin{gather*}
    \input{realcliffordczaxioms/R11-00b.tikz}
    \eqeqref{realczrev}\input{realcliffordczaxioms/R11-0ab.tikz}\\
    = \input{realcliffordczaxioms/R11-0bb.tikz}
    \eqeqref{lem:R10-dec-eq}\input{realcliffordczaxioms/R11-0cb.tikz}\\
    \eqeqref{realczrev}\input{realcliffordczaxioms/R11-01b.tikz}
  \end{gather*}
\end{derivation}

\begin{derivation}\label{lem:R12-dec-eq}
  \begin{gather*}
    \input{realcliffordczaxioms/R12-01.tikz}\\[0.2cm]
    \eqeqref{realH2}\input{realcliffordczaxioms/R12-02.tikz}
    \eqeqref{realXC}\input{realcliffordczaxioms/R12-03.tikz}
    \eqeqref{realB}\input{realcliffordczaxioms/R12-04.tikz}\\[0.2cm]
    \eqeqref{realCX2}\input{realcliffordczaxioms/R12-05.tikz}
    \eqeqref{realH2}\input{realcliffordczaxioms/R12-06.tikz}
    =\input{realcliffordczaxioms/R12-07.tikz}
    \eqeqref{realCX2}\input{realcliffordczaxioms/R12-08.tikz}\\[0.2cm]
    \eqeqref{realB}\input{realcliffordczaxioms/R12-09.tikz}
    \eqeqref{realXC}\input{realcliffordczaxioms/R12-10.tikz}\\[0.2cm]
    \eqeqref{realH2}\input{realcliffordczaxioms/R12-11.tikz}
  \end{gather*}
\end{derivation}

\section{Auxiliary derivations for
  \texorpdfstring{$\QCqutritclifford$}{QCqutritclifford}}
\label{appendix:qutrit-decoding-lemmas}


\begin{derivation}\label{qt-K2}
\begin{gather*}
  \input{qutritderivations/K2.tikz}
  = \input{qutritderivations/K2-00.tikz}
  \eqeqref{qt-K}\input{qutritderivations/K2-01.tikz}
  = \input{qutritderivations/K2-02.tikz}
  \eqeqref{qt-hpow4}\input{qutritderivations/I.tikz}
\end{gather*}
\end{derivation}

\begin{derivation}\label{qt-sppow3}
\begin{gather*}
  \input{qutritderivations/SpSpSp.tikz}
  =\input{qutritderivations/KSKKSKKSK.tikz}
  \eqeqref{qt-K2}\input{qutritderivations/KSSSK.tikz}
  \eqeqref{qt-spow3}\input{qutritderivations/K2.tikz}\\
  \eqeqref{qt-K2}\input{qutritderivations/I.tikz}
\end{gather*}
\end{derivation}

\begin{derivation}\label{qt-zpow3}
\begin{gather*}
  \input{qutritderivations/ZZZ.tikz}
  \eqeqref{qt-Z}\input{qutritderivations/SSpSpSSpSpSSpSp.tikz}
  \eqeqref{qt-ssprime}\input{qutritderivations/SSSSpSpSpSpSpSp.tikz}
  \eqdeuxeqref{qt-spow3}{qt-sppow3}\input{qutritderivations/I.tikz}
\end{gather*}
\end{derivation}

\begin{derivation}\label{qt-xpow3}
\begin{gather*}
  \input{qutritderivations/XXX.tikz}
  \eqeqref{qt-X}\input{qutritderivations/HZHHHHZHHHHZHHH.tikz}
  \eqeqref{qt-hpow4}\input{qutritderivations/HZZZHHH.tikz}\\
  \eqeqref{qt-zpow3}\input{qutritderivations/HHHH.tikz}
  \eqeqref{qt-hpow4}\input{qutritderivations/I.tikz}
\end{gather*}
\end{derivation}

\begin{derivation}\label{qt-SK}
\begin{gather*}
  \input{qutritderivations/SK.tikz}
  \eqeqref{qt-K2}\input{qutritderivations/KKSK.tikz}
  \eqeqref{qt-Sprime}\input{qutritderivations/KSp.tikz}
\end{gather*}
\end{derivation}

\begin{derivation}\label{qt-SpK}
\begin{gather*}
  \input{qutritderivations/SpK.tikz}
  \eqeqref{qt-Sprime}\input{qutritderivations/KSKK.tikz}
  \eqeqref{qt-K2}\input{qutritderivations/KS.tikz}
\end{gather*}
\end{derivation}

\begin{derivation}\label{qt-ZK}
\begin{gather*}
  \input{qutritderivations/ZK.tikz}
  \eqeqref{qt-Z}\input{qutritderivations/SSp2K.tikz}
  \eqeqref{qt-SpK}\input{qutritderivations/SKS2.tikz}
  \eqeqref{qt-SK}\input{qutritderivations/KSpS2.tikz}\\
  \eqeqref{qt-sppow3}\input{qutritderivations/KSp2Sp2SS.tikz}
  \eqeqref{qt-ssprime}\input{qutritderivations/KSSp2SSp2.tikz}
  \eqeqref{qt-Z}\input{qutritderivations/KZZ.tikz}
  =\input{qutritderivations/KZ2.tikz}
\end{gather*}
\end{derivation}

\begin{derivation}\label{qt-XK}
\begin{gather*}
  \input{qutritderivations/XK.tikz}
  \eqeqref{qt-X}\input{qutritderivations/HZHK.tikz}
  \eqeqref{qt-K}\input{qutritderivations/HZKH.tikz}
  \eqeqref{qt-ZK}\input{qutritderivations/HKZZH.tikz}\\
  \eqeqref{qt-K}\input{qutritderivations/KHZZH.tikz}
  \eqdeuxeqref{qt-hpow4}{qt-X}\input{qutritderivations/KXX.tikz}
\end{gather*}
\end{derivation}

\begin{derivation}\label{qt-XH}
\begin{gather*}
  \input{qutritderivations/XH.tikz}
  \eqeqref{qt-X}\input{qutritderivations/HZHHHH.tikz}
  \eqeqref{qt-hpow4}\input{qutritderivations/HZ.tikz}
\end{gather*}
\end{derivation}

\begin{derivation}\label{qt-ZZH}
\begin{gather*}
  \input{qutritderivations/ZZH.tikz}
  \eqdeuxeqref{qt-hpow4}{qt-K}\input{qutritderivations/ZZKHHH.tikz}
  \eqdeuxeqref{qt-ZK}{qt-zpow3}\input{qutritderivations/KZHHH.tikz}\\
  \eqeqref{qt-K}\input{qutritderivations/HHZHHH.tikz}
  \eqeqref{qt-X}\input{qutritderivations/HX.tikz}
\end{gather*}
\end{derivation}

\begin{derivation}\label{qt-XS}
\begin{gather*}
  \input{qutritderivations/XS.tikz}
  \eqeqref{qt-X}\input{qutritderivations/HZHS.tikz}
  \eqeqref{qt-Z}\input{qutritderivations/HSSSpHHHS.tikz}
  \eqdeuxeqref{qt-hpow4}{qt-Sprime}\input{qutritderivations/HSSHHSHS.tikz}\\
  \eqeqref{qt-shpow3}\input{qutritderivations/HSSHHHHHSSHHH.tikz}
  \eqeqref{qt-hpow4}\input{qutritderivations/HHHHHSSHHHHHSSHHH.tikz}\\
  \eqeqref{qt-shpow3}\input{qutritderivations/HHSHSHHSSHHH.tikz}
  \eqdeuxeqref{qt-SK}{qt-Sprime}\input{qutritderivations/HHSHHHSpSSHHH.tikz}\\
  \eqtroiseqref{qt-shpow3}{qt-hpow4}{qt-Sprime}\input{qutritderivations/SpHSpSSHHH.tikz}
  \eqdeuxeqref{qt-Z}{qt-X}\input{qutritderivations/SpX.tikz}\\
  \eqtroiseqref{qt-Z}{qt-spow3}{qt-sppow3}\input{qutritderivations/ZSX.tikz}
  \eqeqref{qt-ssprime}\input{qutritderivations/SZX.tikz}
\end{gather*}
\end{derivation}

\begin{derivation}\label{qt-XSp}
\begin{gather*}
  \input{qutritderivations/XSp.tikz}
  \eqeqref{qt-X}\input{qutritderivations/HZHHHSp.tikz}
  \eqeqref{qt-Z}\input{qutritderivations/HSpSSHHHSp.tikz}
  \eqeqref{qt-SK}\input{qutritderivations/HSpSSHSHH.tikz}\\
  \eqeqref{qt-shpow3}\input{qutritderivations/HSpSSSSHHHSSH.tikz}
  \eqeqref{qt-spow3}\input{qutritderivations/HSpSHHHSSH.tikz}\\
  \eqeqref{qt-ssprime}\input{qutritderivations/HSSpHHHSSH.tikz}
  \eqdeuxeqref{qt-hpow4}{qt-Sprime}\input{qutritderivations/HSHHSHSSH.tikz}\\
  \eqeqref{qt-shpow3}\input{qutritderivations/SSHHHSSHSHSSH.tikz}
  \eqdeuxeqref{qt-shpow3}{qt-omegapow12}\input{qutritderivations/SSHHHSSHHHHSSHHHSH.tikz}\\
  \eqdeuxeqref{qt-spow3}{qt-hpow4}\input{qutritderivations/SSHHHSHHHSH.tikz}
  \eqdeuxeqref{qt-shpow3}{qt-omegapow12}\input{qutritderivations/SSHHSSHHHSSHHSH.tikz}\\
  \eqdeuxeqref{qt-SK}{qt-K}\input{qutritderivations/SSHHHHSpSpHSSSpHHH.tikz}
  \eqtroiseqref{qt-hpow4}{qt-Z}{qt-X}\input{qutritderivations/SpSpSSX.tikz}\\
  \eqeqref{qt-Z}\input{qutritderivations/SpZX.tikz}
\end{gather*}
\end{derivation}

\begin{derivation}\label{qt-XZ}
\begin{gather*}
  \input{qutritderivations/XZ.tikz}
  \eqeqref{qt-Z}\input{qutritderivations/XSSSp.tikz}
  \eqeqref{qt-XS}\input{qutritderivations/SZSZXSp.tikz}\\
  \eqeqref{qt-XSp}\input{qutritderivations/SZSZSpZX.tikz}
  \eqdeuxeqref{qt-Z}{qt-ssprime}\input{qutritderivations/ZZZZX.tikz}
  \eqeqref{qt-zpow3}\input{qutritderivations/ZX.tikz}
\end{gather*}
\end{derivation}

\begin{derivation}\label{qt-SXSXSX}
\begin{gather*}
  \input{qutritderivations/SXSXSX.tikz}
  \eqeqref{qt-XS}\input{qutritderivations/SSZXXSX.tikz}
  \eqeqref{qt-XS}\input{qutritderivations/SSZSZXZXX.tikz}\\
  \eqdeuxeqref{qt-ssprime}{qt-XZ}\input{qutritderivations/SSSZZZXXX.tikz}
  \eqtroiseqref{qt-xpow3}{qt-zpow3}{qt-spow3}\input{qutritderivations/I-phomega2.tikz}
\end{gather*}
\end{derivation}

\begin{derivation}\label{qt-SXKSXK}
\begin{gather*}
  \input{qutritderivations/SXKSXK.tikz}
  \eqeqref{qt-XK}\input{qutritderivations/SKXXSXK.tikz}
  \eqeqref{qt-XS}\input{qutritderivations/SKXSZXXK.tikz}\\
  \eqeqref{qt-XS}\input{qutritderivations/SKSZXZXXK.tikz}
  \eqeqref{qt-XZ}\input{qutritderivations/SKSZZXXXK.tikz}\\
  \eqeqref{qt-xpow3}\input{qutritderivations/SKSZZK.tikz}
  \eqdeuxeqref{qt-ZK}{qt-zpow3}\input{qutritderivations/SKSKZ.tikz}\\
  \eqeqref{qt-Sprime}\input{qutritderivations/SSpZ.tikz}
  \eqtroiseqref{qt-Z}{qt-ssprime}{qt-spow3}\input{qutritderivations/SpSp.tikz}
\end{gather*}
\end{derivation}

\begin{derivation}\label{qt-KSXKSX}
\begin{gather*}
  \input{qutritderivations/KSXKSX.tikz}
  \eqeqref{qt-XK}\input{qutritderivations/KSKXXSX.tikz}
  =\input{qutritderivations/SpXXSX.tikz}\\
  \eqeqref{qt-XS}\input{qutritderivations/SpXSZXX.tikz}
  \eqeqref{qt-XS}\input{qutritderivations/SpSZXZXX.tikz}
  \eqeqref{qt-XZ}\input{qutritderivations/SpSZZXXX.tikz}\\
  \eqeqref{qt-xpow3}\input{qutritderivations/SpSZZ.tikz}
  \eqtroiseqref{qt-Z}{qt-ssprime}{qt-spow3}\input{qutritderivations/SS.tikz}
\end{gather*}
\end{derivation}

\begin{derivation}\label{qt-HSXH}
\begin{gather*}
  \input{qutritderivations/SXHSX.tikz}
  \eqeqref{qt-XH}\input{qutritderivations/SHZSX.tikz}
  \eqdeuxeqref{qt-Z}{qt-ssprime}\input{qutritderivations/SHSZX.tikz}\\
  \eqeqref{qt-shpow3}\input{qutritderivations/HHHSSHHHZX.tikz}
  \eqdeuxeqref{qt-XH}{qt-XK}\input{qutritderivations/HHHSSXXHHHX.tikz}\\
  \eqtroiseqref{qt-ZZH}{qt-ZK}{qt-zpow3}\input{qutritderivations/HHHSSXXZHHH.tikz}\\
  \eqdeuxeqref{qt-XZ}{qt-omegapow12}\input{qutritderivations/HHHSSZXXHHH.tikz}\\
  \eqeqref{qt-XS}\input{qutritderivations/HHHSXSXHHH.tikz}
\end{gather*}
\end{derivation}


\begin{derivation}\label{qt-S-CNOT}
\begin{gather*}
  \input{qutritderivations/S-CNOT.tikz}
  \eqeqref{qt-spow3}\input{qutritderivations/S-CNOT-00.tikz}
  \eqeqref{qt-K2}\input{qutritderivations/S-CNOT-02.tikz}
  = \input{qutritderivations/S-CNOT-03.tikz}
  \eqeqref{qt-cnotremove}\input{qutritderivations/S-CNOT-04.tikz}\\
  = \input{qutritderivations/S-CNOT-05.tikz}
  \eqeqref{qt-cnotremove}\input{qutritderivations/S-CNOT-06.tikz}\\
  \eqeqref{qt-cnotremove}\input{qutritderivations/S-CNOT-07.tikz}
  \eqeqref{qt-cnotremove}\input{qutritderivations/S-CNOT-08.tikz}\\
  = \input{qutritderivations/S-CNOT-09.tikz}
  \eqdeuxeqref{qt-K2}{qt-spow3}\input{qutritderivations/CNOT-S.tikz}
\end{gather*}
\end{derivation}

\begin{derivation}\label{qt-CKC}
\begin{gather*}
  \input{qutritderivations/CNOT-K-CNOT.tikz}
  \eqeqref{qt-spow3}\input{qutritderivations/CNOT-K-CNOT-00.tikz}
  \eqeqref{qt-S-CNOT}\input{qutritderivations/CNOT-K-CNOT-01.tikz}
  \eqeqref{qt-cnotremove}\input{qutritderivations/CNOT-K-CNOT-02.tikz}
  \eqeqref{qt-spow3}\input{qutritderivations/I-tens-K.tikz}
\end{gather*}
\end{derivation}

\begin{derivation}\label{qt-KbCnot}
\begin{gather*}
  \input{qutritderivations/Kb-CNOT.tikz}
  \eqeqref{qt-CKC}\input{qutritderivations/Kb-CNOT-00.tikz}
  \eqeqref{qt-K2}\input{qutritderivations/Kb-CNOT-01.tikz}\\
  \eqeqref{qt-kcnot}\input{qutritderivations/Kb-CNOT-02.tikz}
  \eqeqref{qt-K2}\input{qutritderivations/Kb-CNOT-04.tikz}
  \eqeqref{qt-CKC}\input{qutritderivations/Kb-CNOT-05.tikz}\\
  \eqeqref{qt-kcnot}\input{qutritderivations/Kb-CNOT-06.tikz}
  \eqeqref{qt-K2}\input{qutritderivations/CNOT-CNOT-Kb.tikz}
\end{gather*}
\end{derivation}

\begin{derivation}\label{qt-CCC}
\begin{gather*}
  \input{qutritderivations/CNOT-CNOT-CNOT.tikz}
  \eqeqref{qt-K2}\input{qutritderivations/CNOT-CNOT-CNOT-00.tikz}
  \eqeqref{qt-KbCnot}\input{qutritderivations/CNOT-CNOT-CNOT-01.tikz}\\
  \eqeqref{qt-CKC}\input{qutritderivations/CNOT-CNOT-CNOT-02.tikz}
  \eqeqref{qt-K2}\input{identities/I2.tikz}
\end{gather*}
\end{derivation}

\begin{derivation}\label{qt-SWAP}
\begin{gather*}
  \input{gates/SWAP.tikz}
  \eqeqref{qt-CCC}\input{qutritderivations/SWAP-00.tikz}
  \eqeqref{qt-swapdecomp}\input{qutritderivations/SWAP-01.tikz}
  \eqeqref{qt-kcnot}\input{qutritderivations/SWAP-02.tikz}
\end{gather*}
\end{derivation}

\begin{derivation}\label{qt-PhGadget}
\begin{gather*}
  \input{qutritderivations/PhGadget.tikz}
  = \input{qutritderivations/PhGadget-00.tikz}\\
  \eqeqref{qt-SWAP}\input{qutritderivations/PhGadget-01.tikz}\\
  \eqeqref{qt-kcnot}\input{qutritderivations/PhGadget-02.tikz}\\
  \eqeqref{qt-CCC}\input{qutritderivations/PhGadget-03.tikz}\\
  \eqeqref{qt-S-CNOT}\input{qutritderivations/PhGadget-04.tikz}\\
  \eqeqref{qt-CKC}\input{qutritderivations/PhGadget-05.tikz}
  \eqeqref{qt-CKC}\input{qutritderivations/PhGadget-06.tikz}\\
  \eqeqref{qt-kcnot}\input{qutritderivations/PhGadget-07.tikz}
  \eqeqref{qt-K2}\input{qutritderivations/PhGadget-08.tikz}
\end{gather*}
\end{derivation}

\begin{derivation}\label{qt-CzRev}
\begin{gather*}
  \input{qutritderivations/CZ-rev.tikz}
  = \input{qutritderivations/CZ-rev-00.tikz}
  \eqeqref{qt-czdecomp}\input{qutritderivations/CZ-rev-01.tikz}\\
  \eqeqref{qt-PhGadget}\input{qutritderivations/CZ-rev-02.tikz}
  \eqeqref{qt-czdecomp}\input{qutritderivations/CZ-rev-03.tikz}
  = \input{qutritderivations/CZ-rev-04.tikz}
\end{gather*}
\end{derivation}

\begin{derivation}\label{qt-S-NOTC}
\begin{gather*}
  \input{qutritderivations/S-NOTC.tikz}
  \eqeqref{qt-K2}\input{qutritderivations/S-NOTC-00.tikz}
  \eqeqref{qt-CCC}\input{qutritderivations/S-NOTC-01.tikz}\\
  \eqeqref{qt-kcnot}\input{qutritderivations/S-NOTC-02.tikz}
  \eqeqref{qt-spow3}\input{qutritderivations/S-NOTC-03.tikz}\\
  \eqeqref{qt-czdecomp}\input{qutritderivations/S-NOTC-04.tikz}
  \eqeqref{qt-S-CNOT}\input{qutritderivations/S-NOTC-05.tikz}\\
  \eqeqref{qt-PhGadget}\input{qutritderivations/S-NOTC-06.tikz}
  \eqeqref{qt-S-CNOT}\input{qutritderivations/S-NOTC-07.tikz}\\
  \eqeqref{qt-K}\input{qutritderivations/S-NOTC-08.tikz}
  \eqeqref{qt-Sdagger}\input{qutritderivations/S-NOTC-09.tikz}
\end{gather*}
\end{derivation}

\begin{derivation}\label{qt-CZ-CNOT}
\begin{gather*}
  \input{qutritderivations/CZ-CNOT.tikz}
  \eqeqref{qt-KbCnot}\input{qutritderivations/CZ-CNOT-00.tikz}\\
  \eqdeuxeqref{qt-hpow4}{qt-H3}\input{qutritderivations/CZ-CNOT-01.tikz}
  \eqeqref{qt-spow3}\input{qutritderivations/CZ-CNOT-02.tikz}\\
  \eqeqref{qt-S-CNOT}\input{qutritderivations/CZ-CNOT-03.tikz}
  \eqeqref{qt-CzRev}\input{qutritderivations/CZ-CNOT-04.tikz}\\
  \eqeqref{qt-czdecomp}\input{qutritderivations/CZ-CNOT-05.tikz}\\
  \eqeqref{qt-spow3}\input{qutritderivations/CZ-CNOT-06.tikz}\\
  \eqeqref{qt-S-CNOT}\input{qutritderivations/CZ-CNOT-07.tikz}
  \eqeqref{qt-CCC}\input{qutritderivations/CZ-CNOT-08.tikz}\\
  \eqeqref{qt-S-NOTC}\input{qutritderivations/CZ-CNOT-09.tikz}
  \eqeqref{qt-spow3}\input{qutritderivations/CZ-CNOT-10.tikz}
\end{gather*}
\end{derivation}

\begin{derivation}\label{qt-SwCZ}
\begin{gather*}
  \input{qutritderivations/SWAP-CZ-00.tikz}
  \eqdeuxeqref{qt-CzRev}{qt-hpow4}\input{qutritderivations/SWAP-CZ-01.tikz}\\
  \eqeqref{qt-KbCnot}\input{qutritderivations/SWAP-CZ-02.tikz}
  \eqeqref{qt-swapdecomp}\input{qutritderivations/SWAP-CZ-03.tikz}\\
  \eqeqref{qt-KbCnot}\input{qutritderivations/SWAP-CZ-04.tikz}
  \eqeqref{qt-CCC}\input{qutritderivations/SWAP-CZ-05.tikz}
  \eqeqref{qt-hpow4}\input{gates/SWAP.tikz}
\end{gather*}
\end{derivation}

\begin{derivation}\label{qt-CComm}
\begin{gather*}
  \input{qutritderivations/CNOT13CNOT23.tikz}
  \eqeqref{qt-CCC}\input{qutritderivations/CNOT12CNOT12CNOT23CNOT23CNOT23CNOT12CNOT13CNOT23.tikz}
  \eqeqref{qt-I}\input{qutritderivations/CNOT12CNOT12CNOT23CNOT23CNOT12CNOT23CNOT23.tikz}\\
  \eqdeuxeqref{qt-K2}{qt-KbCnot}\input{qutritderivations/CNOT12CNOT12CNOT23CNOT12CNOT23.tikz}
  \eqeqref{qt-I}\input{qutritderivations/CNOT12CNOT12CNOT23CNOT23CNOT12CNOT13.tikz}\\
  \eqtroiseqref{qt-K2}{qt-KbCnot}{qt-CCC}\input{qutritderivations/CNOT12CNOT12CNOT23CNOT12CNOT13CNOT13.tikz}
  \eqeqref{qt-I}\input{qutritderivations/CNOT12CNOT12CNOT12CNOT23CNOT13.tikz}
  \eqeqref{qt-CCC}\input{qutritderivations/CNOT23CNOT13.tikz}
\end{gather*}
\end{derivation}

\begin{derivation}\label{qt-CzComm}
\begin{gather*}
  \input{qutritderivations/CZ12CZ23.tikz}
  \eqeqref{qt-CzRev}\input{qutritderivations/CZ12CZ23r.tikz}
  \eqeqref{qt-hpow4}\input{qutritderivations/HCNOTNOTCH.tikz}\\
  \eqeqref{qt-CComm}\input{qutritderivations/HNOTCCNOTH.tikz}
  \eqeqref{qt-hpow4}\input{qutritderivations/CZ23rCZ12.tikz}
  \eqeqref{qt-CzRev}\input{qutritderivations/CZ23CZ12.tikz}
\end{gather*}
\end{derivation}


\begin{derivation}\label{lem:qt-C4-dec-eq}
\begin{gather*}
  \input{qutritcliffordaxioms/s2hs2hs2hphases.tikz}
  \eqdeuxeqref{qt-spow3}{qt-omegapow12}\input{qutritderivations/shshsh.tikz}\\
  \eqeqref{qt-shpow3}\input{qutritderivations/shshsh-01.tikz}
  \eqdeuxeqref{qt-spow3}{qt-hpow4}\input{qutritcliffordaxioms/D-Iphase-omega.tikz}
\end{gather*}
\end{derivation}

\begin{proposition}\label{qt-prop:SSpnecessity}
    The axiom \eqref{qt-ssprime} is necessary in $\QCqutritclifford$.
\end{proposition}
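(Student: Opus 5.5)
We apply \cref{lem:separation} with the projective substitution $F\coloneqq\interp{\cdot}^{\gS:=\gS\gX}_{\mathit{Cliff3},\sim}$ from \cref{fig:minimality-all}. It interprets every generator projectively as usual, except $\gS$, which is sent to the projective class of $SX$; here $X$ is the qutrit shift $\ket{x}\mapsto\ket{x+1}$ given by the shortcut \eqref{qt-X}, evaluated with the standard interpretation of $\gS$. Since $F$ lands in the PROP $\cat{Qutrit}^{\sim}$ and is a PROP morphism by construction, only the fragment-specific checks remain. There are two of them. First, $F$ must equalise every axiom of $\QCqutritclifford$ other than \eqref{qt-ssprime}, up to two qutrits as in \cref{thm:minimality-all}; if it also works for \eqref{qt-I}, we get independence in all arities. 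Second, $F$ must distinguish the two sides of \eqref{qt-ssprime}.

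\textbf{Distinguishing step.} Write $S'=KSK$ with $K=H^2:\ket{x}\mapsto\ket{-x}$. The axiom \eqref{qt-ssprime} says that $S$ and $S'$ commute. Under $F$, $S$ becomes $SX$ and $S'$ becomes $K\,SX\,K=S'X^{-1}$. Commutation of $SX$ with $S'X^{-1}$ projectively amounts to a Weyl-operator computation: $XS X^{-1}=S\cdot Z^{\pm1}$ up to phase, and similarly for $S'$. The products $SX\,S'X^{-1}$ and $S'X^{-1}\,SX$ then differ by a nontrivial diagonal power of $Z$. That is not a scalar, so they are unequal in $\cat{Qutrit}^{\sim}$. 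This is a direct $3\times3$ check.

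\textbf{Preservation checks.} Axiom \eqref{qt-omegapow12} is immediate because scalars are trivial projectively. Axioms \eqref{qt-hpow4}, \eqref{qt-swapdecomp} and \eqref{qt-I} contain no $\gS$, so they hold by soundness of the standard interpretation.

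For the remaining axioms I would use one uniform idea. $SX$ is projectively a Clifford element whose action on the Pauli group is conjugation by $S$ followed by the shift $X$. So it is enough to compare, for each axiom containing $S$, the two sides as symplectic affine maps on $\mathbb{Z}_3^{2n}$. Concretely, I would track the Pauli correction term that the substitution introduces on each side and show that both sides acquire the same Pauli operator.

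\begin{itemize}
\item For \eqref{qt-spow3}: $(SX)^3=S\,(XSX^{-1})\,(X^2SX^{-2})\,X^3$, a product of diagonal phases. A direct computation should show that it is a scalar.
\item For \eqref{qt-shpow3}, \eqref{qt-cnotremove}, \eqref{qt-kcnot} and \eqref{qt-czdecomp}: propagate the inserted $X$'s through $H$, $K$ and CNOT using the known rules $HXH^{-1}=Z$, $KXK=X^{-1}$, and the fact that CNOT propagates $X$ on the control to $X\otimes X$. Then verify that the accumulated Pauli words agree on both sides.
\end{itemize}

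The main obstacle is the rule \eqref{qt-czdecomp} and the two-qutrit rule \eqref{qt-cnotremove}. In these, the $S$ occurrences sit on different wires or in different positions relative to CNOT, so the Pauli corrections could fail to cancel. Each side's substituted $S$ count is equal mod $3$ (cf.\ $\#\{\gS\}_{[3]}$), which suggests that the $X$ contributions balance. Still, I would carry out the explicit Pauli bookkeeping, or failing that compute the $9\times9$ matrices directly, and check proportionality.
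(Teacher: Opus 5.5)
There is a genuine gap: the proof cannot go through as planned, because $F$ does not respect one of the two-qutrit rules you intend to verify.

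\textbf{Where it breaks.} Your distinguishing step and your one-qutrit checks are correct. The failure is in the two-qutrit preservation checks, which you flag as the main obstacle and expect to close by Pauli bookkeeping. They cannot be closed, because $F$ does not respect \eqref{qt-cnotremove}. That rule is $C(S\otimes K)C=S\otimes K$, with $C$ the qutrit CNOT and $S$ on the control; this is the only sound placement. Using $C^{-1}(X\otimes I)C=X\otimes X^{-1}$ and soundness of the rule, you get
\[
F(\mathrm{LHS}) = C(SX\otimes K)C = (S\otimes K)\,C^{-1}(X\otimes I)C = SX\otimes KX^{-1},
\qquad
F(\mathrm{RHS}) = SX\otimes K .
\]
These differ by the non-scalar factor $I\otimes X^{-1}$. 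Placing $X$ before $S$ in the substitution gives the same obstruction. Your counting heuristic is also inverted. $\#\{S\}_{[3]}$ is the separator listed for \eqref{qt-czdecomp}, so the two sides of that rule have \emph{different} numbers of $S$ modulo $3$. So $F$ is not a separator in the sense of \cref{lem:separation} for the two-qutrit truncation, let alone for all arities.

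\textbf{The missing idea.} What you need is the arity reduction behind \cref{rem:arity-preservation}. All generators are endomorphisms, so a one-wire circuit contains only generators of arity at most $1$. Hence every rewrite step in a derivation between one-wire circuits instantiates an axiom of arity at most $1$. It therefore suffices that $F$ respect \eqref{qt-omegapow12}, \eqref{qt-hpow4}, \eqref{qt-spow3} and \eqref{qt-shpow3}. Formally, you apply the separation argument on the one-qutrit sub-signature, and this is exactly what the paper checks.

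\textbf{The checks that survive.} The remaining checks work as you sketch:
\begin{itemize}
\item $(SX)^3=e^{2\pi i/3}I$.
\item The $X$-corrections on the two sides of \eqref{qt-shpow3} are the same Pauli up to phase.
\item Since $XS'X^{-1}=S$ and $X^{-1}SX=S'$, the two sides of \eqref{qt-ssprime} are sent to $S^2$ and $S'^2$. Their quotient is $Z^{-1}$, which is not a scalar.
\end{itemize}
With the arity reduction, this shows \eqref{qt-ssprime} is independent of the whole rule set in every arity. Your remark about needing \eqref{qt-I} for the all-arity case then becomes unnecessary.
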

\begin{proof}
    The interpretation $\interp{\cdot}_{\mathit{Cliff3},\sim}^{\gS:=\gSX}$ respects all axioms of $\QCqutritclifford$ acting on at most one qutrit except \cref{qt-ssprime}. The checks are not merely syntactic: for instance, \eqref{qt-spow3} and \eqref{qt-shpow3} remain sound because the relevant images reduce as follows.
    \begin{gather*}
    \interp{\tikzfigM{./qutritcliffordaxioms/spow3}}_{\mathit{Cliff3},\sim}^{\gS:=\gSX}
    = \interp{\tikzfigM{./qutritderivations/SXSXSX}} 
    \eqeqref{qt-SXSXSX} \interp{\tikzfigM{./qutritderivations/I-phomega2}} \\
    \sim \interp{\tikzfigM{./cliffordaxioms/I}} 
    = \interp{\tikzfigM{./cliffordaxioms/I}}_{\mathit{Cliff3},\sim}^{\gS:=\gSX}
    \end{gather*}
    
    \begin{gather*}
    \interp{\tikzfigM{./qutritcliffordaxioms/shs}}_{\mathit{Cliff3},\sim}^{\gS:=\gSX}
    = \interp{\tikzfigM{./qutritderivations/SXHSX}} 
    \eqeqref{qt-HSXH} \interp{\tikzfigM{./qutritderivations/HHHSXSXHHH}} \\
    \sim \interp{\tikzfigM{./qutritcliffordaxioms/h3s2h3}}_{\mathit{Cliff3},\sim}^{\gS:=\gSX}
    \end{gather*}

    The axiom \eqref{qt-ssprime} is the point where the interpretation separates the theory.  Its two sides are sent to inequivalent elements of $\cat{Qudit}_3^{\sim}(1,1)$:
    \begin{eqnarray*}
        \interp{\tikzfigM{./qutritcliffordaxioms/ssprime}}_{\mathit{Cliff3},\sim}^{\gS:=\gSX}
        &=&
        \interp{\tikzfigM{./qutritderivations/SXKSXK}}
        \;\;\eqeqref{qt-SXKSXK}\;\;
        \interp{\tikzfigM{./qutritderivations/SpSp}},\\
        \interp{\tikzfigM{./qutritcliffordaxioms/sprimes}}_{\mathit{Cliff3},\sim}^{\gS:=\gSX}
        &=&
        \interp{\tikzfigM{./qutritderivations/KSXKSX}}
        \;\;\eqeqref{qt-KSXKSX}\;\;
        \interp{\tikzfigM{./qutritderivations/SS}}.
    \end{eqnarray*}

    All other axioms acting on at most one qutrit remain sound under this interpretation.  Removing \cref{qt-ssprime} would therefore identify too little and leave the resulting theory incomplete.
\end{proof}
\section{Auxiliary derivations for
  \texorpdfstring{$\QCcliffordplust$}{QCcliffordplust}}
\label{appendix:t-decoding-lemmas}


\begin{derivation}\label{t-Z2}
\begin{gather*}
    \input{identities/ZZ-00.tikz}
    \eqeqref{t-Z}\input{identities/ZZ-01.tikz}
    \eqeqref{t-S}\input{identities/ZZ-01b.tikz}
    \eqeqref{t-Tpow8}\input{identities/ZZ-02.tikz}
\end{gather*}
\end{derivation}

\begin{derivation}\label{t-X2}
\begin{gather*}
    \input{identities/XX-00.tikz}
    \eqeqref{t-X}\input{identities/XX-01.tikz}
    \eqeqref{t-H2}\input{identities/XX-02.tikz}
    \eqeqref{t-Z2}\input{identities/XX-03.tikz}
    \eqeqref{t-H2}\input{identities/XX-04.tikz}
\end{gather*}
\end{derivation}

\begin{derivation}\label{t-HX}
\begin{gather*}
    \input{identities/HX.tikz}
    \eqeqref{t-X}\input{identities/HHZH.tikz}
    \eqeqref{t-H2}\input{identities/ZH.tikz}
\end{gather*}
\end{derivation}

\begin{derivation}\label{t-XSX}
\begin{gather*}
    \input{identities/XSX-00.tikz}
    \eqeqref{t-X}\input{identities/XSX-01.tikz}\\
    \eqeqref{t-E}\input{identities/XSX-02.tikz}
    \eqeqref{t-Tpow8}\input{identities/XSX-03.tikz}\\
    \eqeqref{t-E}\input{identities/XSX-04.tikz}
    \eqdeuxeqref{t-Tpow8}{t-H2}\input{identities/XSX-05.tikz}
\end{gather*}
\end{derivation}

\begin{derivation}\label{t-CX2}
\begin{gather*}
    \input{cliffordaxioms/CNOTCNOT.tikz}
    \eqeqref{t-Tpow8}\input{identities/CNOTCNOT-01b.tikz}
    \eqeqref{t-Ct}\input{identities/CNOTCNOT-02b.tikz}\\
    \eqeqref{t-B}\input{identities/CNOTCNOT-03b.tikz}
    =\input{identities/CNOTCNOT-04b.tikz}\\
    \eqeqref{t-B}\input{identities/CNOTCNOT-05b.tikz}
    =\input{identities/CNOTCNOT-06b.tikz}\\
    =\input{identities/CNOTCNOT-07b.tikz}
    \eqeqref{t-Ct}\input{identities/CNOTCNOT-08b.tikz}
    \eqeqref{t-Tpow8}\input{identities/I2.tikz}
\end{gather*}
\end{derivation}

\begin{derivation}\label{t-C}
\begin{gather*}
    \input{cliffordplustaxioms/TCNOT.tikz}
    \eqeqref{t-Ct}\input{cliffordplustaxioms/TCNOT-01.tikz}
    \eqeqref{t-CX2}\input{cliffordplustaxioms/CNOTT.tikz}
\end{gather*}
\end{derivation}

\begin{derivation}\label{t-phasegadget}
\begin{gather*}
    \input{identities/phasegadget-00.tikz}
    =\input{identities/phasegadget-01.tikz}
    \eqeqref{t-B}\input{identities/phasegadget-02.tikz}\\
    \eqeqref{t-C}\input{identities/phasegadget-03.tikz}
    \eqeqref{t-CX2}\input{identities/phasegadget-04.tikz}
\end{gather*}
\end{derivation}

\begin{derivation}\label{t-phasegadgetCS}
\begin{gather*}
    \input{identities/phasegadgetCS-00.tikz}
    =\input{identities/phasegadgetCS-01.tikz}
    \eqeqref{t-B}\input{identities/phasegadgetCS-02.tikz}\\
    \eqeqref{t-C}\input{identities/phasegadgetCS-03.tikz}
    \eqeqref{t-CX2}\input{identities/phasegadgetCS-04.tikz}
\end{gather*}
\end{derivation}

\begin{derivation}\label{t-CNOTHH}
\begin{gather*}
    \input{identities/CNOTHH-00.tikz}
    \eqeqref{t-H2}\input{identities/CNOTHH-01.tikz}
    \eqeqref{t-CZ}\input{identities/CNOTHH-02.tikz}\\
    \eqeqref{t-phasegadget}\input{identities/CNOTHH-03.tikz}
    \eqeqref{t-CZ}\input{identities/CNOTHH-04.tikz}
    \eqeqref{t-H2}\input{identities/CNOTHH-05.tikz}
\end{gather*}
\end{derivation}

\begin{derivation}\label{t-czrev}
\begin{gather*}
    \input{identities/czrev-00.tikz}
    \eqeqref{t-H2}\input{identities/czrev-01.tikz}
    \eqeqref{t-CZ}\input{identities/czrev-02.tikz}
    \eqeqref{t-H2}\input{identities/czrev-03.tikz}
\end{gather*}
\end{derivation}

\begin{derivation}\label{t-XcommutCNOT}
\begin{gather*}
    \input{identities/XcommutCNOT-00.tikz}
    \eqtroiseqref{t-H2}{t-X}{t-Z}\input{identities/XcommutCNOT-01.tikz}
    \eqeqref{t-CNOTHH}\input{identities/XcommutCNOT-02.tikz}\\
    \eqeqref{t-C}\input{identities/XcommutCNOT-03.tikz}
    \eqeqref{t-CNOTHH}\input{identities/XcommutCNOT-04.tikz}
    \eqtroiseqref{t-H2}{t-X}{t-Z}\input{identities/XcommutCNOT-05.tikz}
\end{gather*}
\end{derivation}

\begin{derivation}\label{t-wCNOT}
\begin{gather*}
    \input{identities/wCNOT-00.tikz}
    \eqeqref{t-H2}\input{identities/wCNOT-01.tikz}
    \eqeqref{t-CZ}\input{identities/wCNOT-02.tikz}\\
    \eqeqref{t-phasegadget}\input{identities/wCNOT-03.tikz}
    \eqeqref{t-X}\input{identities/wCNOT-04.tikz}\\
    \eqdeuxeqref{t-Z}{t-C}\input{identities/wCNOT-05.tikz}
    \eqdeuxeqref{t-Z}{t-Sdagger}\input{identities/wCNOT-06.tikz}\\
    \eqeqref{t-XSX}\input{identities/wCNOT-07.tikz}\\
    \eqeqref{t-w8}\input{identities/wCNOT-08.tikz}\\
    \eqdeuxeqref{t-XcommutCNOT}{t-X2}\input{identities/wCNOT-09.tikz}
    \eqeqref{t-XcommutCNOT}\input{identities/wCNOT-10.tikz}\\
    \eqeqref{t-phasegadget}\input{identities/wCNOT-11.tikz}\\
    \eqeqref{t-CX2}\input{identities/wCNOT-12.tikz}\\
    \eqdeuxeqref{t-C}{t-phasegadget}\input{identities/wCNOT-13.tikz}\\
    \eqeqref{t-CZ}\input{identities/wCNOT-14.tikz}\\
    \eqeqref{t-CZ}\input{identities/wCNOT-15.tikz}\\
    \eqtroiseqref{t-Tpow8}{t-CX2}{t-C}\input{identities/wCNOT-16.tikz}
\end{gather*}
\end{derivation}

\begin{derivation}\label{t-zCNOT}
\begin{gather*}
    \input{identities/zCNOT-00.tikz}
    \eqdeuxeqref{t-H2}{t-CNOTHH}\input{identities/zCNOT-01.tikz}
    \eqeqref{t-X}\input{identities/zCNOT-02.tikz}
    \eqeqref{t-wCNOT}\input{identities/zCNOT-03.tikz}\\
    \eqeqref{t-X}\input{identities/zCNOT-04.tikz}
    \eqdeuxeqref{t-H2}{t-CNOTHH}\input{identities/zCNOT-05.tikz}
\end{gather*}
\end{derivation}

\begin{derivation}\label{t-HCh}
\begin{gather*}
    \input{cliffordplustaxioms/HCh-00.tikz}
    \eqeqref{t-ChWhite}\input{cliffordplustaxioms/HCh-01.tikz}
    \eqeqref{t-Ch}\input{cliffordplustaxioms/HCh-02.tikz}\\
    \eqeqref{t-E}\input{cliffordplustaxioms/HCh-03.tikz}
    \eqeqref{t-Tpow8}\input{cliffordplustaxioms/HCh-04.tikz}\\
    \eqeqref{t-Z}\input{cliffordplustaxioms/HCh-05.tikz}
    \eqeqref{t-HX}\input{cliffordplustaxioms/HCh-06.tikz}\\
    \eqeqref{t-TX}\input{cliffordplustaxioms/HCh-07.tikz}
    \eqeqref{t-XcommutCNOT}\input{cliffordplustaxioms/HCh-08.tikz}
    \eqeqref{t-Ch}\input{cliffordplustaxioms/HCh-09.tikz}\\
    \eqeqref{t-H2}\input{cliffordplustaxioms/HCh-10.tikz}
    \eqeqref{t-ChWhite}\input{cliffordplustaxioms/HCh-11.tikz}
\end{gather*}
\end{derivation}

\begin{derivation}\label{t-CNOTCH}
\begin{gather*}
    \input{cliffordplustaxioms/CNOTCH-00.tikz}
    \eqeqref{t-H2}\input{cliffordplustaxioms/CNOTCH-01.tikz}
    \eqeqref{t-HCh}\input{cliffordplustaxioms/CNOTCH-02.tikz}\\
    \eqeqref{t-CZ}\input{cliffordplustaxioms/CNOTCH-03.tikz}
    \eqeqref{t-S}\input{cliffordplustaxioms/CNOTCH-04.tikz}\\
    \eqdeuxeqref{t-S}{t-Tpow8}\input{cliffordplustaxioms/CNOTCH-05.tikz}\\
    \eqdeuxeqref{t-Sdagger}{t-Tpow8}\input{cliffordplustaxioms/CNOTCH-06.tikz}\\
    \eqdeuxeqref{t-C}{t-phasegadgetCS}\input{cliffordplustaxioms/CNOTCH-07.tikz}
    \eqeqref{t-Cs}\input{cliffordplustaxioms/CNOTCH-08.tikz}\\
    \eqeqref{t-CsCh}\input{cliffordplustaxioms/CNOTCH-09.tikz}
    \eqeqref{t-HCh}\input{cliffordplustaxioms/CNOTCH-10.tikz}\\
    \eqeqref{t-Cs}\input{cliffordplustaxioms/CNOTCH-11.tikz}\\
    \eqtroiseqref{t-C}{t-phasegadgetCS}{t-S}\input{cliffordplustaxioms/CNOTCH-12.tikz}\\
    \eqeqref{t-CX2}\input{cliffordplustaxioms/CNOTCH-13.tikz}\\
    \eqtroiseqref{t-S}{t-Tpow8}{t-Sdagger}\input{cliffordplustaxioms/CNOTCH-14.tikz}\\
    \eqeqref{t-CZ}\input{cliffordplustaxioms/CNOTCH-15.tikz}
    \eqeqref{t-H2}\input{cliffordplustaxioms/CNOTCH-16.tikz}
\end{gather*}
\end{derivation}

\begin{derivation}\label{t-CtrlH}
\begin{gather*}
    \input{cliffordplustaxioms/CtrlH.tikz}
    \eqeqref{t-ChWhite}\input{cliffordplustaxioms/CtrlH-00.tikz}
    \eqeqref{t-Ch}\input{cliffordplustaxioms/CtrlH-01.tikz}\\
    \eqeqref{t-E}\input{cliffordplustaxioms/CtrlH-02.tikz}\\
    \eqdeuxeqref{t-Z}{t-Tpow8}\input{cliffordplustaxioms/CtrlH-03.tikz}\\
    \eqeqref{t-HX}\input{cliffordplustaxioms/CtrlH-04.tikz}\\
    \eqeqref{t-TX}\input{cliffordplustaxioms/CtrlH-05.tikz}
    \eqdeuxeqref{t-Sdagger}{t-S}\input{cliffordplustaxioms/CtrlH-06.tikz}\\
    \eqeqref{t-CnotWhite}\input{cliffordplustaxioms/CtrlHFinal.tikz}
\end{gather*}
\end{derivation}

\begin{derivation}\label{t-XCHX}
\begin{gather*}
    \input{cliffordplustaxioms/XCHX.tikz}
    \eqeqref{t-CtrlH}\input{cliffordplustaxioms/XCHX-00.tikz}
    \eqeqref{t-CnotWhite}\input{cliffordplustaxioms/XCHX-01.tikz}\\
    \eqdeuxeqref{t-wCNOT}{t-X2}\input{shortcut/ChDecomp.tikz}
    \eqeqref{t-Ch}\input{shortcut/Ch.tikz}
\end{gather*}
\end{derivation}

\begin{derivation}\label{t-CHCH}
\begin{gather*}
    \input{cliffordplustaxioms/CHCH_00.tikz}
    \eqeqref{t-Ch}\input{cliffordplustaxioms/CHCH_01.tikz}\\
    \eqtroiseqref{t-S}{t-Tpow8}{t-H2}\input{cliffordplustaxioms/CHCH_02.tikz}
    \eqeqref{t-CX2}\input{cliffordplustaxioms/CHCH_03.tikz}
    \eqtroiseqref{t-S}{t-Tpow8}{t-H2}\input{identities/I2.tikz}
\end{gather*}
\end{derivation}

\begin{derivation}\label{t-CSCSCSCS}
\begin{gather*}
    \input{cliffordplustaxioms/CSCSCSCS_00.tikz}
    \eqeqref{t-Cs}\input{cliffordplustaxioms/CSCSCSCS_01.tikz}\\
    \eqdeuxeqref{t-C}{t-Z}\input{cliffordplustaxioms/CSCSCSCS_02.tikz}\\
    \eqeqref{t-phasegadgetCS}\input{cliffordplustaxioms/CSCSCSCS_03.tikz}\\
    \eqdeuxeqref{t-C}{t-Z}\input{cliffordplustaxioms/CSCSCSCS_04.tikz}\\
    \eqeqref{t-CX2}\input{cliffordplustaxioms/CSCSCSCS_05.tikz}
    \eqdeuxeqref{t-Tpow8}{t-Z}\input{cliffordplustaxioms/CSCSCSCS_06.tikz}
    \eqdeuxeqref{t-zCNOT}{t-Z2}\input{cliffordplustaxioms/CSCSCSCS_07.tikz}
    \eqeqref{t-CX2}\input{identities/I2.tikz}
\end{gather*}
\end{derivation}

\begin{derivation}\label{t-CSCS}
\begin{gather*}
    \input{cliffordplustaxioms/CSCS_00.tikz}
    \eqeqref{t-Cs}\input{cliffordplustaxioms/CSCS_01.tikz}
    \eqdeuxeqref{t-C}{t-S}\input{cliffordplustaxioms/CSCS_02.tikz}\\
    \eqeqref{t-phasegadgetCS}\input{cliffordplustaxioms/CSCS_03.tikz}
    \eqdeuxeqref{t-C}{t-S}\input{cliffordplustaxioms/CSCS_04.tikz}\\
    \eqeqref{t-CX2}\input{cliffordplustaxioms/CSCS_05.tikz}\\
    \eqdeuxeqref{t-Tpow8}{t-S}\input{cliffordplustaxioms/CSCS_06.tikz}
    \eqeqref{t-phasegadget}\input{cliffordplustaxioms/CSCS_07.tikz}
    \eqeqref{t-CZ}\input{cliffordplustaxioms/CSCS_08.tikz}
\end{gather*}
\end{derivation}

\begin{derivation}\label{t-SdCS}
\begin{gather*}
    \input{cliffordplustaxioms/SdCS_00.tikz}
    \eqeqref{t-Cs}\input{cliffordplustaxioms/SdCS_01.tikz}
    \eqeqref{t-Tdagger}\input{cliffordplustaxioms/SdCS_02.tikz}
    \eqeqref{t-phasegadgetCS}\input{cliffordplustaxioms/SdCS_03.tikz}\\
    \eqeqref{t-XcommutCNOT}\input{cliffordplustaxioms/SdCS_04.tikz}
    \eqdeuxeqref{t-TX}{t-X2}\input{cliffordplustaxioms/SdCS_05.tikz}\\
    \eqdeuxeqref{t-Z}{t-C}\input{cliffordplustaxioms/SdCS_06.tikz}\\
    \eqeqref{t-zCNOT}\input{cliffordplustaxioms/SdCS_07.tikz}
    \eqdeuxeqref{t-C}{t-Tdagger}\input{cliffordplustaxioms/SdCS_08.tikz}\\
    \eqeqref{t-CX2}\input{cliffordplustaxioms/SdCS_09.tikz}\\
    \eqeqref{t-C}\input{cliffordplustaxioms/SdCS_10.tikz}\\
    \eqeqref{t-phasegadgetCS}\input{cliffordplustaxioms/SdCS_11.tikz}\\
    \eqeqref{t-C}\input{cliffordplustaxioms/SdCS_12.tikz}
    \eqdeuxeqref{t-Cs}{t-Csdagger}\input{cliffordplustaxioms/SdCS_13.tikz}
\end{gather*}
\end{derivation}

\begin{derivation}\label{t-SCSd}
\begin{gather*}
    \input{cliffordplustaxioms/SCSd_00.tikz}
    \eqdeuxeqref{t-Csdagger}{t-Tpow8}\input{cliffordplustaxioms/SCSd_01.tikz}
    \eqtroiseqref{t-C}{t-Cs}{t-phasegadgetCS}\input{cliffordplustaxioms/SCSd_02.tikz}\\
    \eqeqref{t-SdCS}\input{cliffordplustaxioms/SCSd_03.tikz}
    \eqeqref{t-X2}\input{cliffordplustaxioms/SCSd_04.tikz}\\
    \eqdeuxeqref{t-Csdagger}{t-CSCSCSCS}\input{cliffordplustaxioms/SCSd_05.tikz}
\end{gather*}
\end{derivation}

\begin{derivation}\label{t-HSTH}
\begin{gather*}
    \input{cliffordplustaxioms/HSTH_00.tikz}
    \eqeqref{t-CSCSCSCS}\input{cliffordplustaxioms/HSTH_01.tikz}\\
    \eqeqref{t-SCSd}\input{cliffordplustaxioms/HSTH_02.tikz}\\
    \eqeqref{t-XCHX}\input{cliffordplustaxioms/HSTH_03.tikz}
    \eqeqref{t-CsCh}\input{cliffordplustaxioms/HSTH_04.tikz}\\
    \eqeqref{t-XCHX}\input{cliffordplustaxioms/HSTH_05.tikz}\\
    \eqtroiseqref{t-Cs}{t-phasegadgetCS}{t-C}\input{cliffordplustaxioms/HSTH_06.tikz}
    \eqeqref{t-CsCh}\input{cliffordplustaxioms/HSTH_07.tikz}\\
    \eqtroiseqref{t-Cs}{t-phasegadgetCS}{t-C}\input{cliffordplustaxioms/HSTH_08.tikz}
    \eqeqref{t-SdCS}\input{cliffordplustaxioms/HSTH_09.tikz}\\
    \eqeqref{t-HTHT}\input{cliffordplustaxioms/HSTH_10.tikz}
    \eqeqref{t-XCHX}\input{cliffordplustaxioms/HSTH_11.tikz}\\
    \eqeqref{t-CsCh}\input{cliffordplustaxioms/HSTH_12.tikz}
    \eqeqref{t-XCHX}\input{cliffordplustaxioms/HSTH_13.tikz}\\
    \eqdeuxeqref{t-SdCS}{t-SCSd}\input{cliffordplustaxioms/HSTH_14.tikz}\\
    \eqtroiseqref{t-Cs}{t-phasegadgetCS}{t-C}\input{cliffordplustaxioms/HSTH_15.tikz}
    \eqeqref{t-CsCh}\input{cliffordplustaxioms/HSTH_16.tikz}\\
    \eqeqref{t-CSCSCSCS}\input{cliffordplustaxioms/HSTH_17.tikz}
\end{gather*}
\end{derivation}


\begin{derivation}\label{lem:t-C6-dec-eq}
\begin{gather*}
    \input{cliffordplustaxioms/TTHTTHTTH.tikz}
    \eqeqref{t-E}\input{cliffordplustaxioms/TTHHTTTTTTHH.tikz}
    \eqdeuxeqref{t-H2}{t-Tpow8}\input{cliffordczaxioms/C4-03.tikz}
\end{gather*}
\end{derivation}

\begin{derivation}\label{lem:t-C7-dec-eq}
\begin{gather*}
    \input{cliffordplustaxioms/C7-00.tikz}
    \eqeqref{t-H2}\input{cliffordplustaxioms/C7-01.tikz}
    \eqeqref{t-CX2}\input{cliffordplustaxioms/C7-02.tikz}
    \eqeqref{t-H2}\input{cliffordczaxioms/C5-04.tikz}
\end{gather*}
\end{derivation}

\begin{derivation}\label{lem:t-C10-dec-eq}
\begin{gather*}
    \input{cliffordczaxioms/C8-01.tikz}
    \eqeqref{t-X}\input{cliffordczaxioms/C8-02.tikz}
    \eqeqref{t-X2}\input{cliffordczaxioms/C8-03.tikz}\\
    \eqeqref{t-wCNOT}\input{cliffordczaxioms/C8-04.tikz}
    \eqdeuxeqref{t-X}{t-Z}\input{cliffordczaxioms/C8-05.tikz}\\
    \eqeqref{t-X}\input{cliffordczaxioms/C8-06.tikz}
    \eqeqref{t-Z}\input{cliffordczaxioms/C8-07.tikz}
\end{gather*}
\end{derivation}

\begin{derivation}\label{lem:t-C11-dec-eq}
\begin{gather*}
    \input{cliffordczaxioms/C9-01.tikz}
    \eqeqref{t-czrev}\input{cliffordczaxioms/C9-01b.tikz}\\
    =\input{cliffordczaxioms/C9-0ab.tikz}
    \eqeqref{lem:t-C10-dec-eq}\input{cliffordczaxioms/C9-0bb.tikz}\\
    \eqeqref{t-czrev}\input{cliffordczaxioms/C9-0cb.tikz}
    =\input{cliffordczaxioms/C9-08.tikz}
\end{gather*}
\end{derivation}

\begin{derivation}\label{lem:t-C12-dec-eq}
\begin{gather*}
    \input{cliffordczaxioms/C10-01.tikz}
    \eqeqref{t-E}\input{cliffordczaxioms/C10-02.tikz}\\
    \eqeqref{t-C}\input{cliffordczaxioms/C10-03.tikz}
    \eqeqref{t-CNOTHH}\input{cliffordczaxioms/C10-04.tikz}\\
    \eqeqref{t-H2}\input{cliffordczaxioms/C10-05.tikz}
    \eqeqref{t-Tpow8}\input{cliffordczaxioms/C10-06.tikz}\\
    \eqeqref{t-C}\input{cliffordczaxioms/C10-07.tikz}
    \eqeqref{t-phasegadget}\input{cliffordczaxioms/C10-08.tikz}\\
    \eqeqref{t-zCNOT}\input{cliffordczaxioms/C10-09.tikz}\\
    \eqdeuxeqref{t-Z}{t-Sdagger}\input{cliffordczaxioms/C10-10.tikz}\\
    \eqeqref{t-Z}\input{cliffordczaxioms/C10-11.tikz}\\
    \eqeqref{t-C}\input{cliffordczaxioms/C10-12.tikz}\\
    \eqeqref{t-E}\input{cliffordczaxioms/C10-13.tikz}
\end{gather*}
\end{derivation}

\begin{derivation}\label{lem:t-C13-dec-eq}
\begin{gather*}
    \input{cliffordczaxioms/C11-00b.tikz}
    \eqeqref{t-czrev}\input{cliffordczaxioms/C11-0ab.tikz}\\
    \eqeqref{lem:t-C12-dec-eq}\input{cliffordczaxioms/C11-0bb.tikz}\\
    \eqeqref{t-czrev}\input{cliffordczaxioms/C11-0cb.tikz}\\
    =\input{cliffordczaxioms/C11-01b.tikz}
\end{gather*}
\end{derivation}

\begin{derivation}\label{lem:t-C17-dec-eq}
\begin{gather*}
    \input{cliffordplustaxioms/C17-01.tikz}
    \eqeqref{t-H2}\input{cliffordplustaxioms/C17-02.tikz}\\
    \eqeqref{t-czrev}\input{cliffordplustaxioms/C17-03.tikz}
    \eqeqref{t-H2}\input{cliffordplustaxioms/C17-04.tikz}\\
    \eqeqref{t-B}\input{cliffordplustaxioms/C17-05.tikz}
    \eqeqref{t-C}\input{cliffordplustaxioms/C17-06.tikz}
    \eqeqref{t-B}\input{cliffordplustaxioms/C17-07.tikz}\\
    \eqeqref{t-H2}\input{cliffordplustaxioms/C17-08.tikz}
    \eqeqref{t-czrev}\input{cliffordplustaxioms/C17-09.tikz}
\end{gather*}
\end{derivation}

\begin{derivation}\label{lem:t-C18-dec-eq}
\begin{gather*}
    \input{cliffordplustaxioms/C18l.tikz}
    \eqdeuxeqref{t-XcommutCNOT}{t-X2}\input{cliffordplustaxioms/C18-01.tikz}\\
    \eqeqref{t-TX}\input{cliffordplustaxioms/C18-02.tikz}\\
    \eqeqref{t-HX}\input{cliffordplustaxioms/C18-02b.tikz}\\
    \eqtroiseqref{t-Z}{t-Tpow8}{t-CtrlH}\input{cliffordplustaxioms/C18-03.tikz}
    \eqeqref{t-Z}\input{cliffordplustaxioms/C18-04.tikz}\\
    \eqeqref{t-zCNOT}\input{cliffordplustaxioms/C18-05.tikz}
    \eqeqref{t-CX2}\input{cliffordplustaxioms/C18-06.tikz}\\
    \eqeqref{t-Tpow8}\input{cliffordplustaxioms/C18-07.tikz}
    \eqeqref{t-Cs}\input{cliffordplustaxioms/C18-08.tikz}\\
    \eqdeuxeqref{t-CsCh}{t-CNOTCH}\input{cliffordplustaxioms/C18-09.tikz}
    \eqeqref{t-C}\input{cliffordplustaxioms/C18-10.tikz}\\
    \eqeqref{t-Cs}\input{cliffordplustaxioms/C18-11.tikz}
    \eqdeuxeqref{t-Tpow8}{t-CX2}\input{cliffordplustaxioms/C18-12.tikz}\\
    \eqtroiseqref{t-Tpow8}{t-Z}{t-Sdagger}\input{cliffordplustaxioms/C18-13.tikz}
    \eqeqref{t-zCNOT}\input{cliffordplustaxioms/C18-14.tikz}\\
    \eqeqref{t-CtrlH}\input{cliffordplustaxioms/C18-15.tikz}\\
    \eqdeuxeqref{t-Z}{t-Tpow8}\input{cliffordplustaxioms/C18-16.tikz}\\
    \eqeqref{t-HX}\input{cliffordplustaxioms/C18-17.tikz}
    \eqeqref{t-TX}\input{cliffordplustaxioms/C18-18.tikz}\\
    \eqdeuxeqref{t-XcommutCNOT}{t-X2}\input{cliffordplustaxioms/C18-19.tikz}
\end{gather*}
\end{derivation}

\begin{derivation}\label{lem:t-C19-dec-eq}
\begin{gather*}
    \input{cliffordplustaxioms/C19l.tikz}\\
    \eqdeuxeqref{t-XcommutCNOT}{t-X2}\input{cliffordplustaxioms/C19_01.tikz}\\
    \eqeqref{t-TX}\input{cliffordplustaxioms/C19_02.tikz}\\
    \eqeqref{t-HX}\input{cliffordplustaxioms/C19_03.tikz}\\
    \eqtroiseqref{t-Z}{t-Tpow8}{t-CtrlH}\input{cliffordplustaxioms/C19_04.tikz}\\
    \eqeqref{t-CX2}\input{cliffordplustaxioms/C19_05.tikz}\\
    \eqdeuxeqref{t-Cs}{t-Tpow8}\input{cliffordplustaxioms/C19_06.tikz}\\
    \eqeqref{t-H2}\input{cliffordplustaxioms/C19_07.tikz}\\
    \eqeqref{t-czrev}\input{cliffordplustaxioms/C19_08.tikz}\\
    \eqeqref{t-C}\input{cliffordplustaxioms/C19_09.tikz}\\
    \eqeqref{t-czrev}\input{cliffordplustaxioms/C19_10.tikz}\\
    \eqeqref{t-CSCS}\input{cliffordplustaxioms/C19_11.tikz}\\
    \eqeqref{t-CsCh}\input{cliffordplustaxioms/C19_12.tikz}\\
    \eqeqref{t-C}\input{cliffordplustaxioms/C19_13.tikz}\\
    \eqeqref{t-CSCS}\input{cliffordplustaxioms/C19_14.tikz}\\
    \eqdeuxeqref{t-czrev}{t-C}\input{cliffordplustaxioms/C19_15.tikz}\\
    \eqdeuxeqref{t-H2}{t-C}\input{cliffordplustaxioms/C19_16.tikz}\\
    \eqeqref{t-Tpow8}\input{cliffordplustaxioms/C19_17.tikz}\\
    \eqeqref{t-Cs}\input{cliffordplustaxioms/C19_18.tikz}\\
    \eqeqref{t-S}\input{cliffordplustaxioms/C19_19.tikz}\\
    \eqeqref{t-CHCH}\input{cliffordplustaxioms/C19_20.tikz}\\
    \eqdeuxeqref{t-HCh}{t-ChWhite}\input{cliffordplustaxioms/C19_22.tikz}\\
    \eqeqref{t-HSTH}\input{cliffordplustaxioms/C19_23.tikz}\\
    \eqeqref{t-CsCh}\input{cliffordplustaxioms/C19_24.tikz}\\
    \eqtroiseqref{t-Cs}{t-phasegadgetCS}{t-C}\input{cliffordplustaxioms/C19_25a.tikz}\\
    \eqeqref{t-CsCh}\input{cliffordplustaxioms/C19_25b.tikz}\\
    \eqtroiseqref{t-Cs}{t-phasegadgetCS}{t-C}\input{cliffordplustaxioms/C19_25.tikz}\\
    \eqeqref{t-ChWhite}\input{cliffordplustaxioms/C19_26.tikz}\\
    \eqeqref{t-CsCh}\input{cliffordplustaxioms/C19_27.tikz}\\
    \eqeqref{t-CSCSCSCS}\input{cliffordplustaxioms/C19_28.tikz}\\
    \eqdeuxeqref{t-ChWhite}{t-HCh}\input{cliffordplustaxioms/C19_29.tikz}\\
    \eqeqref{t-HSTH}\input{cliffordplustaxioms/C19_30.tikz}\\
    \eqeqref{t-CHCH}\input{cliffordplustaxioms/C19_31.tikz}\\
    \eqtroiseqref{t-Z}{t-Tpow8}{t-CtrlH}\input{cliffordplustaxioms/C19_32.tikz}\\
    \eqeqref{t-HX}\input{cliffordplustaxioms/C19_33.tikz}\\
    \eqeqref{t-TX}\input{cliffordplustaxioms/C19_34.tikz}\\
    \eqdeuxeqref{t-XcommutCNOT}{t-X2}\input{cliffordplustaxioms/C19r.tikz}
\end{gather*}
\end{derivation}

\begin{derivation}\label{lem:t-dswap-unfolding}
\begin{gather*}
    \input{cliffordczaxioms/DSWAPdef-00.tikz}
    \eqeqref{t-H2}\input{cliffordczaxioms/DSWAPdef-00b.tikz}
    \eqeqref{t-CX2}\input{cliffordczaxioms/DSWAPdef-01.tikz}
    \eqeqref{t-B}\input{cliffordczaxioms/DSWAPdef-01b.tikz}\\
    \eqeqref{t-H2}\input{cliffordczaxioms/DSWAPdef-03.tikz}\\
    \eqeqref{t-czrev}\input{cliffordczaxioms/DSWAPdef-03b.tikz}
\end{gather*}
\end{derivation}
\section{Auxiliary derivations for
  \texorpdfstring{$\QCcliffordpluscs$}{QCcliffordpluscs}}
\label{appendix:cs-decoding-lemmas}

\begin{derivation}\label{lem:cs-C4-dec-eq}
\begin{gather*}
    \input{cliffordpluscsderivations/C4-00.tikz} 
    \eqeqref{cs-E} \input{cliffordpluscsderivations/C4-01.tikz}\\
    \eqeqref{cs-S4}\input{cliffordpluscsderivations/C4-02.tikz}
    \eqeqref{cs-H2}\input{cliffordpluscsderivations/w6.tikz}
\end{gather*}
\end{derivation}

\begin{derivation}\label{lem:cs-C5-dec-eq}
\begin{gather*}
    \input{cliffordpluscsderivations/C5-00.tikz} 
    \eqeqref{cs-H2} \input{cliffordpluscsderivations/C5-01.tikz}\\
    \eqeqref{cs-CNOT}\input{cliffordpluscsderivations/C5-02.tikz}
    \eqdeuxeqref{cs-S4}{cs-H2}\input{cliffordpluscsderivations/C5-03.tikz}\\
    \eqeqref{cs-C}\input{cliffordpluscsderivations/C5-04.tikz}\\
    \eqeqref{cs-H2}\input{cliffordpluscsderivations/C5-05.tikz}\\
    \eqeqref{cs-CNOT}\input{cliffordpluscsderivations/C5-06.tikz}\\
    \eqeqref{cs-B}\input{cliffordpluscsderivations/C5-07.tikz}
    = \input{cliffordpluscsderivations/C5-08.tikz}\\
    \eqeqref{cs-B}\input{cliffordpluscsderivations/C5-09.tikz}
    = \input{cliffordpluscsderivations/C5-10.tikz}\\
    \eqeqref{cs-CNOT}\input{cliffordpluscsderivations/C5-11.tikz}\\
    \eqeqref{cs-H2}\input{cliffordpluscsderivations/C5-12.tikz}\\
    \eqeqref{cs-C}\input{cliffordpluscsderivations/C5-13.tikz}
    \eqeqref{cs-S4} \input{cliffordpluscsderivations/C5-15.tikz}
    \eqeqref{cs-H2} \input{identities/I2.tikz}
\end{gather*}
\end{derivation}

\begin{derivation}\label{lem:cs-C6-dec-eq}
\begin{gather*}
    \input{cliffordpluscsaxioms/ScommTopLeft.tikz}
    \eqeqref{cs-C}\input{cliffordpluscsderivations/C6-01.tikz}
    \eqeqref{lem:cs-C5-dec-eq} \input{cliffordpluscsaxioms/ScommTopRight.tikz}
\end{gather*}
\end{derivation}

\begin{derivation}\label{lem:cs-C7-dec-eq}
\begin{gather*}
    \input{cliffordpluscsaxioms/ScommBotLeft.tikz}
    \eqeqref{cs-C}\input{cliffordpluscsderivations/C7-01.tikz}\\
    \eqeqref{cs-CSrev}\input{cliffordpluscsderivations/C7-02.tikz}
    \eqeqref{lem:cs-C5-dec-eq} \input{cliffordpluscsderivations/C7-03.tikz}
    = \input{cliffordpluscsaxioms/ScommBotRight.tikz}
\end{gather*}
\end{derivation}

\begin{derivation}\label{lem:cs-C8-dec-eq}
\begin{gather*}
    \input{cliffordpluscsderivations/C8-00.tikz}
    \eqeqref{cs-X}\input{cliffordpluscsderivations/C8-01.tikz}\\
    \eqeqref{cs-XCS}\input{cliffordpluscsderivations/C8-02.tikz}
    \eqeqref{cs-X}\input{cliffordpluscsderivations/C8-03.tikz}
\end{gather*}
\end{derivation}

\begin{derivation}\label{lem:cs-C9-dec-eq}
\begin{gather*}
    \input{cliffordpluscsderivations/C9-00.tikz}
    \eqeqref{cs-CSrev}\input{cliffordpluscsderivations/C9-01.tikz}\\
    \eqeqref{lem:cs-C8-dec-eq}\input{cliffordpluscsderivations/C9-02.tikz}
    \eqeqref{cs-CSrev}\input{cliffordpluscsderivations/C9-03.tikz}
\end{gather*}
\end{derivation}

\begin{derivation}\label{lem:cs-C11-dec-eq}
\begin{gather*}
    \input{cliffordpluscsderivations/C11-00.tikz}
    \eqeqref{cs-CSrev}\input{cliffordpluscsderivations/C11-01.tikz}\\
    \eqeqref{cs-SHCHC}\input{cliffordpluscsderivations/C11-02.tikz}
    \eqeqref{cs-CSrev}\input{cliffordpluscsderivations/C11-03.tikz}
\end{gather*}
\end{derivation}

\begin{derivation}\label{lem:cs-C12-dec-eq}
\begin{gather*}
    \input{cliffordpluscsaxioms/ICSCSI.tikz}
    \eqeqref{cs-H2}\input{cliffordpluscsderivations/C12-00.tikz}
    \eqeqref{cs-U}\input{cliffordpluscsderivations/C12-01.tikz}
    \eqeqref{cs-H2}\input{cliffordpluscsaxioms/CSIICS.tikz}
\end{gather*}
\end{derivation}

\begin{derivation}\label{lem:cs-C13-dec-eq}
\begin{gather*}
    \input{cliffordpluscsaxioms/C13l.tikz}
    \eqeqref{cs-B}\input{cliffordpluscsderivations/C13-02.tikz}
    \eqeqref{lem:cs-C12-dec-eq}\input{cliffordpluscsderivations/C13-03.tikz}\\
    \eqeqref{cs-CNOT}\input{cliffordpluscsderivations/C13-04.tikz}
    \eqeqref{cs-H2}\input{cliffordpluscsderivations/C13-05.tikz}\\
    \eqeqref{lem:cs-C5-dec-eq}\input{cliffordpluscsderivations/C13-06.tikz}
    \eqeqref{cs-H2}\input{cliffordpluscsderivations/C13-07.tikz}
    = \input{cliffordpluscsderivations/C13-08.tikz}
    \eqeqref{cs-H2}\input{cliffordpluscsderivations/C13-09.tikz}\\
    \eqeqref{lem:cs-C5-dec-eq}\input{cliffordpluscsderivations/C13-10.tikz}
    \eqeqref{lem:cs-C12-dec-eq}\input{cliffordpluscsderivations/C13-11.tikz}\\
    \eqeqref{cs-H2}\input{cliffordpluscsderivations/C13-12.tikz}
    \eqeqref{cs-CNOT}\input{cliffordpluscsderivations/C13-13.tikz}\\
    \eqeqref{cs-B}\input{cliffordpluscsaxioms/C13r.tikz}
\end{gather*}
\end{derivation}

\begin{derivation}\label{lem:cs-C14-dec-eq}
\begin{gather*}
    \input{cliffordpluscsaxioms/C14l.tikz}
    \eqdeuxeqref{cs-CNOT}{cs-I}\input{cliffordpluscsderivations/C14-01.tikz}\\
    \eqeqref{lem:cs-C5-dec-eq}\input{cliffordpluscsderivations/C14-02.tikz}\\
    \eqtroiseqref{lem:cs-C5-dec-eq}{cs-H2}{cs-CNOT}\input{cliffordpluscsderivations/C14-03.tikz}\\
    \eqdeuxeqref{cs-CNOT}{cs-U}\input{cliffordpluscsderivations/C14-04.tikz}\\
    \eqdeuxeqref{lem:cs-C12-dec-eq}{lem:cs-C5-dec-eq}\input{cliffordpluscsderivations/C14-05.tikz}\\
    \eqtroiseqref{lem:cs-C12-dec-eq}{cs-H2}{lem:cs-C5-dec-eq}\input{cliffordpluscsderivations/C14-06.tikz}
    \eqdeuxeqref{lem:cs-C12-dec-eq}{cs-I}\input{cliffordpluscsaxioms/C14r.tikz}\\
\end{gather*}
\end{derivation}

\begin{derivation}\label{lem:cs-C15-dec-eq}
\begin{gather*}
    \input{cliffordpluscsaxioms/C15l.tikz}
    \eqeqref{cs-H2}\input{cliffordpluscsderivations/C15-01.tikz}
    \eqeqref{cs-CNOT}\input{cliffordpluscsderivations/C15-02.tikz}\\
    \eqeqref{cs-B}\input{cliffordpluscsderivations/C15-03.tikz}
    \eqtroiseqref{cs-CNOT}{cs-H2}{lem:cs-C5-dec-eq}\input{cliffordpluscsderivations/C15-04.tikz}
    =\input{cliffordpluscsderivations/C15-05.tikz}\\
    \eqtroiseqref{cs-CNOT}{cs-H2}{lem:cs-C5-dec-eq}\input{cliffordpluscsderivations/C15-06.tikz}
    \eqeqref{cs-I}\input{cliffordpluscsderivations/C15-07.tikz}\\
    \eqdeuxeqref{cs-CNOT}{cs-H2}\input{cliffordpluscsaxioms/C15r.tikz}
\end{gather*}
\end{derivation}

\begin{derivation}\label{lem:cs-C17-dec-eq}
\begin{gather*}
    \input{cliffordpluscsderivations/C17-00.tikz}
    \eqeqref{lem:cs-C5-dec-eq} \input{cliffordpluscsderivations/C17-01.tikz}\\
    \eqeqref{cs-U} \input{cliffordpluscsderivations/C17-02.tikz}\\
    \eqdeuxeqref{lem:cs-C5-dec-eq}{lem:cs-C12-dec-eq} \input{cliffordpluscsderivations/C17-final.tikz}
\end{gather*}
\end{derivation}

\begin{derivation}\label{lem:cs-dswap-unfolding}
\begin{gather*}
    \input{gates/SWAP.tikz}
    \eqdeuxeqref{cs-H2}{lem:cs-C5-dec-eq}\input{cliffordpluscsderivations/DSWAP-01.tikz}\\
    \eqeqref{cs-CNOT}\input{cliffordpluscsderivations/DSWAP-02.tikz}
    \eqeqref{cs-B}\input{cliffordpluscsderivations/DSWAP-03.tikz}\\
    \eqdeuxeqref{cs-CNOT}{cs-w8}\input{cliffordpluscsderivations/DSWAP-dfinal.tikz}
\end{gather*}
\end{derivation}

\section[Proofs for Lemma~\ref{lem:decenc}]{Proofs for \cref{lem:decenc}}\label{app:proofs-lem1}

\subsection{Clifford}\label{app:proofs-lem1-clifford}
\(
  \begin{aligned}
    D\bigl(E(\input{gates/CNOT.tikz})\bigr)
    &= D\bigl(\input{completeness/H2CZH2.tikz}\bigr)
    = \input{completeness/HH2CNOTHH2.tikz}
    \eqeqref{H2} \input{gates/CNOT.tikz}.
  \end{aligned}
\)

\subsection{Real Clifford}\label{app:proofs-lem1-realclifford}
\(
  \begin{aligned}
    D\bigl(E(\input{gates/CNOT.tikz})\bigr)
    &= D\bigl(\input{completeness/H2CZH2.tikz}\bigr)
    = \input{completeness/HH2CNOTHH2.tikz}
    \eqeqref{realH2} \input{gates/CNOT.tikz}.
  \end{aligned}
\)

\subsection{Qutrit}\label{app:proofs-lem1-qutrit}
\(
  \begin{aligned}
    D\bigl(E(\input{gates/CNOT.tikz})\bigr)
    &= D\bigl(\input{completeness/H2CZHHH2.tikz}\bigr)
    = \input{completeness/HHHH2CNOTHHHH2.tikz}
    \eqdeuxeqref{qt-hpow4}{qt-omegapow12} \input{gates/CNOT.tikz},\\
    D\bigl(E(\input{gates/H.tikz})\bigr)
    &= D\bigl(\input{gates/Hbisp.tikz}\bigr)
    = \input{gates/Hbist.tikz}
    \eqeqref{qt-omegapow12} \input{gates/H.tikz},\\
    D\bigl(E(\input{gates/S.tikz})\bigr)
    &= D\bigl(\input{gates/Sbis.tikz}\bigr)
    = \input{gates/St.tikz}
    \eqdeuxeqref{qt-omegapow12}{qt-spow3} \input{gates/S.tikz}.
  \end{aligned}
\)

\subsection{Clifford+T}\label{app:proofs-lem1-cliffordt}
\(
  \begin{aligned}
    D\bigl(E(\input{gates/CNOT.tikz})\bigr)
    &= D\bigl(\input{completeness/H2CZH2.tikz}\bigr)
    = \input{completeness/HH2CNOTHH2.tikz}
    \eqeqref{t-H2} \input{gates/CNOT.tikz}.
  \end{aligned}
\)

\subsection{Clifford+CS}\label{app:proofs-lem1-cliffordcs}
\(
  \begin{aligned}
    D\bigl(E(\input{gates/H.tikz})\bigr)
    &= D\bigl(\input{gates/Kw.tikz}\bigr)
    = \input{gates/Hww.tikz}
    \eqeqref{cs-w8} \input{gates/H.tikz}.
  \end{aligned}
\)
\section{Proof of \cref{lem:decrelations}}\label{appendix:decodage}
\subsection{Clifford decoding}
\begin{proof}[Proof of the decoding of \cref{DSWAPdef}]
    \begin{gather*}
        D\left(\input{gates/SWAP.tikz}\right)
        = \input{gates/SWAP.tikz}\\
        \eqeqref{lem:dswap-unfolding} \input{cliffordczaxioms/DSWAPdef-03b.tikz}
        = D\left(\input{cliffordczaxioms/DSWAPdef-02.tikz}\right)
    \end{gather*}
\end{proof}

\begin{proof}[Proof of the decoding of \cref{C1}, \cref{C2}, \cref{C3}]
    Since the decoding only affects the $\gCZ$ gate, these are trivial.
\end{proof}

\begin{proof}[Proof of the decoding of \cref{C4}]
    \begin{gather*}
        D\left(\input{cliffordczaxioms/C4-00.tikz}\right)
        = \input{cliffordczaxioms/C4-00.tikz}\\
        \eqeqref{lem:C4-dec-eq} \input{cliffordczaxioms/C4-03.tikz}
        = D\left(\input{cliffordczaxioms/C4-03.tikz}\right)
    \end{gather*}
\end{proof}

\begin{proof}[Proof of the decoding of \cref{C5}]
    \begin{gather*}
        D\left(\input{cliffordczaxioms/C5-00.tikz}\right)
        = \input{cliffordczaxioms/C5-01.tikz}
        \eqeqref{lem:C5-dec-eq} \input{cliffordczaxioms/C5-04.tikz}
        = D\left(\input{identities/I2.tikz}\right)
    \end{gather*}
\end{proof}

\begin{proof}[Proof of the decoding of \cref{C6}]
    \begin{gather*}
        D\left(\input{cliffordczaxioms/C6-00.tikz}\right)
        = \input{cliffordczaxioms/C6-01.tikz}
        \eqeqref{lem:C6-dec-eq} \input{cliffordczaxioms/C6-02.tikz}
        = D\left(\input{cliffordczaxioms/C6-03.tikz}\right)
    \end{gather*}
\end{proof}

\begin{proof}[Proof of the decoding of \cref{C7}]
    \begin{gather*}
        D\left(\input{cliffordczaxioms/C7-00.tikz}\right)
        = \input{cliffordczaxioms/C7-01.tikz}
        \eqeqref{lem:C7-dec-eq} \input{cliffordczaxioms/C7-04.tikz}
        = D\left(\input{cliffordczaxioms/C7-05.tikz}\right)
    \end{gather*}
\end{proof}

\begin{proof}[Proof of the decoding of \cref{C8}]
    \begin{gather*}
        D\left(\input{cliffordczaxioms/C8-00.tikz}\right)
        = \input{cliffordczaxioms/C8-01.tikz}\\
        \eqeqref{lem:C8-dec-eq} \input{cliffordczaxioms/C8-07.tikz}
        = D\left(\input{cliffordczaxioms/C8-08.tikz}\right)
    \end{gather*}
\end{proof}

\begin{proof}[Proof of the decoding of \cref{C9}]
    \begin{gather*}
        D\left(\input{cliffordczaxioms/C9-00.tikz}\right)
        = \input{cliffordczaxioms/C9-01.tikz}\\
        \eqeqref{lem:C9-dec-eq} \input{cliffordczaxioms/C9-08.tikz}
        = D\left(\input{cliffordczaxioms/C9-09.tikz}\right)
    \end{gather*}
\end{proof}

\begin{proof}[Proof of the decoding of \cref{C10}]
    \begin{gather*}
        D\left(\input{cliffordczaxioms/C10-00.tikz}\right)
        = \input{cliffordczaxioms/C10-01.tikz}\\
        \eqeqref{lem:C10-dec-eq} \input{cliffordczaxioms/C10-13.tikz}
        = D\left(\input{cliffordczaxioms/C10-14.tikz}\right)
    \end{gather*}
\end{proof}

\begin{proof}[Proof of the decoding of \cref{C11}]
    \begin{gather*}
        D\left(\input{cliffordczaxioms/C11-00.tikz}\right)
        = \input{cliffordczaxioms/C11-00b.tikz}\\
        \eqeqref{lem:C11-dec-eq} \input{cliffordczaxioms/C11-01b.tikz}
        = D\left(\input{cliffordczaxioms/C11-01.tikz}\right)
    \end{gather*}
\end{proof}

\begin{proof}[Proof of the decoding of \cref{C12}]
    \begin{gather*}
        D\left(\input{cliffordczaxioms/C12-00.tikz}\right)
        = \input{cliffordczaxioms/C12-01.tikz}
        \eqeqref{lem:C12-dec-eq} \input{cliffordczaxioms/C12-05.tikz}
        = D\left(\input{cliffordczaxioms/C12-06.tikz}\right)
    \end{gather*}
\end{proof}

\begin{proof}[Proof of the decoding of \cref{C13}]
    \begin{gather*}
        D\left(\input{cliffordczaxioms/C13-00.tikz}\right)\\
        = \input{cliffordczaxioms/C13-01.tikz}\\
        \eqeqref{lem:C13-dec-eq} \input{cliffordczaxioms/C13-24.tikz}\\
        = D\left(\input{cliffordczaxioms/C13-25.tikz}\right)
    \end{gather*}
\end{proof}

\begin{proof}[Proof of the decoding of \cref{C14}]
    \begin{gather*}
        D\left(\input{cliffordczaxioms/C14-00.tikz}\right)\\
        = \scalebox{0.8}{\input{cliffordczaxioms/C14-01.tikz}}\\
        \eqeqref{lem:C14-dec-eq} \input{cliffordczaxioms/C14-09.tikz}
        = D\left(\input{cliffordczaxioms/C14-09.tikz}\right)
    \end{gather*}
\end{proof}

\begin{proof}[Proof of the decoding of \cref{C15}]
    \begin{gather*}
        D\left(\input{cliffordczaxioms/C15-00.tikz}\right)\\
        = \scalebox{0.8}{\input{cliffordczaxioms/C15-00b.tikz}}\\
        \eqeqref{lem:C15-dec-eq} \input{cliffordczaxioms/C15-01.tikz}
        = D\left(\input{cliffordczaxioms/C15-01.tikz}\right)
    \end{gather*}
\end{proof}

\subsection{Real-Clifford decoding}

\begin{proof}[Proof of the decoding of \cref{R1}, \cref{R2}, \cref{R3}]
    Since the decoding only affects the $\gCZ$ gate, these are trivial.
\end{proof}

\begin{proof}[Proof of the decoding of \cref{R4}]
    \begin{gather*}
        D\left(\input{realcliffordczaxioms/R4-00.tikz}\right)
        = \input{realcliffordczaxioms/R4-00.tikz}\\
        \eqeqref{lem:R4-dec-eq} \input{realcliffordczaxioms/R4-02.tikz}
        = D\left(\input{realcliffordczaxioms/R4-02.tikz}\right)
    \end{gather*}
\end{proof}

\begin{proof}[Proof of the decoding of \cref{R5}]
    \begin{gather*}
        D\left(\input{cliffordczaxioms/C5-00.tikz}\right)
        = \input{realcliffordczaxioms/R5-01.tikz}
        \eqeqref{lem:R5-dec-eq} \input{realcliffordczaxioms/R5-04.tikz}
        = D\left(\input{identities/I2.tikz}\right)
    \end{gather*}
\end{proof}

\begin{proof}[Proof of the decoding of \cref{R6}]
    \begin{gather*}
        D\left(\input{realcliffordczaxioms/R6-00.tikz}\right)
        = \input{realcliffordczaxioms/R6-01.tikz}
        \eqeqref{lem:R6-dec-eq} \input{realcliffordczaxioms/R6-06.tikz}
        = D\left(\input{realcliffordczaxioms/R6-07.tikz}\right)
    \end{gather*}
\end{proof}

\begin{proof}[Proof of the decoding of \cref{R7}]
    \begin{gather*}
        D\left(\input{realcliffordczaxioms/R7-00.tikz}\right)
        = \input{realcliffordczaxioms/R7-00b.tikz}
        \eqeqref{lem:R7-dec-eq} \input{realcliffordczaxioms/R7-01b.tikz}
        = D\left(\input{realcliffordczaxioms/R7-01.tikz}\right)
    \end{gather*}
\end{proof}

\begin{proof}[Proof of the decoding of \cref{R8}]
    \begin{gather*}
        D\left(\input{realcliffordczaxioms/R8-00.tikz}\right)
        = \input{realcliffordczaxioms/R8-01.tikz}\\
        \eqeqref{lem:R8-dec-eq} \input{realcliffordczaxioms/R8-05.tikz}
        = D\left(\input{realcliffordczaxioms/R8-06.tikz}\right)
    \end{gather*}
\end{proof}

\begin{proof}[Proof of the decoding of \cref{R9}]
    \begin{gather*}
        D\left(\input{realcliffordczaxioms/R9-00.tikz}\right)
        = \input{realcliffordczaxioms/R9-00b.tikz}\\
        \eqeqref{lem:R9-dec-eq} \input{realcliffordczaxioms/R9-01b.tikz}
        = D\left(\input{realcliffordczaxioms/R9-01.tikz}\right)
    \end{gather*}
\end{proof}

\begin{proof}[Proof of the decoding of \cref{R10}]
    \begin{gather*}
        D\left(\input{realcliffordczaxioms/R10-00.tikz}\right)
        = \input{realcliffordczaxioms/R10-01.tikz}\\
        \eqeqref{lem:R10-dec-eq} \input{realcliffordczaxioms/R10-09.tikz}
        = D\left(\input{realcliffordczaxioms/R10-10.tikz}\right)
    \end{gather*}
\end{proof}

\begin{proof}[Proof of the decoding of \cref{R11}]
    \begin{gather*}
        D\left(\input{realcliffordczaxioms/R11-00.tikz}\right)
        = \input{realcliffordczaxioms/R11-00b.tikz}\\
        \eqeqref{lem:R11-dec-eq} \input{realcliffordczaxioms/R11-01b.tikz}
        = D\left(\input{realcliffordczaxioms/R11-01.tikz}\right)
    \end{gather*}
\end{proof}

\begin{proof}[Proof of the decoding of \cref{R12}]
    \begin{gather*}
        D\left(\input{realcliffordczaxioms/R12-00.tikz}\right)
        = \input{realcliffordczaxioms/R12-01.tikz}\\
        \eqeqref{lem:R12-dec-eq} \input{realcliffordczaxioms/R12-11.tikz}
        = D\left(\input{realcliffordczaxioms/R12-12.tikz}\right)
    \end{gather*}
\end{proof}

\begin{proof}[Proof of Equations \eqref{real-DSWAPdef} \eqref{R13}, \eqref{R14}, \eqref{R15} and \eqref{R16}]
    Equations \eqref{real-DSWAPdef}, \eqref{R13}, \eqref{R14}, \eqref{R15} and \eqref{R16} of $\QCrealcliffordCZ$ respectively correspond to Equations \eqref{DSWAPdef}, \eqref{C12}, \eqref{C13}, \eqref{C14} and \eqref{C15} of $\QCcliffordCZ$. One could straightforwardly mimic the proofs of Equations $\textup{C}_i$ to prove Equations $\textup{R}_j$.
\end{proof}

\subsection{Qutrit-Clifford decoding}

\begin{proof}[Proof of the decoding of \cref{qt-C1}]
    \begin{gather*}
        D\left(\input{qutritcliffordaxioms/omegapow6.tikz}\right)
        = \input{qutritcliffordaxioms/omegapow60.tikz}
        \eqeqref{qt-omegapow12} \input{gates/empty.tikz}
        = D\left(\input{gates/empty.tikz}\right)
    \end{gather*}
\end{proof}

\begin{proof}[Proof of the decoding of \cref{qt-C2}]
    \begin{gather*}
        D\left(\input{qutritcliffordaxioms/hadpow4.tikz}\right)
        = \input{qutritcliffordaxioms/hadpow4phases.tikz}
        \eqdeuxeqref{qt-hpow4}{qt-omegapow12} \input{gates/Id.tikz}
        = D\left(\input{gates/Id.tikz}\right)
    \end{gather*}
\end{proof}

\begin{proof}[Proof of the decoding of \cref{qt-C3}]
    \begin{gather*}
        D\left(\input{qutritcliffordaxioms/spow3.tikz}\right)
        = \input{qutritcliffordaxioms/spow3phases.tikz}
        \eqdeuxeqref{qt-spow3}{qt-omegapow12} \input{gates/Id.tikz}
        = D\left(\input{gates/Id.tikz}\right)
    \end{gather*}
\end{proof}

\begin{proof}[Proof of the decoding of \cref{qt-C4}]
    \begin{gather*}
        D\left(\input{qutritcliffordaxioms/s2hs2hs2h.tikz}\right)
        = \input{qutritcliffordaxioms/s2hs2hs2hphases.tikz}
        \eqeqref{lem:qt-C4-dec-eq} \input{qutritcliffordaxioms/D-Iphase-omega.tikz}
        = D\left(\input{qutritcliffordaxioms/Iphase-omega.tikz}\right)
    \end{gather*}
\end{proof}

\begin{proof}[Proof of the decoding of \cref{qt-C5}]
    \begin{gather*}
        D\left(\input{qutritcliffordaxioms/ssprime.tikz}\right)
        = \input{qutritcliffordaxioms/ssprimeb.tikz}
        \eqeqref{qt-ssprime} \input{qutritcliffordaxioms/sprimesb.tikz}
        = D\left(\input{qutritcliffordaxioms/sprimes.tikz}\right)
    \end{gather*}
\end{proof}

\begin{proof}[Proof of the decoding of \cref{qt-C6}]
    \begin{gather*}
        D\left(\input{qutritderivations/CZ-CZ-CZ.tikz}\right)
        = \input{qutritderivations/CX-CX-CX.tikz}
        \eqdeuxeqref{qt-hpow4}{qt-CCC} \input{identities/I2.tikz}
        = D\left(\input{identities/I2.tikz}\right)
    \end{gather*}
\end{proof}

\begin{proof}[Proof of the decoding of \cref{qt-C7}]
    \begin{gather*}
        D\left(\input{qutritderivations/S-CZ.tikz}\right)
        = \input{qutritderivations/S2-CZ.tikz}
        \eqeqref{qt-S-CNOT} \input{qutritderivations/CZ-S2.tikz}
        = D\left(\input{qutritderivations/CZ-S.tikz}\right)
    \end{gather*}
\end{proof}

\begin{proof}[Proof of the decoding of \cref{qt-C8}]
    \begin{gather*}
        D\left(\input{qutritderivations/K-CZ.tikz}\right)
        = \input{qutritderivations/Kp-CZ.tikz}\\
        \eqdeuxeqref{qt-kcnot}{qt-hpow4} \input{qutritderivations/CZ-CZ-Kp.tikz}
        = D\left(\input{qutritderivations/CZ-CZ-K.tikz}\right)
    \end{gather*}
\end{proof}

\begin{proof}[Proof of the decoding of \cref{qt-C10}]
    \begin{gather*}
        D\left(\input{qutritderivations/S-NOTC-origin.tikz}\right)
        = \input{qutritderivations/S-NOTCb.tikz}\\
        \eqtroiseqref{qt-hpow4}{qt-S-NOTC}{qt-omegapow12} \input{qutritderivations/S-NOTC-09b.tikz}
        = D\left(\input{qutritderivations/NOTC-CZ-SSp.tikz}\right)
    \end{gather*}
\end{proof}

\begin{proof}[Proof of the decoding of \cref{qt-C11}]
    \begin{gather*}
        D\left(\input{qutritderivations/CZ-CNOT-b.tikz}\right)
        = \input{qutritderivations/CZ-CNOT.tikz}\\
        \eqtroiseqref{qt-hpow4}{qt-CZ-CNOT}{qt-omegapow12} \input{qutritderivations/CZ-CNOT-10b.tikz}
        = D\left(\input{qutritderivations/NOTC-CZ-SSp-b.tikz}\right)
    \end{gather*}
\end{proof}

\begin{proof}[Proof of the decoding of \cref{qt-C13}]
    \begin{gather*}
        D\left(\input{qutritderivations/ICZ-CZI.tikz}\right)
        = \input{qutritderivations/CZ12CZ23.tikz}
        \eqeqref{qt-CzComm} \input{qutritderivations/CZ23CZ12.tikz}
        = D\left(\input{qutritderivations/CZI-ICZ.tikz}\right)
    \end{gather*}
\end{proof}

\begin{proof}[Proof of the decoding of \cref{qt-C16}]
    \begin{gather*}
        D\left(\input{qutritderivations/CNI-ICZ.tikz}\right)
        = \input{qutritcliffordaxioms/CNOT12CNOT23b.tikz}\\
        \eqdeuxeqref{qt-I}{qt-hpow4} \input{qutritcliffordaxioms/CNOT23CNOT12CNOT13b.tikz}
        = D\left(\input{qutritderivations/ICZ-CNI-CIZ.tikz}\right)
    \end{gather*}
\end{proof}

\begin{proof}[Proof of the decoding of \cref{qt-Swap}]
    \begin{gather*}
        D\left(\input{qutritderivations/SWAP-CZ.tikz}\right)
        = \input{qutritderivations/SWAP-CZ-00b.tikz}
        \eqdeuxeqref{qt-SwCZ}{qt-hpow4} \input{gates/SWAP.tikz}
        = D\left(\input{gates/SWAP.tikz}\right)
    \end{gather*}
\end{proof}
\begin{proof}[Proof of Equations \eqref{qt-C9}, \eqref{qt-C14}, \eqref{qt-C15}, \eqref{qt-SWAPSWAP}, \eqref{qt-SSWAP}, \eqref{qt-CZSWAP}]
    \cref{qt-C9}, \cref{qt-C14}, \cref{qt-C15}, \cref{qt-SWAPSWAP}, \cref{qt-SSWAP}, \cref{qt-CZSWAP} comes from the naturality of the swap and the new rule \cref{qt-Swap}
\end{proof}

\subsection{Clifford+T decoding}
\begin{proof}[Proof of the decoding of \cref{t-C3}, \cref{t-C4}, \cref{t-C15}]
    Since the decoding only affects the $\gS$ gate and $\gCZ$ gate, these are trivial.
\end{proof}

\begin{proof}[Proof of the decoding of \cref{t-C5}]
    \begin{gather*}
        D\left(\input{cliffordaxioms/SSSS.tikz}\right)
        = \input{cliffordplustaxioms/Tpow8.tikz}
        \eqeqref{t-Tpow8} \input{cliffordaxioms/I.tikz}
        = D\left(\input{cliffordaxioms/I.tikz}\right)
    \end{gather*}
\end{proof}

\begin{proof}[Proof of the decoding of \cref{t-C6}]
    \begin{gather*}
        D\left(\input{cliffordczaxioms/C4-00.tikz}\right)
        = \input{cliffordplustaxioms/TTHTTHTTH.tikz}\\
        \eqeqref{lem:t-C6-dec-eq} \input{cliffordczaxioms/C4-03.tikz}
        = D\left(\input{cliffordczaxioms/C4-03.tikz}\right)
    \end{gather*}
\end{proof}

\begin{proof}[Proof of the decoding of \cref{t-C14}]
    \begin{gather*}
        D\left(\input{shortcut/S.tikz}\right)
        = \input{shortcut/TT.tikz}
        = D\left(\input{shortcut/TT.tikz}\right)
    \end{gather*}
\end{proof}

\begin{proof}[Proof of the decoding of \cref{t-C7}]
    \begin{gather*}
        D\left(\input{cliffordczaxioms/C5-00.tikz}\right)
        = \input{cliffordplustaxioms/C7-00.tikz}
        \eqeqref{lem:t-C7-dec-eq} \input{cliffordczaxioms/C5-04.tikz}
        = D\left(\input{cliffordczaxioms/C5-04.tikz}\right)
    \end{gather*}
\end{proof}

\begin{proof}[Proof of the decoding of \cref{t-C8}]
    \begin{gather*}
        D\left(\input{cliffordczaxioms/C6-00.tikz}\right)
        = \input{cliffordczaxioms/C6-01.tikz}
        \eqeqref{t-C} \input{cliffordczaxioms/C6-02.tikz}
        = D\left(\input{cliffordczaxioms/C6-03.tikz}\right)
    \end{gather*}
\end{proof}

\begin{proof}[Proof of the decoding of \cref{t-C9}]
    \begin{gather*}
        D\left(\input{cliffordczaxioms/C7-00.tikz}\right)
        = \input{cliffordczaxioms/C7-01.tikz}
        \eqdeuxeqref{t-C}{t-czrev} \input{cliffordczaxioms/C7-02.tikz}
        = D\left(\input{cliffordczaxioms/C7-05.tikz}\right)
    \end{gather*}   
\end{proof}

\begin{proof}[Proof of the decoding of \cref{t-C10}]
    \begin{gather*}
        D\left(\input{cliffordczaxioms/C8-00.tikz}\right)
        = \input{cliffordczaxioms/C8-01.tikz}\\
        \eqeqref{lem:t-C10-dec-eq} \input{cliffordczaxioms/C8-07.tikz}
        = D\left(\input{cliffordczaxioms/C8-08.tikz}\right)
    \end{gather*}
\end{proof}

\begin{proof}[Proof of the decoding of \cref{t-C11}]
    \begin{gather*}
        D\left(\input{cliffordczaxioms/C9-00.tikz}\right)
        = \input{cliffordczaxioms/C9-01.tikz}\\
        \eqeqref{lem:t-C11-dec-eq} \input{cliffordczaxioms/C9-08.tikz}
        = D\left(\input{cliffordczaxioms/C9-09.tikz}\right)
    \end{gather*}
\end{proof}

\begin{proof}[Proof of the decoding of \cref{t-C12}]
    \begin{gather*}
        D\left(\input{cliffordczaxioms/C10-00.tikz}\right)
        = \input{cliffordczaxioms/C10-01.tikz}\\
        \eqeqref{lem:t-C12-dec-eq} \input{cliffordczaxioms/C10-13.tikz}
        = D\left(\input{cliffordczaxioms/C10-14.tikz}\right)
    \end{gather*}
\end{proof}

\begin{proof}[Proof of the decoding of \cref{t-C13}]
    \begin{gather*}
        D\left(\input{cliffordczaxioms/C11-00.tikz}\right)
        = \input{cliffordczaxioms/C11-00b.tikz}\\
        \eqeqref{lem:t-C13-dec-eq} \input{cliffordczaxioms/C11-01b.tikz}
        = D\left(\input{cliffordczaxioms/C11-01.tikz}\right)
    \end{gather*}
\end{proof}

\begin{proof}[Proof of the decoding of \cref{t-C16}]
    \begin{gather*}
        D\left(\input{cliffordplustaxioms/C16l.tikz}\right)
        =  \input{cliffordplustaxioms/C16-01.tikz}
        \eqeqref{t-C} \input{cliffordplustaxioms/C16-02.tikz}
        = D\left(\input{cliffordplustaxioms/C16r.tikz}\right)
    \end{gather*}
\end{proof}

\begin{proof}[Proof of the decoding of \cref{t-C17}]
    \begin{gather*}
        D\left(\input{cliffordplustaxioms/C17l.tikz}\right)
        = \input{cliffordplustaxioms/C17-01.tikz}\\
        \eqeqref{lem:t-C17-dec-eq} \input{cliffordplustaxioms/C17-09.tikz}
        = D\left(\input{cliffordplustaxioms/C17r.tikz}\right)
    \end{gather*}
\end{proof}

\begin{proof}[Proof of the decoding of \cref{t-C18}]
    \begin{gather*}
        D\left(\input{cliffordplustaxioms/C18l.tikz}\right)
        = \input{cliffordplustaxioms/C18l.tikz}\\
        \eqeqref{lem:t-C18-dec-eq} \input{cliffordplustaxioms/C18-19.tikz}
        = D\left(\input{cliffordplustaxioms/C18r.tikz}\right)
    \end{gather*}
\end{proof}

\begin{proof}[Proof of the decoding of \cref{old-t-C19}]
    \begin{gather*}
        D\left(\input{cliffordplustaxioms/C19l.tikz}\right)\\
        = \input{cliffordplustaxioms/C19l.tikz}\\
        \eqeqref{lem:t-C19-dec-eq} \input{cliffordplustaxioms/C19r.tikz}\\
        = D\left(\input{cliffordplustaxioms/C19r.tikz}\right)
    \end{gather*}
\end{proof}

\begin{proof}[Proof of the decoding of \cref{old-t-C20}]
    \begin{gather*}
        D\left(\input{cliffordplustaxioms/C20l.tikz}\right)
        = \input{cliffordplustaxioms/C20l.tikz}\\
        \eqeqref{t-C20} \input{cliffordplustaxioms/C20r.tikz}
        = D\left(\input{cliffordplustaxioms/C20r.tikz}\right)
    \end{gather*}
\end{proof}

\begin{proof}[Proof of the decoding of \cref{t-DSWAPdef}]
    \begin{gather*}
        D\left(\input{gates/SWAP.tikz}\right)
        = \input{gates/SWAP.tikz}\\
        \eqeqref{lem:t-dswap-unfolding} \input{cliffordczaxioms/DSWAPdef-03b.tikz}
        = D\left(\input{cliffordczaxioms/DSWAPdef-02.tikz}\right)
    \end{gather*}
\end{proof}

\subsection{Clifford+CS decoding}
\begin{proof}[Proof of the decoding of \cref{DSWAPdef}]
    \begin{gather*}
        D\left(\input{gates/SWAP.tikz}\right)
        = \input{gates/SWAP.tikz}\\
        \eqeqref{lem:cs-dswap-unfolding} \input{cliffordpluscsderivations/DSWAP-dfinal.tikz}
        = D\left(\input{cliffordpluscsderivations/DSWAP-final.tikz}\right)
    \end{gather*}
\end{proof}

\begin{proof}[Proof of the decoding of \cref{cs-C3}]
    Since the decoding does not affect the $\gS$ and $\gCS$ gates, this is trivial.
\end{proof}

\begin{proof}[Proof of the decoding of \cref{cs-C1}]
    \begin{gather*}
        D\left(\input{qutritcliffordaxioms/ipow4.tikz}\right)
        = \input{cliffordaxioms/w.tikz}
        \eqeqref{cs-w8} \input{gates/empty.tikz}
        = D\left(\input{gates/empty.tikz}\right)
    \end{gather*}
\end{proof}

\begin{proof}[Proof of the decoding of \cref{cs-C2}]
    \begin{gather*}
        D\left(\input{cliffordpluscsaxioms/K2.tikz}\right)
        = \input{cliffordpluscsderivations/H2w6.tikz}
        \eqdeuxeqref{cs-H2}{cs-w8} \input{cliffordpluscsderivations/w6.tikz}
        = D\left(\input{cliffordpluscsaxioms/ipow3.tikz}\right)
    \end{gather*}
\end{proof}

\begin{proof}[Proof of the decoding of \cref{cs-C4}]
    \begin{gather*}
        D\left(\input{cliffordpluscsaxioms/SKSKSK.tikz}\right)
        = \input{cliffordpluscsderivations/C4-00b.tikz}\\
        \eqdeuxeqref{lem:cs-C4-dec-eq}{cs-w8}  \input{cliffordpluscsderivations/w6.tikz}
        = D\left(\input{cliffordpluscsaxioms/ipow3.tikz}\right)
    \end{gather*}
\end{proof}

\begin{proof}[Proof of the decoding of \cref{cs-C5}]
    \begin{gather*}
        D\left(\input{cliffordpluscsaxioms/CSpow4.tikz}\right)
        = \input{cliffordpluscsaxioms/CSpow4.tikz}
        \eqeqref{lem:cs-C5-dec-eq} \input{identities/I2.tikz}
        = D\left(\input{identities/I2.tikz}\right)
    \end{gather*}
\end{proof}

\begin{proof}[Proof of the decoding of \cref{cs-C6}]
    \begin{gather*}
        D\left(\input{cliffordpluscsaxioms/ScommTopLeft.tikz}\right)
        = \input{cliffordpluscsaxioms/ScommTopLeft.tikz}
        \eqeqref{lem:cs-C6-dec-eq} \input{cliffordpluscsaxioms/ScommTopRight.tikz}
        = D\left(\input{cliffordpluscsaxioms/ScommTopRight.tikz}\right)
    \end{gather*}
\end{proof}

\begin{proof}[Proof of the decoding of \cref{cs-C7}]
    \begin{gather*}
        D\left(\input{cliffordpluscsaxioms/ScommBotLeft.tikz}\right)
        = \input{cliffordpluscsaxioms/ScommBotLeft.tikz}
        \eqeqref{lem:cs-C7-dec-eq} \input{cliffordpluscsaxioms/ScommBotRight.tikz}
        = D\left(\input{cliffordpluscsaxioms/ScommBotRight.tikz}\right)
    \end{gather*}
\end{proof}

\begin{proof}[Proof of the decoding of \cref{cs-C8}]
    \begin{gather*}
        D\left(\input{cliffordpluscsaxioms/XTopLeft.tikz}\right)
        = \input{cliffordpluscsderivations/C8-00b.tikz}\\
        \eqeqref{lem:cs-C8-dec-eq} \input{cliffordpluscsderivations/C8-03b.tikz}
        = D\left(\input{cliffordpluscsaxioms/XTopRight.tikz}\right)
    \end{gather*}
\end{proof}

\begin{proof}[Proof of the decoding of \cref{cs-C9}]
    \begin{gather*}
        D\left(\input{cliffordpluscsaxioms/XBotLeft.tikz}\right)
        = \input{cliffordpluscsderivations/C9-00b.tikz}\\
        \eqeqref{lem:cs-C9-dec-eq} \input{cliffordpluscsderivations/C9-04b.tikz}
        = D\left(\input{cliffordpluscsaxioms/XBotRight.tikz}\right)
    \end{gather*}
\end{proof}

\begin{proof}[Proof of the decoding of \cref{cs-C10}]
    \begin{gather*}
        D\left(\input{cliffordpluscsaxioms/KTopLeft.tikz}\right)
        = \input{cliffordpluscsderivations/C10-00.tikz}\\
        \eqeqref{cs-SHCHC} \input{cliffordpluscsderivations/C10-01.tikz}
        = D\left(\input{cliffordpluscsaxioms/KTopRight.tikz}\right)
    \end{gather*}
\end{proof}

\begin{proof}[Proof of the decoding of \cref{cs-C11}]
    \begin{gather*}
        D\left(\input{cliffordpluscsaxioms/KBotLeft.tikz}\right)
        = \input{cliffordpluscsderivations/C11-00.tikz}\\
        \eqeqref{lem:cs-C11-dec-eq} \input{cliffordpluscsderivations/C11-03.tikz}
        = D\left(\input{cliffordpluscsaxioms/KBotRight.tikz}\right)
    \end{gather*}
\end{proof}

\begin{proof}[Proof of the decoding of \cref{cs-C12}]
    \begin{gather*}
        D\left(\input{cliffordpluscsaxioms/ICSCSI.tikz}\right)
        = \input{cliffordpluscsaxioms/ICSCSI.tikz}
        \eqeqref{lem:cs-C12-dec-eq} \input{cliffordpluscsaxioms/CSIICS.tikz}
        = D\left(\input{cliffordpluscsaxioms/CSIICS.tikz}\right)
    \end{gather*}
\end{proof}

\begin{proof}[Proof of the decoding of \cref{cs-C13}]
    \begin{gather*}
        D\left(\input{cliffordpluscsaxioms/C13lred.tikz}\right)
        = \input{cliffordpluscsaxioms/C13lb.tikz}\\
        \eqeqref{lem:cs-C13-dec-eq} \input{cliffordpluscsaxioms/C13rb.tikz}
        = D\left(\input{cliffordpluscsaxioms/C13rred.tikz}\right)
    \end{gather*}
\end{proof}

\begin{proof}[Proof of the decoding of \cref{cs-C14-old}]
    \begin{gather*}
        D\left(\input{cliffordpluscsaxioms/C14lred.tikz}\right)
        = \input{cliffordpluscsaxioms/C14lb.tikz}\\
        \eqeqref{lem:cs-C14-dec-eq} \input{cliffordpluscsaxioms/C14rb.tikz}
        = D\left(\input{cliffordpluscsaxioms/C14rred.tikz}\right)
    \end{gather*}
\end{proof}

\begin{proof}[Proof of the decoding of \cref{cs-C15}]
    \begin{gather*}
        D\left(\input{cliffordpluscsaxioms/C15lred.tikz}\right)
        = \input{cliffordpluscsaxioms/C15lb.tikz}\\
        \eqeqref{lem:cs-C15-dec-eq} \input{cliffordpluscsaxioms/C15rb.tikz}
        = D\left(\input{cliffordpluscsaxioms/C15rred.tikz}\right)
    \end{gather*}
\end{proof}

\begin{proof}[Proof of the decoding of \cref{cs-C16-old}]
    \begin{gather*}
        D\left(\input{cliffordpluscsaxioms/C16l.tikz}\right)
        = \input{cliffordpluscsderivations/C16-00b.tikz}\\
        \eqeqref{cs-U} \input{cliffordpluscsderivations/C16-finalb.tikz}
        = D\left(\input{cliffordpluscsaxioms/C16r.tikz}\right)
    \end{gather*}
\end{proof}

\begin{proof}[Proof of the decoding of \cref{cs-C17-old}]
    \begin{gather*}
        D\left(\input{cliffordpluscsaxioms/C17l.tikz}\right)
        = \input{cliffordpluscsderivations/C17-00b.tikz}\\
        \eqeqref{lem:cs-C17-dec-eq} \input{cliffordpluscsderivations/C17-finalb.tikz}
        = D\left(\input{cliffordpluscsaxioms/C17r.tikz}\right)
    \end{gather*}
\end{proof}
\section{CNOT-dihedral circuits}\label{app:CNOT-dihedral}

For CNOT-dihedral completeness, the fragment-specific work is narrower than in the other cases.
No encoding/decoding transfer is needed: the source presentation of
\ref{appendix:oldgraphicallanguage5} already lives in a PROP setting with the same scalar
convention as our master signature.

The remaining check is conservative coverage of the source presentation.  Every axiom of
\ref{appendix:oldgraphicallanguage5} omitted from the smaller rule set $\QCdihedral$ must be
derivable from $\QCdihedral$ itself.  The lemmas in this section provide those derivations, so the
imported completeness theorem applies unchanged.

\begin{remark}\label{lem:CNOTdihedral-R4}
\cref{cnot-old-R4} follows by the same rewriting used in
the section for Clifford+T.
\end{remark}

\begin{derivation}\label{lem:CNOTdihedral-R2}
  \cref{cnot-old-R2}
  \begin{gather*}
    \input{identities/CXC_00.tikz}
    \eqeqref{new-R3}\input{identities/CXC_01.tikz}
    \eqeqref{cnot-R1}\input{identities/CXC_02.tikz}\\
    \eqeqref{new-R3}\input{identities/CXC_03.tikz}
    \eqeqref{cnot-R1}\input{identities/CXC_04.tikz}
    \eqeqref{lem:CNOTdihedral-R4}\input{identities/CXC_05.tikz}
    \eqeqref{cnot-R1}\input{identities/CXC_06.tikz}
  \end{gather*}
\end{derivation}

\begin{derivation}\label{lem:CNOTdihedral-src-R3}
  \cref{cnot-old-R3}
  \begin{gather*}
    \input{cliffordczaxioms/CNOTXCNOT.tikz}
    \eqeqref{new-R3}\input{cliffordczaxioms/CNOTCNOTXX.tikz}
    \eqeqref{lem:CNOTdihedral-R4}\input{cliffordczaxioms/XX.tikz}
  \end{gather*}
\end{derivation}

\begin{derivation}\label{lem:CNOTdihedral-src-R5}
  \cref{cnot-old-R5}
  \begin{gather*}
    \input{cliffordczaxioms/CNOTswCNOTswCNOT.tikz}
    \eqeqref{new-R5}\input{cliffordczaxioms/CnCnNcSwCn.tikz}
    \eqeqref{lem:CNOTdihedral-R4}\input{cliffordczaxioms/NcSwCn.tikz}\\
     = \input{cliffordczaxioms/NcNcSw.tikz}
    \eqeqref{lem:CNOTdihedral-R4} \input{gates/SWAP.tikz}
  \end{gather*}
\end{derivation}

\begin{derivation}\label{lem:CNOTdihedral-src-R8}
  \cref{cnot-old-R8}
  \begin{gather*}
    \input{cliffordczaxioms/R8-old.tikz}
    \eqeqref{lem:CNOTdihedral-R4}\input{cliffordczaxioms/CNOTZCNOT.tikz}\\
    \eqeqref{cnot-R7}\input{cliffordczaxioms/ZZCNOTZCNOT.tikz}
    \eqeqref{new-R8} \input{cliffordczaxioms/CNOTCNOTZZ.tikz}
    \eqeqref{lem:CNOTdihedral-R4} \input{cliffordczaxioms/ZZ.tikz}
  \end{gather*}
\end{derivation}

\begin{derivation}\label{lem:CNOTdihedral-src-R11}
  \cref{cnot-old-R11}
  \begin{gather*}
    \input{cnotdihedral/R11oldL.tikz}
    \eqeqref{cnot-R11}\input{cliffordczaxioms/R11oldmid.tikz}
    \eqeqref{cnot-R1} \input{cnotdihedral/R11oldR.tikz}
  \end{gather*}
\end{derivation}

\section{Scalar refinement}\label{app:scalar-refinement}

The scalar-refinement step in \cref{subsec:scalar-refinement} addresses a strict-semantics issue:
global phases are part of the data of the fragment.  Thus two syntactic presentations of the
\emph{same} gate fragment can denote different semantic subPROPs when they generate different
subgroups of global phases.  Before importing a completeness theorem stated with another scalar
convention, we align the scalar subgroups by conservatively extending the syntax with an additional
scalar generator.

The main technical claim proved here is that, under mild assumptions satisfied by all our
unitary fragments, adjoining a new root-of-unity scalar and relating it to an existing one is
a \emph{conservative extension}: it does not introduce any new equalities between circuits
that do not mention the new scalar.  Equivalently, the induced interpretation remains faithful.


\begin{definition}\label{def:scalar-in-prop}
Let $\cat{P}$ be a PROP. Its \emph{scalars} are the endomorphisms of the tensor unit,
\(
  S(\cat{P}) \coloneqq \cat{P}(0,0).
\)
Since $0$ is the monoidal unit, both $\circ$ and $\otimes$ restrict to binary operations
\(
  S(\cat{P})\times S(\cat{P})\to S(\cat{P})
\),
and they coincide by interchange.  In particular, $S(\cat{P})$ is a commutative monoid
(and a group whenever $\cat{P}$ is a groupoid).
\end{definition}

\begin{definition}\label{def:global-phase}
Let $ \cat{C}$ be a (strict) symmetric monoidal subcategory of $\cat{FdHilb}$ and assume
$ \cat{C}$ is \emph{endomorphism-only}:
$ \cat{C}(n,m)=\varnothing$ for $n\neq m$.  For $n\in\N$, a unitary
$U\in \cat{C}(n,n)$ is a \emph{global phase on $n$ wires} if it is a scalar multiple of
the identity, namely \(U=\lambda \id_n\) for some
\(\lambda\in\mathrm{U}(1)=\{z\in\C\mid |z|=1\}\).
In words: $U$ acts trivially on all states and only contributes a phase factor $\lambda$.
\end{definition}

\begin{definition}\label{def:visible-scalars}
Let $ \cat{C}\subseteq \cat{FdHilb}$ be a symmetric monoidal subcategory.
The \emph{visible scalar subgroup} of $ \cat{C}$ is
\(
  S( \cat{C}) \coloneqq  \cat{C}(0,0)\subseteq\mathrm{U}(1).
\)
Every element $\lambda\in S( \cat{C})$ yields a global phase on $n$ wires by tensoring:
$\lambda\otimes \id_n = \lambda \id_n\in  \cat{C}(n,n)$.
\end{definition}

\begin{definition}\label{def:hidden-phase}
Let $ \cat{C}\subseteq \cat{FdHilb}$ be an endomorphism-only symmetric monoidal category.
A phase $\lambda\in\mathrm{U}(1)$ is called a \emph{hidden phase of $ \cat{C}$} if there
exists some arity $n\ge 1$ such that $\lambda \id_n\in \cat{C}(n,n)$ while
$\lambda\notin S( \cat{C})$.

Equivalently, $\lambda$ is hidden if it occurs as a global phase at some positive wire count,
but cannot be produced as a scalar morphism $0\to 0$ inside~$ \cat{C}$.
\end{definition}

\begin{example}\label{ex:hidden-phase}
Consider the subgroup $ \cat{C}\subseteq \cat{Qubit}$ generated (as a symmetric monoidal
subcategory) by a single 1-qubit gate $R_X(2\pi)$, but \emph{with no scalar generators}.
Semantically, $R_X(2\pi)=-\id_1$, hence $-\id_1\in \cat{C}(1,1)$.  However,
$S( \cat{C})=\{1\}$ because there are no nontrivial $0\to 0$ morphisms available.
Thus $\lambda=-1$ is a hidden phase in the sense of \cref{def:hidden-phase}.

If we were to adjoin a new scalar generator $\omega$ interpreted as $-1$, then $\omega\otimes\id_1$
and the existing circuit $R_X(2\pi)$ would have the same semantics; without additional axioms
linking them, this would typically destroy faithfulness of the interpretation.  The ``no hidden phases''
assumption in \cref{lem:scalar-refinement} precisely rules out this situation.
\end{example}


The conservativity argument relies on the observation that scalars commute with all endomorphisms,
so new scalar generators can always be ``pulled out'' as a single global phase factor.

\begin{lemma}\label{lem:scalar-centrality}
Let $ \cat{C}$ be a strict symmetric monoidal category.  For any scalar $a:0\to 0$ and any
endomorphism $f:n\to n$ one has
\begin{equation}\label{eq:scalar-centrality-app}
  (a\otimes \id_n)\circ f = f\circ (a\otimes \id_n).
\end{equation}
For scalars $a,b:0\to 0$, composition and tensor coincide:
\(
  a\circ b = a\otimes b.
\)
\end{lemma}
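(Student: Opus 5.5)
The plan is to prove the second claim first, since it is the classical Eckmann--Hilton argument, and then derive the centrality statement \eqref{eq:scalar-centrality-app} from strictness and the interchange law of \cref{eq:interchange}. Throughout, use that $0$ is a strict unit, so $a\otimes b$ and $a\otimes\id_n$ are literally morphisms $0\to0$ and $n\to n$, with $\id_0\otimes g=g=g\otimes\id_0$ by \cref{eq:paridentity}.

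\emph{Step 1 (scalars).} For $a,b:0\to0$, write $a=a\circ\id_0$ and $b=\id_0\circ b$. Then interchange gives
\[
a\otimes b=(a\circ\id_0)\otimes(\id_0\circ b)=(a\otimes\id_0)\circ(\id_0\otimes b)=a\circ b.
\]
Decomposing the other way, $a=\id_0\circ a$ and $b=b\circ\id_0$, gives $a\otimes b=b\circ a$. So composition and tensor coincide on scalars, and as a by-product $\circ$ is commutative on $S(\cat{C})$, in line with \cref{def:scalar-in-prop}.

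\emph{Step 2 (centrality).} For $f:n\to n$ and scalar $a$, use $f=\id_0\otimes f$ and $\id_n=\id_0\otimes\id_n$. Interchange then gives
\[
(a\otimes\id_n)\circ(\id_0\otimes f)=(a\circ\id_0)\otimes(\id_n\circ f)=a\otimes f.
\]
Symmetrically,
\[
(\id_0\otimes f)\circ(a\otimes\id_n)=(\id_0\circ a)\otimes(f\circ\id_n)=a\otimes f.
\]
Both composites therefore equal $a\otimes f$, which proves \eqref{eq:scalar-centrality-app}. The symmetry is not needed here; strict monoidality suffices.

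The only delicate point is bookkeeping of strictness. We must justify identifying $a\otimes\id_n$ with a morphism $n\to n$ and rewriting $f$ as $\id_0\otimes f$. In the strict setting adopted in \cref{sec:circuits}, these are literal equalities by \cref{eq:paridentity}, so no coherence isomorphisms intervene. I expect no real obstacle beyond stating this carefully. If one instead wanted the non-strict version, Mac~Lane coherence \cite{Saunders} would reduce it to the strict case.
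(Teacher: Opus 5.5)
Your proof is correct and follows the paper's argument: both pad with $\id_0$ and use the interchange law, so that each side of \eqref{eq:scalar-centrality-app} reduces to $a\otimes f$. The scalar identity $a\circ b=a\otimes b$ is the same computation at $n=0$; proving it first and noting the commutativity of $\circ$ on scalars is only a reordering.
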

\begin{proof}
Since the tensor is bifunctorial and the category is strict,
\((a\otimes \id_n)\circ f
= (a\circ \id_0)\otimes(\id_n\circ f)
= (\id_0\circ a)\otimes(f\circ \id_n)
= f\circ(a\otimes \id_n)\), using the interchange law.
For scalars, the same computation with $n=0$ yields $a \circ b = a \otimes b$.
\end{proof}


The scalar-refinement step is an extension of a presentation by adjoining a new scalar generator of
higher order.

\begin{definition}\label{def:mu_m}
For $m\ge 1$ write
\(
  \mu_m \coloneqq \{e^{2\pi i k/m}\mid 0\le k<m\}\subseteq \mathrm{U}(1)
\)
for the cyclic group of $m$-th roots of unity.
\end{definition}

\begin{definition}\label{def:scalar-refined-presentation}
Let $\cat{P}^{\mathrm{src}}/\mathcal{R}^{\mathrm{src}}$ be a presented PROP and let
$s:0\to 0$ be a chosen scalar in the \emph{src} signature.
Fix integers $m\ge 1$ and $\ell\ge 1$.
Define $\cat{P}^{\mathrm{src},\sharp}/\mathcal{R}^{\mathrm{src},\sharp}$ to be the
presentation obtained by adjoining a new scalar generator
\(
  \omega:0\to 0
\)
and adding the two relations
\begin{equation}\label{eq:refinement-relations}
  \omega^{m\ell}=\id_0,
  \qquad
  \omega^{r}=s,
\end{equation}
where $\omega^k$ denotes the $k$-fold composite of $\omega$ (equivalently, $k$-fold tensor),
and $r$ is an integer chosen so that semantically $\omega^r$ matches $s$ as in
\eqref{eq:choose-r}.
\end{definition}

\begin{remark}\label{rem:refinement-intuition}
The relations \eqref{eq:refinement-relations} assert that $\omega$ has order dividing $m\ell$,
and that the \emph{src} scalar $s$ is the $r$-th power of $\omega$.  Concretely, this is the
standard ``adjoin a root'' construction at the level of scalars.  The point of
\cref{lem:scalar-refinement} is that, in a PROP setting, adjoining such a root is
conservative provided there are no hidden phases.
\end{remark}


The centrality of scalars yields a simple normal form in the refined syntax: all occurrences of
$\omega$ can be extracted into a single scalar factor.

\begin{lemma}\label{lem:scalar-extraction}
In the refined PROP $\cat{P}^{\mathrm{src},\sharp}/\mathcal{R}^{\mathrm{src},\sharp}$,
every morphism $C:n\to n$ is derivably equal to one of the form
\begin{equation}\label{eq:scalar-normal-form}
  C = \omega^{a}\otimes C',
\end{equation}
where $a\in\{0,1,\dots,m\ell-1\}$ and $C'$ is a morphism built using only the
\emph{src} generators (i.e.\ $C'$ contains no occurrence of $\omega$).
\end{lemma}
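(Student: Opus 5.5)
I would prove \cref{lem:scalar-extraction} by structural induction on the term (string diagram) representing $C$ in the free PROP $\cat{P}^{\mathrm{src},\sharp}$. The induction will prove a slightly stronger statement valid for all hom-sets $n\to m$: every morphism $C:n\to m$ is derivably equal to $\omega^{a}\otimes C'$ with $C'$ free of $\omega$ and $a\in\mathbb{N}$. At the end I reduce $a$ modulo $m\ell$ using the relation $\omega^{m\ell}=\id_0$ from \eqref{eq:refinement-relations}, together with the fact that $\omega^{k}$ is the $k$-fold tensor (equivalently composite) of $\omega$. Because the free PROP is already quotiented by coherence, it suffices to treat generators, structural morphisms, $\circ$ and $\otimes$. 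The hypothesis $C:n\to n$ in the statement is then a special case.

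\textbf{Base cases.} The generator $\omega$ itself is $\omega^{1}\otimes\id_0$. Any \emph{src} generator $g$ is $\omega^{0}\otimes g$, where $\omega^0=\id_0$. Identities and symmetries are handled the same way, with $a=0$.

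\textbf{Tensor step.} Suppose $C_1=\omega^{a_1}\otimes C_1'$ and $C_2=\omega^{a_2}\otimes C_2'$. Strict associativity gives $C_1\otimes C_2=\omega^{a_1}\otimes C_1'\otimes\omega^{a_2}\otimes C_2'$. I then move $\omega^{a_2}$ leftward past $C_1'$. This uses naturality of the symmetry with respect to a morphism of type $0\to0$: here $\sigma_{k,0}=\id_k$, so $C_1'\otimes\omega^{a_2}=\omega^{a_2}\otimes C_1'$. This is the coherence law \eqref{eq:naturality} applied with the unit object, or equivalently interchange: $C_1'\otimes b=(\id\circ C_1')\otimes(b\circ\id_0)=\ldots$. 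The result is $\omega^{a_1+a_2}\otimes(C_1'\otimes C_2')$.

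\textbf{Composition step.} Let $C_1:n\to k$ and $C_2:k\to m$. I write
\[
(\omega^{a_2}\otimes C_2')\circ(\omega^{a_1}\otimes C_1')=(\omega^{a_2}\circ\omega^{a_1})\otimes(C_2'\circ C_1'),
\]
using the interchange law \eqref{eq:interchange} with $\omega^{a_i}:0\to0$. By \cref{lem:scalar-centrality}, $\omega^{a_2}\circ\omega^{a_1}=\omega^{a_1+a_2}$. Since $0$ is a strict unit, no rebracketing issues arise.

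\textbf{Main obstacle.} The only delicate point is making the tensor-step commutation rigorous, namely that a scalar can slide past an arbitrary non-endomorphism block. \Cref{lem:scalar-centrality} as stated covers only endomorphisms, so I would rederive the general version directly from interchange:
\[
b\otimes f=(b\circ\id_0)\otimes(\id_m\circ f)=(b\otimes\id_m)\circ(\id_0\otimes f)=\ldots=f\otimes b.
\]
Alternatively, I would cite that $\sigma_{0,n}=\id_n$ in a strict PROP. Everything else is routine bookkeeping.
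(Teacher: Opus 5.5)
Your argument is correct and follows the paper's idea (pull every $\omega$ out using interchange and centrality of scalars, merge the pieces into one power $\omega^a$, then reduce modulo $m\ell$). You organise it as an explicit structural induction over all hom-sets, which is more careful than the paper's one-line ``slide each component'' argument.

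One caution: the tensor-direction move $f\otimes b=b\otimes f$ genuinely needs the symmetry, via naturality together with $\sigma_{n,0}=\id_n$. Interchange alone only gives $b\otimes f=f\circ(b\otimes\id_n)$ and $f\otimes b=f\circ(\id_n\otimes b)$, and $b\otimes\id_n\neq\id_n\otimes b$ in general non-braided monoidal categories. So drop the ``or equivalently interchange'' alternative and keep the symmetry justification you also give.
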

\begin{proof}
Work in the free PROP on the refined signature, modulo PROP coherence and the relations
$\mathcal{R}^{\mathrm{src}}$ and \eqref{eq:refinement-relations}.  Because $\omega$ is a scalar,
every occurrence of $\omega$ in a diagram is a $0\to 0$ component tensored somewhere with an
endomorphism.  By \cref{lem:scalar-centrality}, we can slide each such component past all other
boxes and past all compositions/tensors so as to collect all $\omega$-factors together as a single
power $\omega^{a}$ tensored with a diagram $C'$ that contains no $\omega$.
Finally, use $\omega^{m\ell}=\id_0$ to reduce $a$ modulo $m\ell$ into the stated range.
\end{proof}


The main lemma used in \cref{subsec:scalar-refinement} is the resulting conservativity statement.

\begin{lemma}\label{lem:scalar-refinement}
Let $\cat{P}^{\mathrm{src}}/\mathcal{R}^{\mathrm{src}}$ be a graphical language equipped with a
\emph{faithful} (i.e.\ complete) strict symmetric monoidal interpretation
\(
  \interp{\cdot}_{\mathrm{src}} :
  \cat{P}^{\mathrm{src}}/\mathcal{R}^{\mathrm{src}}
  \longrightarrow
   \cat{C}^{\mathrm{src}}
  \subseteq \cat{FdHilb}.
\)
Assume the following semantic properties.
\begin{enumerate}
\item\label{it:scalar-cyclic-app}
(finite cyclic visible scalars)
The scalar group $S( \cat{C}^{\mathrm{src}})= \cat{C}^{\mathrm{src}}(0,0)$ is finite cyclic of order $m$
and generated by $\interp{s}_{\mathrm{src}}$ for some chosen src scalar $s:0\to 0$.
Equivalently, $S( \cat{C}^{\mathrm{src}})\cong \mu_m$.

\item\label{it:no-hidden-phases-app}
(no hidden phases)
$ \cat{C}^{\mathrm{src}}$ has no hidden phases: for all $n\in\N$ and all $\lambda\in\mathrm{U}(1)$,
if $\lambda \id_n\in \cat{C}^{\mathrm{src}}(n,n)$ then $\lambda\in S( \cat{C}^{\mathrm{src}})$
(cf.\ \cref{def:hidden-phase}).

\item\label{it:invertible-app}
(invertibility)
Every morphism in $ \cat{C}^{\mathrm{src}}$ is invertible, i.e.\ each hom-set
$ \cat{C}^{\mathrm{src}}(n,n)$ is a group under composition.\footnote{%
In all applications in this paper, $ \cat{C}^{\mathrm{src}}$ consists of unitaries, hence this holds automatically.}
\end{enumerate}

Fix $\ell\ge 1$ and choose a primitive root of unity $\zeta\in\mathrm{U}(1)$ of order $m\ell$.
Choose an integer $r$ such that
\begin{equation}\label{eq:choose-r}
  \zeta^{r}=\interp{s}_{\mathrm{src}}.
\end{equation}
Form the scalar-refined presentation
$\cat{P}^{\mathrm{src},\sharp}/\mathcal{R}^{\mathrm{src},\sharp}$ as in
\cref{def:scalar-refined-presentation}.

Let $ \cat{C}^{\sharp}\subseteq\cat{FdHilb}$ be the subPROP generated by $ \cat{C}^{\mathrm{src}}$
together with the scalar $\zeta:0\to 0$ (viewed as the unitary map $\C\to\C$, $1\mapsto\zeta$).

Then the interpretation functor extends uniquely to a strict symmetric monoidal functor
\(
  \interp{\cdot}^{\sharp} :
  \cat{P}^{\mathrm{src},\sharp}/\mathcal{R}^{\mathrm{src},\sharp}
  \longrightarrow
   \cat{C}^{\sharp}
\)
sending $\omega$ to $\zeta$, and $\interp{\cdot}^{\sharp}$ is faithful.

In particular, the refined language is complete for $ \cat{C}^{\sharp}$ and
$S( \cat{C}^{\sharp})\cong \mu_{m\ell}$.
\end{lemma}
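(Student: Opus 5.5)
The plan is to split the statement into three parts: well-definedness of the extended functor, faithfulness via the extraction normal form of \cref{lem:scalar-extraction}, and the identification of $S(\cat{C}^{\sharp})$.

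\emph{Existence and uniqueness.} By the universal property of the free PROP, there is a unique strict symmetric monoidal functor on the refined free PROP that agrees with $\interp{\cdot}_{\mathrm{src}}$ on src generators and sends $\omega\mapsto\zeta$. It factors through the quotient once each relation is checked:
\begin{itemize}
\item the relations of $\mathcal{R}^{\mathrm{src}}$ hold because $\interp{\cdot}_{\mathrm{src}}$ is sound;
\item $\omega^{m\ell}=\id_0$ holds because $\zeta^{m\ell}=1$;
\item $\omega^r=s$ holds by the choice \eqref{eq:choose-r}.
\end{itemize}
The image lies in $\cat{C}^{\sharp}$ by construction, and uniqueness is immediate since the values on generators are forced.

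\emph{Faithfulness.} Let $C_1,C_2:n\to n$ be refined circuits with $\interp{C_1}^{\sharp}=\interp{C_2}^{\sharp}$. By \cref{lem:scalar-extraction}, write $C_i=\omega^{a_i}\otimes C_i'$ with each $C_i'$ free of $\omega$. The semantic equality becomes $\zeta^{a_1}\interp{C_1'}=\zeta^{a_2}\interp{C_2'}$. Using invertibility (item~\ref{it:invertible-app}), this gives
\[
\interp{C_2'}^{-1}\interp{C_1'}=\zeta^{a_2-a_1}\id_n\in\cat{C}^{\mathrm{src}}(n,n).
\]
By the no-hidden-phases hypothesis (item~\ref{it:no-hidden-phases-app}), $\zeta^{a_2-a_1}\in S(\cat{C}^{\mathrm{src}})=\mu_m$, which is generated by $\interp{s}_{\mathrm{src}}$ (item~\ref{it:scalar-cyclic-app}). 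Hence $\zeta^{a_2-a_1}=\interp{s}_{\mathrm{src}}^{\,j}=\zeta^{rj}$ for some $j$, so $a_2-a_1\equiv rj \pmod{m\ell}$.

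Then $\interp{C_1'}=\interp{s^{j}\otimes C_2'}_{\mathrm{src}}$, and completeness of the source theory gives $\mathcal{R}^{\mathrm{src}}\vdash C_1'=s^{j}\otimes C_2'$. In the refined theory we conclude:
\begin{itemize}
\item $C_1=\omega^{a_1}\otimes s^{j}\otimes C_2'=\omega^{a_1+rj}\otimes C_2'$, using $\omega^r=s$;
\item $\omega^{a_1+rj}=\omega^{a_2}$, using $\omega^{m\ell}=\id_0$ and $a_2\equiv a_1+rj \pmod{m\ell}$;
\item hence $C_1=\omega^{a_2}\otimes C_2'=C_2$.
\end{itemize}

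The main obstacle is the exponent bookkeeping: the step $\zeta^{a_2-a_1}\in\mu_m$ must be converted into a congruence modulo $m\ell$ that is realizable by powers of $s$. This works because $\zeta$ is primitive of order $m\ell$ and $\zeta^r$ generates $\mu_m$, so $\gcd(r,m\ell)=\ell$ and the powers $\zeta^{rj}$ exhaust $\mu_m$.

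\emph{Scalar group and completeness.} By fullness onto the generated subPROP, $S(\cat{C}^{\sharp})$ consists of the products $\zeta^{a}\mu$ with $\mu\in\mu_m$. Since $\mu_m\subseteq\mu_{m\ell}=\langle\zeta\rangle$, this group is exactly $\mu_{m\ell}$. Completeness for $\cat{C}^{\sharp}$ is then just a restatement of faithfulness.
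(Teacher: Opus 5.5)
Your proposal is correct and follows essentially the same route as the paper's proof. Existence comes from the universal property, and faithfulness from the scalar-extraction normal form together with invertibility, the no-hidden-phases hypothesis and source completeness, then an exponent comparison modulo $m\ell$; the scalar group is identified as $\langle\zeta\rangle\cong\mu_{m\ell}$ (your remark that $\gcd(r,m\ell)=\ell$ is true but unnecessary, since hypothesis~1 already supplies the exponent $j$ directly).
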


\begin{proof}
(1) Soundness and existence of the refined interpretation.
By construction, $\interp{\cdot}^{\sharp}$ agrees with $\interp{\cdot}_{\mathrm{src}}$ on all src generators.
We interpret the new generator $\omega$ by the scalar $\zeta$.
The additional relations \eqref{eq:refinement-relations} hold semantically because
$\zeta^{m\ell}=1$ and $\zeta^{r}=\interp{s}_{\mathrm{src}}$ by \eqref{eq:choose-r}.
Therefore the universal property of presented PROPs yields a unique induced functor
$\interp{\cdot}^{\sharp}$ into $ \cat{C}^{\sharp}$.

(2) Scalar extraction in the refined syntax.
By \cref{lem:scalar-extraction}, every refined circuit $C:n\to n$ is derivably equal to
$\omega^{a}\otimes C'$ with $0\le a<m\ell$ and $C'$ an src circuit.

(3) Faithfulness.
Let $C_1,C_2:n\to n$ be refined circuits such that
$\interp{C_1}^{\sharp}=\interp{C_2}^{\sharp}$.
Write both in scalar-extraction normal form:
\(
  C_i = \omega^{a_i}\otimes C_i'
  \qquad (0\le a_i<m\ell),
\)
with $C_i'$ containing no $\omega$.
Let $U_i\coloneqq \interp{C_i'}_{\mathrm{src}}\in \cat{C}^{\mathrm{src}}(n,n)$.
The semantic equality becomes
\begin{equation}\label{eq:semantic-equality-expanded}
  \zeta^{a_1}   U_1 = \zeta^{a_2}  U_2.
\end{equation}
By invertibility \ref{it:invertible-app}, we can right-multiply by $U_2^{-1}$ to obtain
\begin{equation}\label{eq:phase-in-src-app}
  \zeta^{a_2-a_1} \id_n = U_1\circ U_2^{-1}
  \in  \cat{C}^{\mathrm{src}}(n,n).
\end{equation}
Thus $\zeta^{a_2-a_1}$ appears as a global phase in $ \cat{C}^{\mathrm{src}}$ on $n$ wires.
By the no-hidden-phases assumption \ref{it:no-hidden-phases-app}, this implies
\(
  \zeta^{a_2-a_1}\in S( \cat{C}^{\mathrm{src}}).
\)
By \ref{it:scalar-cyclic-app}, there exists $t\in\{0,\dots,m-1\}$ such that
\begin{equation}\label{eq:phase-as-src-scalar}
  \zeta^{a_2-a_1} = \interp{s}_{\mathrm{src}}^{ t}.
\end{equation}
Combining \eqref{eq:semantic-equality-expanded} with \eqref{eq:phase-as-src-scalar} gives
\(
  \interp{C_1'}_{\mathrm{src}}
  =
  \interp{s^{t}\otimes C_2'}_{\mathrm{src}}.
\)
Since $\interp{\cdot}_{\mathrm{src}}$ is faithful by hypothesis,
\(
  \cat{P}^{\mathrm{src}}/\mathcal{R}^{\mathrm{src}}
  \vdash
  C_1' = s^{t}\otimes C_2'.
\)
The refined presentation contains all src relations, so the same derivation is valid in
$\cat{P}^{\mathrm{src},\sharp}/\mathcal{R}^{\mathrm{src},\sharp}$.
Now use the refinement relation $\omega^{r}=s$ to rewrite $s^{t}=\omega^{rt}$, yielding
\(
  C_1
  =
  \omega^{a_1}\otimes C_1'
  =
  \omega^{a_1}\otimes \omega^{rt}\otimes C_2'
  =
  \omega^{a_1+rt}\otimes C_2'.
\)

It remains to compare the exponents.  From \eqref{eq:phase-as-src-scalar} and \eqref{eq:choose-r} we have
$\zeta^{a_2-a_1}=\zeta^{rt}$, so $\zeta^{a_2-a_1-rt}=1$.  Since $\zeta$ has exact order $m\ell$, this implies
\(
  a_2-a_1-rt \equiv 0 \pmod{m\ell}.
\)
Using $\omega^{m\ell}=\id_0$ in the refined syntax, we conclude $\omega^{a_1+rt}=\omega^{a_2}$, hence
\(
  C_1=\omega^{a_2}\otimes C_2' = C_2.
\)
Therefore $\interp{\cdot}^{\sharp}$ is faithful.

(4) Scalars of $ \cat{C}^{\sharp}$.
By definition, $S( \cat{C}^{\sharp})$ is generated by $S( \cat{C}^{\mathrm{src}})=\langle \interp{s}_{\mathrm{src}}\rangle$
together with $\zeta$.  Since $\interp{s}_{\mathrm{src}}=\zeta^r\in\langle \zeta\rangle$, we obtain
$S( \cat{C}^{\sharp})=\langle \zeta\rangle\cong \mu_{m\ell}$.
\end{proof}

\begin{corollary}\label{cor:scalar-refinement-conservative}
With the hypotheses of \cref{lem:scalar-refinement}, let $C_1,C_2$ be \emph{src} circuits
(i.e.\ containing no occurrence of $\omega$).  If
\(
  \cat{P}^{\mathrm{src},\sharp}/\mathcal{R}^{\mathrm{src},\sharp}\vdash C_1=C_2,
\)
then already
\(
  \cat{P}^{\mathrm{src}}/\mathcal{R}^{\mathrm{src}}\vdash C_1=C_2.
\)
\end{corollary}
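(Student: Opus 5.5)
The corollary follows from the faithfulness part of \cref{lem:scalar-refinement}, applied to two circuits that happen to contain no $\omega$. No new syntactic argument is needed: the derivation in the refined language is first turned into a semantic equality, and then into a derivation in the source language.

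\textbf{Step 1 (soundness of the refined interpretation).} Suppose $\cat{P}^{\mathrm{src},\sharp}/\mathcal{R}^{\mathrm{src},\sharp}\vdash C_1=C_2$. Part (1) of the proof of \cref{lem:scalar-refinement} gives a well-defined interpretation $\interp{\cdot}^{\sharp}$ on the refined quotient, so
\[
\interp{C_1}^{\sharp}=\interp{C_2}^{\sharp}.
\]

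\textbf{Step 2 (restriction to source circuits).} The functor $\interp{\cdot}^{\sharp}$ agrees with $\interp{\cdot}_{\mathrm{src}}$ on every src generator, and both are strict symmetric monoidal. By structural induction on circuits built only from src generators and structural morphisms, $\interp{C_i}^{\sharp}=\interp{C_i}_{\mathrm{src}}$ for $i=1,2$. Hence
\[
\interp{C_1}_{\mathrm{src}}=\interp{C_2}_{\mathrm{src}}.
\]

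\textbf{Step 3 (source completeness).} Since $\interp{\cdot}_{\mathrm{src}}$ is faithful by hypothesis, $\cat{P}^{\mathrm{src}}/\mathcal{R}^{\mathrm{src}}\vdash C_1=C_2$.

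The only point that needs care is Step 2. One must check that the src circuits are the same as the $\omega$-free morphisms of the refined free PROP, so that evaluating them never involves $\zeta$. This holds because the refined free PROP is generated by the src signature together with $\omega$, and the two functors coincide on the src signature.

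The faithfulness of $\interp{\cdot}^{\sharp}$ and the no-hidden-phases hypothesis are not used in this direction; the corollary relies only on soundness of the refined theory and faithfulness of the source interpretation. Faithfulness of $\interp{\cdot}^{\sharp}$ matters for the converse use of the refinement, namely completeness of the refined language.
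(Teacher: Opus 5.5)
Your proposal is correct and matches the paper's own proof: soundness of the refined interpretation yields $\interp{C_1}_{\mathrm{src}}=\interp{C_2}_{\mathrm{src}}$, because $\interp{\cdot}^{\sharp}$ restricts to $\interp{\cdot}_{\mathrm{src}}$ on $\omega$-free syntax, and faithfulness of the source interpretation then gives the source derivation. Your remark that neither faithfulness of $\interp{\cdot}^{\sharp}$ nor the no-hidden-phases hypothesis is used in this direction is also accurate.
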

\begin{proof}
If the refined theory derives $C_1=C_2$, then in particular $\interp{C_1}_{\mathrm{src}}=\interp{C_2}_{\mathrm{src}}$
(because the refined interpretation restricts to the src one on src syntax).  Faithfulness of
$\interp{\cdot}_{\mathrm{src}}$ implies the src derivability.
\end{proof}


The two scalar refinements applied in the main text are the following.

\begin{itemize}
\item \textbf{Qutrit Clifford.}
The imported presentation of \cite{qutritclifford} uses an order-$6$ scalar
$\input{gates/minw.tikz}$.  Our strict qutrit convention uses an order-$12$ scalar subgroup
(\cref{sec:circuits-relations}).  We refine the source syntax by adjoining an order-$12$ scalar
generator $\input{gates/wb.tikz}$ and adding the relations
\(\input{gates/wb.tikz}^{12}=\id_0\) and
\(\input{gates/wb.tikz}^{10}=\input{gates/minw.tikz}\).
This is the $\mu_6\to\mu_{12}$ refinement referenced in \cref{subsec:scalar-refinement}.

\item \textbf{Clifford+CS.}
The imported presentation of \cite{SelingerCliffordPlusCS} uses an order-$4$ scalar
$\input{gates/i.tikz}$, whereas our strict qubit convention uses an order-$8$ scalar subgroup.
We refine the source syntax by adjoining an order-$8$ scalar generator $\input{gates/wprime.tikz}$
and adding the relations \(\input{gates/w.tikz}^{8}=\id_0\) and
\(\input{gates/w.tikz}^{2}=\input{gates/i.tikz}\).
This is the $\mu_4\to\mu_8$ refinement referenced in \cref{subsec:scalar-refinement}.
\end{itemize}

In both cases, the hypotheses of \cref{lem:scalar-refinement} are satisfied for the unitary
semantics considered in this paper: invertibility is automatic, visible scalars are finite cyclic by
construction, and the imported normal forms expose every scalar multiple of an identity as one of
those visible scalars, so no hidden phases appear relative to the chosen scalar conventions.
Therefore, scalar refinement preserves faithfulness, and the refined imported presentations may be
treated as complete presentations for the strict semantics used throughout.


\begin{figure}[ht]
    \begin{threeparttable}
    \begin{adjustbox}{width=\textwidth,center}
    \renewcommand{\arraystretch}{1.2}
    \begin{tabular}{|c|c|c|c|c|c|c|c|}
    \hline
    \textbf{Interpretation} & \textbf{Cliff} & \textbf{Real Cliff} & \textbf{Qutrit Cliff} & \textbf{Cliff+T} & \textbf{Cliff+CS} & \textbf{CNOT-dih} &  \textbf{Arity} \\
    \hline
    $?\gw$ & \eqref{w8} & — & — & \eqref{t-w8}  & \eqref{cs-w8} & \eqref{cnot-R10} & 0\\
    $?\gminus$ & — & \eqref{realminus2} & — & —  & — & — & 0\\
    $?\gww$ & — & — & \eqref{qt-omegapow12} & —  & — & — & 0\\
    \hline
    $?\gH$ & \eqref{H2} & \eqref{realH2} & \eqref{qt-hpow4} & \eqref{t-H2}  & \eqref{cs-H2} & — & 1 \\
    $?\gS$ & \eqref{S4} & — & \eqref{qt-spow3} & — & \eqref{cs-S4} & — & 1 \\
    $?\gZ$ & — & \eqref{realZ2} & — & — & — & — & 1 \\
    $?\gT$ & — & — & — & \eqref{t-Tpow8} & — & \eqref{cnot-R7} & 1 \\
    $?\gX$ & — & — & — & — & — & \eqref{cnot-R1} & 1 \\
    $\#\{\gH\}_{[2]}$ & \eqref{E} & — & \eqref{qt-shpow3} & \eqref{t-E} & \eqref{cs-E} & — & 1 \\
    $\#\{\gw\}_{[2]}$ & — & — & — & — & — & \eqref{cnot-R11} & 1 \\
    $\#\{\gminus\}_{[2]}$ & — & \eqref{realF} & — & — & — & — & 1 \\
    $\#\{\gH, \gw\}_{[2]}$ & — & — & — & \eqref{t-TX} & — & — & 1 \\
    $\#\{\gX\}_{[2]}$ & — & — & — & — & — & \eqref{new-R3} & 1 \\
    $[\gS:=\gSX]_{\sim}$ & — & — & \eqref{qt-ssprime}\tnote{a} & — & — & — & 1 \\
    \hline
    $?\gCNOT$ & \eqref{Cs} & \eqref{realCX2} & \eqref{qt-cnotremove} & \eqref{t-Ct} & — & \eqref{cnot-R12} & 2 \\
    $?\gCS$ & — & — & — & — & \eqref{cs-C} & — & 2 \\
    $\#\{\gSWAP\}_{[2]}$ & \eqref{B} & \eqref{realB} & \eqref{qt-swapdecomp} & \eqref{t-B} & \eqref{cs-B} & \eqref{new-R5} & 2 \\
    $\#\{\gCNOT,\gSWAP\}_{[2]}$ & \eqref{CZ} & —  & — & \eqref{t-CZ} & — & — & 2 \\
    $\#\{\gS\}_{[3]}$ & — & — & \eqref{qt-czdecomp} & — & — & — & 2 \\
    $\#\{\gZ\}_{[2]}$ & — & \eqref{realwC} & — & — & — & — & 2 \\
    $[\gZ:=\gI]_{\sim}$ & — & \eqref{realCF} & — & — & — & — & 2 \\
    $[\gH:=\gI]_{\sim}$ & — & \eqref{realXC} & — & — & — & — & 2 \\
    $\#\{\gCNOT\}_{[2]}$ & — & — & \eqref{qt-kcnot} & — & — & — & 2 \\
    $\#\{\gS,\gH\}_{[2]}$ & — & — & — & — & \eqref{cs-XCS} & — & 2 \\
    $[\gCS:=\gCSz]_{\sim}$ & — & — & — & — & \eqref{cs-CSrev} & — & 2 \\
    $[\gCS:=\gCSzz]_{\sim}$ & — & — & — & — & \eqref{cs-SHCHC} & — & 2 \\
    $\#\{\gT,\gw,\gw\}_{[8]}$ & — & — & — & — & — & \eqref{new-R8} & 2 \\
    \hline
    $\arg\det_{2}$ & \eqref{Inew} & \eqref{realInew} & — & — & \eqref{cs-I} & — & 3 \\
    $\#\{\gCNOT,\gSWAP\}_{[2]}$ & — & —  & — & — & — & \eqref{cnot-R6} & 3 \\
    $\#\{\gT,\gw,\gw\}_{[4]}$ & — & — & — & — & — & \eqref{cnot-R9} & 3 \\
    \hline
    $\arg\det_{3}$ & — & — & — & — & — & \eqref{cnot-R13} & 4 \\
    \hline
    \end{tabular}
    \end{adjustbox}
    \begin{tablenotes}
    \footnotesize
    \item[a] See \cref{qt-prop:SSpnecessity}.
    \end{tablenotes}
    \end{threeparttable}
  \caption{Summary of minimality interpretations across all fragments}
  \label{fig:minimality-summary}
\end{figure}

\begin{figure}[ht]
    \fbox{\begin{minipage}{.975\textwidth}\begin{center}
        \vspace{-1em}
        \begin{subfigure}{0.21\textwidth}
            \begin{align}\label{C1}\tag{C1}\input{cliffordaxioms/w.tikz}=\input{gates/empty.tikz}\end{align}
        \end{subfigure}
        \begin{subfigure}{0.29\textwidth}
            \begin{align}\label{C2}\tag{C2}\input{cliffordaxioms/HH.tikz}=\input{gates/Id.tikz}\end{align}
        \end{subfigure}
        \begin{subfigure}{0.37\textwidth}
            \begin{align}\label{C3}\tag{C3}\input{cliffordaxioms/SSSS.tikz}=\input{gates/Id.tikz}\end{align}
        \end{subfigure}\vspace{-0.5em}
        \begin{subfigure}{0.52\textwidth}
            \begin{align}\label{C4}\tag{C4}\input{cliffordczaxioms/C4-00.tikz}=\input{cliffordczaxioms/C4-03.tikz}\end{align}
        \end{subfigure}
        \begin{subfigure}{0.28\textwidth}
            \begin{align}\label{C5}\tag{C5}\input{cliffordczaxioms/C5-00.tikz}=\input{identities/I2.tikz}\end{align}
        \end{subfigure}\vspace{-0.5em}
        \begin{subfigure}{0.32\textwidth}
            \begin{align}\label{C6}\tag{C6}\input{cliffordczaxioms/C6-00.tikz}=\input{cliffordczaxioms/C6-03.tikz}\end{align}
        \end{subfigure}
        \begin{subfigure}{0.32\textwidth}
            \begin{align}\label{C7}\tag{C7}\input{cliffordczaxioms/C7-00.tikz}=\input{cliffordczaxioms/C7-05.tikz}\end{align}
        \end{subfigure}\vspace{-0.5em}
        \begin{subfigure}{0.61\textwidth}
            \begin{align}\label{C8}\tag{C8}\input{cliffordczaxioms/C8-00.tikz}=\input{cliffordczaxioms/C8-08.tikz}\end{align}
        \end{subfigure}\vspace{-0.5em}
        \begin{subfigure}{0.61\textwidth}
            \begin{align}\label{C9}\tag{C9}\input{cliffordczaxioms/C9-00.tikz}=\input{cliffordczaxioms/C9-09.tikz}\end{align}
        \end{subfigure}\vspace{-0.5em}
        \begin{subfigure}{0.61\textwidth}
            \begin{align}\label{C10}\tag{C10}\input{cliffordczaxioms/C10-00.tikz}=\input{cliffordczaxioms/C10-14.tikz}\end{align}
        \end{subfigure}\vspace{-0.5em}
        \begin{subfigure}{0.61\textwidth}
            \begin{align}\label{C11}\tag{C11}\input{cliffordczaxioms/C11-00.tikz}=\input{cliffordczaxioms/C11-01.tikz}\end{align}
        \end{subfigure}
        \begin{subfigure}{0.35\textwidth}
            \begin{align}\label{C12}\tag{C12}\input{cliffordczaxioms/C12-00.tikz}=\input{cliffordczaxioms/C12-06.tikz}\end{align}
        \end{subfigure}
        \begin{subfigure}{.85\textwidth}
            \begin{align}\label{C13}\tag{C13}\input{cliffordczaxioms/C13-00.tikz}=\input{cliffordczaxioms/C13-25.tikz}\end{align}
        \end{subfigure}
        \begin{subfigure}{.82\textwidth}
            \begin{align}\label{C14}\tag{C14}\input{cliffordczaxioms/C14-00.tikz}=\input{cliffordczaxioms/C14-09.tikz}\end{align}
        \end{subfigure}
        \begin{subfigure}{.82\textwidth}
            \begin{align}\label{C15}\tag{C15}\input{cliffordczaxioms/C15-00.tikz}=\input{cliffordczaxioms/C15-01.tikz}\end{align}
        \end{subfigure}
        \begin{subfigure}{.44\textwidth}
            \begin{align}\label{DSWAPdef}\tag{S}\input{gates/SWAP.tikz}=\input{cliffordczaxioms/DSWAPdef-02.tikz}\end{align}
        \end{subfigure}
    \end{center}\end{minipage}}
    \caption{Set of relations $\QCcliffordCZ$. Equations $(\textup{C}_i)$ are the ones of the complete set of relations for the pro considered in \cite{SelingerStabilizer}.}
    \label{appendix:oldgraphicallanguage1}
\end{figure}

\begin{figure}[ht]
    \fbox{\begin{minipage}{.975\textwidth}\begin{center}
        \vspace{-1em}
        \begin{subfigure}{0.23\textwidth}
            \begin{align}\label{R1}\tag{R1}\input{realcliffordaxioms/2old.tikz} = \input{gates/empty.tikz}\end{align}
        \end{subfigure}
        \begin{subfigure}{0.28\textwidth}
            \begin{align}\label{R2}\tag{R2}\input{realcliffordaxioms/Z2.tikz} = \input{gates/Id.tikz}\end{align}
        \end{subfigure}
        \begin{subfigure}{0.28\textwidth}
            \begin{align}\label{R3}\tag{R3}\input{realcliffordaxioms/H2.tikz} = \input{gates/Id.tikz}\end{align}
        \end{subfigure}\vspace{-0.5em}
        \begin{subfigure}{0.63\textwidth}
            \begin{align}\label{R4}\tag{R4}\input{realcliffordczaxioms/R4-00.tikz}=\input{realcliffordczaxioms/R4-02.tikz}\end{align}
        \end{subfigure}
        \begin{subfigure}{0.27\textwidth}
            \begin{align}\label{R5}\tag{R5}\input{cliffordczaxioms/C5-00.tikz}=\input{identities/I2.tikz}\end{align}
        \end{subfigure}\vspace{-0.5em}
        \begin{subfigure}{0.34\textwidth}
            \begin{align}\label{R6}\tag{R6}\input{realcliffordczaxioms/R6-00.tikz}=\input{realcliffordczaxioms/R6-07.tikz}\end{align}
        \end{subfigure}
        \begin{subfigure}{0.34\textwidth}
            \begin{align}\label{R7}\tag{R7}\input{realcliffordczaxioms/R7-00.tikz}=\input{realcliffordczaxioms/R7-01.tikz}\end{align}
        \end{subfigure}\vspace{-0.5em}
        \begin{subfigure}{0.52\textwidth}
            \begin{align}\label{R8}\tag{R8}\input{realcliffordczaxioms/R8-00.tikz}=\input{realcliffordczaxioms/R8-06.tikz}\end{align}
        \end{subfigure}\vspace{-0.5em}
        \begin{subfigure}{0.52\textwidth}
            \begin{align}\label{R9}\tag{R9}\input{realcliffordczaxioms/R9-00.tikz}=\input{realcliffordczaxioms/R9-01.tikz}\end{align}
        \end{subfigure}\vspace{-0.5em}
        \begin{subfigure}{0.5\textwidth}
            \begin{align}\label{R10}\tag{R10}\input{realcliffordczaxioms/R10-00.tikz}=\input{realcliffordczaxioms/R10-10.tikz}\end{align}
        \end{subfigure}\vspace{-0.5em}
        \begin{subfigure}{0.5\textwidth}
            \begin{align}\label{R11}\tag{R11}\input{realcliffordczaxioms/R11-00.tikz}=\input{realcliffordczaxioms/R11-01.tikz}\end{align}
        \end{subfigure}\vspace{-0.5em}
        \begin{subfigure}{0.64\textwidth}
            \begin{align}\label{R12}\tag{R12}\input{realcliffordczaxioms/R12-00.tikz}=\input{realcliffordczaxioms/R12-12.tikz}\end{align}
        \end{subfigure}
        \begin{subfigure}{0.34\textwidth}
            \begin{align}\label{R13}\tag{R13}\input{realcliffordczaxioms/R13-00.tikz}=\input{realcliffordczaxioms/R13-99.tikz}\end{align}
        \end{subfigure}\vspace{-0.5em}
        \begin{subfigure}{1\textwidth}
            \begin{align}\label{R14}\tag{R14}\input{realcliffordczaxioms/R14-00.tikz}=\input{realcliffordczaxioms/R14-99.tikz}\end{align}
        \end{subfigure}\vspace{-0.5em}
        \begin{subfigure}{.76\textwidth}
            \begin{align}\label{R15}\tag{R15}\input{realcliffordczaxioms/R15-00.tikz}=\input{realcliffordczaxioms/R15-99.tikz}\end{align}
        \end{subfigure}\vspace{-0.5em}
        \begin{subfigure}{.76\textwidth}
            \begin{align}\label{R16}\tag{R16}\input{realcliffordczaxioms/R16-00.tikz}=\input{realcliffordczaxioms/R16-99.tikz}\end{align}
        \end{subfigure}\vspace{-0.5em}
        \begin{subfigure}{.46\textwidth}
            \begin{align}\label{real-DSWAPdef}\tag{S}\input{gates/SWAP.tikz}=\input{cliffordczaxioms/DSWAPdef-04.tikz}\end{align}
        \end{subfigure}
    \end{center}\end{minipage}}
    \caption{Set of relations $\QCrealcliffordCZ$. Equations $(\textup{R}_i)$ are the ones of the complete set of relations for the pro considered in \cite{SelingerRealStabilizer}.}
    \label{appendix:oldgraphicallanguage2}
\end{figure}

\begin{figure}[ht]
    \fbox{\begin{minipage}{.975\textwidth}\begin{center}
        \vspace{-1em}
        \begin{subfigure}{0.26\textwidth}
            \begin{align}\label{qt-C1}\tag{C1}
                \input{qutritcliffordaxioms/omegapow6.tikz}
                =
                \input{gates/empty.tikz}
            \end{align}
        \end{subfigure}
        \begin{subfigure}{0.36\textwidth}
            \begin{align}\label{qt-C2}\tag{C2}
                \input{qutritcliffordaxioms/hadpow4.tikz}
                =
                \input{gates/Id.tikz}
            \end{align}
        \end{subfigure}
        \begin{subfigure}{0.32\textwidth}
            \begin{align}\label{qt-C3}\tag{C3}
                \input{qutritcliffordaxioms/spow3.tikz}
                =
                \input{gates/Id.tikz}
            \end{align}
        \end{subfigure}\vspace{-0.5em}
        \begin{subfigure}{0.48\textwidth}
            \begin{align}\label{qt-C4}\tag{C4}
                \input{qutritcliffordaxioms/s2hs2hs2h.tikz}
                =
                \input{qutritcliffordaxioms/Iphase-omega.tikz}
            \end{align}
        \end{subfigure}
        \begin{subfigure}{0.49\textwidth}
            \begin{align}\label{qt-C5}\tag{C5}
                \input{qutritcliffordaxioms/ssprime.tikz}
                =
                \input{qutritcliffordaxioms/sprimes.tikz}
            \end{align}
        \end{subfigure}\vspace{-0.5em}
        \begin{subfigure}{0.36\textwidth}
            \begin{align}\label{qt-C6}\tag{C6}
                \input{qutritderivations/CZ-CZ-CZ.tikz} = \input{identities/I2.tikz}
            \end{align}
        \end{subfigure}
        \begin{subfigure}{0.44\textwidth}
            \begin{align}\label{qt-SWAPSWAP}\tag{C7}
                \input{qutritderivations/SWAP-SWAP.tikz} = \input{identities/I2.tikz}
            \end{align}
        \end{subfigure}\vspace{-0.5em}
        \begin{subfigure}{0.36\textwidth}
            \begin{align}\label{qt-C7}\tag{C8}
                \input{qutritderivations/S-CZ.tikz} = \input{qutritderivations/CZ-S.tikz}
            \end{align}
        \end{subfigure}
        \begin{subfigure}{0.41\textwidth}
            \begin{align}\label{qt-C8}\tag{C9}
                \input{qutritderivations/K-CZ.tikz}
                =
                \input{qutritderivations/CZ-CZ-K.tikz}
            \end{align}
        \end{subfigure}\vspace{-0.5em}
        \begin{subfigure}{0.47\textwidth}
            \begin{align}\label{qt-SSWAP}\tag{C10}
                \input{qutritderivations/S-SWAP.tikz} = \input{qutritderivations/SWAP-S.tikz}
            \end{align}
        \end{subfigure}
        \begin{subfigure}{0.47\textwidth}
            \begin{align}\label{qt-C9}\tag{C11}
                \input{qutritderivations/H-SWAP.tikz} = \input{qutritderivations/SWAP-H.tikz}
            \end{align}
        \end{subfigure}\vspace{-0.5em}
        \begin{subfigure}{0.47\textwidth}
            \begin{align}\label{qt-CZSWAP}\tag{C12}
                \input{qutritderivations/CZ-SWAP.tikz} = \input{qutritderivations/SWAP-CZz.tikz}
            \end{align}
        \end{subfigure}
        \begin{subfigure}{0.5\textwidth}
            \begin{align}\label{qt-C10}\tag{C13}
                \input{qutritderivations/S-NOTC-origin.tikz} = \input{qutritderivations/NOTC-CZ-SSp.tikz}
            \end{align}
        \end{subfigure}
        \begin{subfigure}{0.54\textwidth}
            \begin{align}\label{qt-C11}\tag{C14}
                \input{qutritderivations/CZ-CNOT-b.tikz} = \input{qutritderivations/NOTC-CZ-SSp-b.tikz}
            \end{align}
        \end{subfigure}
        \begin{subfigure}{0.38\textwidth}
            \begin{align}\label{qt-C13}\tag{C15}
                \input{qutritderivations/ICZ-CZI.tikz} = \input{qutritderivations/CZI-ICZ.tikz}
            \end{align}
        \end{subfigure}
        \begin{subfigure}{0.67\textwidth}
            \begin{align}\label{qt-C14}\tag{C16}
                \input{qutritderivations/ISW-SWI-ISW.tikz} = \input{qutritderivations/SWI-ISW-SWI.tikz}
            \end{align}
        \end{subfigure}
        \begin{subfigure}{0.63\textwidth}
            \begin{align}\label{qt-C15}\tag{C17}
                \input{qutritderivations/ICZ-SWI-ISW.tikz} = \input{qutritderivations/SWI-ISW-CZI.tikz}
            \end{align}
        \end{subfigure}
        \begin{subfigure}{0.41\textwidth}
            \begin{align}\label{qt-C16}\tag{C18}
                \input{qutritderivations/CNI-ICZ.tikz} = \input{qutritderivations/ICZ-CNI-CIZ.tikz}
            \end{align}
        \end{subfigure}
        \begin{subfigure}{0.38\textwidth}
            \begin{align}\label{qt-Swap}\tag{S}
                \input{qutritderivations/SWAP-CZ.tikz} = \input{gates/SWAP.tikz}
            \end{align}
        \end{subfigure}
        
    \end{center}\end{minipage}}
    \caption{Set of relations $\QCqutritcliffordCZ$. Equations $(\textup{C}_i)$ are the ones of the complete set of relations for the pro considered in \cite{qutritclifford}.}
    \label{appendix:oldgraphicallanguage3}
\end{figure}

\begin{figure}[ht]
    \fbox{\begin{minipage}{.975\textwidth}\begin{center}
        \vspace{-1em}
        \begin{subfigure}{0.21\textwidth}
            \begin{align}\label{t-C3}\tag{C3}\input{cliffordaxioms/w.tikz}=\input{gates/empty.tikz}\end{align}
        \end{subfigure}
        \begin{subfigure}{0.29\textwidth}
            \begin{align}\label{t-C4}\tag{C4}\input{cliffordaxioms/HH.tikz}=\input{gates/Id.tikz}\end{align}
        \end{subfigure}
        \begin{subfigure}{0.37\textwidth}
            \begin{align}\label{t-C5}\tag{C5}\input{cliffordaxioms/SSSS.tikz}=\input{gates/Id.tikz}\end{align}
        \end{subfigure}\vspace{-0.5em}
        \begin{subfigure}{0.52\textwidth}
            \begin{align}\label{t-C6}\tag{C6}\input{cliffordczaxioms/C4-00.tikz}=\input{cliffordczaxioms/C4-03.tikz}\end{align}
        \end{subfigure}
        \begin{subfigure}{0.27\textwidth}
            \begin{align}\label{t-C14}\tag{C14}\input{shortcut/S.tikz}=\input{shortcut/TT.tikz}\end{align}
        \end{subfigure}\vspace{-0.5em}
        \begin{subfigure}{0.43\textwidth}
            \begin{align}\label{t-C15}\tag{C15}\input{cliffordplustaxioms/TX.tikz} = \input{cliffordplustaxioms/XTd.tikz}\end{align}
        \end{subfigure}
        \begin{subfigure}{0.28\textwidth}
            \begin{align}\label{t-C7}\tag{C7}\input{cliffordczaxioms/C5-00.tikz}=\input{identities/I2.tikz}\end{align}
        \end{subfigure}\vspace{-0.5em}
        \begin{subfigure}{0.32\textwidth}
            \begin{align}\label{t-C8}\tag{C8}\input{cliffordczaxioms/C6-00.tikz}=\input{cliffordczaxioms/C6-03.tikz}\end{align}
        \end{subfigure}
        \begin{subfigure}{0.32\textwidth}
            \begin{align}\label{t-C9}\tag{C9}\input{cliffordczaxioms/C7-00.tikz}=\input{cliffordczaxioms/C7-05.tikz}\end{align}
        \end{subfigure}\vspace{-0.5em}
        \begin{subfigure}{0.62\textwidth}
            \begin{align}\label{t-C10}\tag{C10}\input{cliffordczaxioms/C8-00.tikz}=\input{cliffordczaxioms/C8-08.tikz}\end{align}
        \end{subfigure}
        \begin{subfigure}{0.63\textwidth}
            \begin{align}\label{t-C11}\tag{C11}\input{cliffordczaxioms/C9-00.tikz}=\input{cliffordczaxioms/C9-09.tikz}\end{align}
        \end{subfigure}
        \begin{subfigure}{0.62\textwidth}
            \begin{align}\label{t-C12}\tag{C12}\input{cliffordczaxioms/C10-00.tikz}=\input{cliffordczaxioms/C10-14.tikz}\end{align}
        \end{subfigure}
        \begin{subfigure}{0.62\textwidth}
            \begin{align}\label{t-C13}\tag{C13}\input{cliffordczaxioms/C11-00.tikz}=\input{cliffordczaxioms/C11-01.tikz}\end{align}
        \end{subfigure}
        \begin{subfigure}{0.39\textwidth}
            \begin{align}\label{t-C16}\tag{C16}\input{cliffordplustaxioms/C16l.tikz}=\input{cliffordplustaxioms/C16r.tikz}\end{align}
        \end{subfigure}
        \begin{subfigure}{0.69\textwidth}
            \begin{align}\label{t-C17}\tag{C17}\input{cliffordplustaxioms/C17l.tikz}=\input{cliffordplustaxioms/C17r.tikz}\end{align}
        \end{subfigure}
        \begin{subfigure}{0.84\textwidth}
            \begin{align}\label{t-C18}\tag{C18}\input{cliffordplustaxioms/C18l.tikz}=\input{cliffordplustaxioms/C18r.tikz}\end{align}
        \end{subfigure}
        \begin{subfigure}{0.95\textwidth}
            \begin{align}\label{old-t-C19}\tag{C19}\scalebox{0.8}{\input{cliffordplustaxioms/C19l.tikz}}=\scalebox{0.8}{\input{cliffordplustaxioms/C19r.tikz}}\end{align}
        \end{subfigure}
        \begin{subfigure}{0.69\textwidth}
            \begin{align}\label{old-t-C20}\tag{C20}\input{cliffordplustaxioms/C20l.tikz}=\input{cliffordplustaxioms/C20r.tikz}\end{align}
        \end{subfigure}
        \begin{subfigure}{.44\textwidth}
            \begin{align}\label{t-DSWAPdef}\tag{S}\input{gates/SWAP.tikz}=\input{cliffordczaxioms/DSWAPdef-02.tikz}\end{align}
        \end{subfigure}
    \end{center}\end{minipage}}
    \caption{Set of relations $\QCcliffordplustCZ$ on two qubits. Equations $(\textup{C}_i)$ are the ones of the complete set of relations for the pro considered in \cite{SelingerCliffordPlusT}.}
    \label{appendix:oldgraphicallanguage4}
\end{figure}

\begin{figure}[ht]
    \scalebox{1}{
    \fbox{\begin{minipage}{.975\textwidth}\begin{center}
        \vspace{-1em}
        \begin{subfigure}{0.2\textwidth}
            \begin{align}\label{cs-C1}\tag{C1}\input{qutritcliffordaxioms/ipow4.tikz} = \input{gates/empty.tikz}\end{align}
        \end{subfigure}
        \begin{subfigure}{0.34\textwidth}
            \begin{align}\label{cs-C2}\tag{C2}\input{cliffordpluscsaxioms/K2.tikz}=\input{cliffordpluscsaxioms/ipow3.tikz}\end{align}
        \end{subfigure}
        \begin{subfigure}{0.4\textwidth}
            \begin{align}\label{cs-C3}\tag{C3}\input{cliffordaxioms/SSSS.tikz}=\input{cliffordaxioms/I.tikz}\end{align}
        \end{subfigure}
        \begin{subfigure}{0.52\textwidth}
            \begin{align}\label{cs-C4}\tag{C4}\input{cliffordpluscsaxioms/SKSKSK.tikz}=\input{cliffordpluscsaxioms/ipow3.tikz}\end{align}
        \end{subfigure}
        \begin{subfigure}{0.38\textwidth}
            \begin{align}\label{cs-C5}\tag{C5}\input{cliffordpluscsaxioms/CSpow4.tikz}=\input{identities/I2.tikz}\end{align}
        \end{subfigure}
        \begin{subfigure}{0.35\textwidth}
            \begin{align}\label{cs-C6}\tag{C6}\input{cliffordpluscsaxioms/ScommTopLeft.tikz}=\input{cliffordpluscsaxioms/ScommTopRight.tikz}\end{align}
        \end{subfigure}
        \begin{subfigure}{0.38\textwidth}
            \begin{align}\label{cs-C7}\tag{C7}\input{cliffordpluscsaxioms/ScommBotLeft.tikz}=\input{cliffordpluscsaxioms/ScommBotRight.tikz}\end{align}
        \end{subfigure}
        \begin{subfigure}{0.57\textwidth}
            \begin{align}\label{cs-C8}\tag{C8}\input{cliffordpluscsaxioms/XTopLeft.tikz}=\input{cliffordpluscsaxioms/XTopRight.tikz}\end{align}
        \end{subfigure}
        \begin{subfigure}{0.57\textwidth}
            \begin{align}\label{cs-C9}\tag{C9}\input{cliffordpluscsaxioms/XBotLeft.tikz}=\input{cliffordpluscsaxioms/XBotRight.tikz}\end{align}
        \end{subfigure}
        \begin{subfigure}{0.59\textwidth}
            \begin{align}\label{cs-C10}\tag{C10}\input{cliffordpluscsaxioms/KTopLeft.tikz}=\input{cliffordpluscsaxioms/KTopRight.tikz}\end{align}
        \end{subfigure}
        \begin{subfigure}{0.59\textwidth}
            \begin{align}\label{cs-C11}\tag{C11}\input{cliffordpluscsaxioms/KBotLeft.tikz}=\input{cliffordpluscsaxioms/KBotRight.tikz}\end{align}
        \end{subfigure}
        \begin{subfigure}{0.37\textwidth}
            \begin{align}\label{cs-C12}\tag{C12}\input{cliffordpluscsaxioms/ICSCSI.tikz}=\input{cliffordpluscsaxioms/CSIICS.tikz}\end{align}
        \end{subfigure}
        \begin{subfigure}{0.59\textwidth}
            \begin{align}\label{cs-C13}\tag{C13}\input{cliffordpluscsaxioms/C13lred.tikz}=\input{cliffordpluscsaxioms/C13rred.tikz}\end{align}
        \end{subfigure}
        \begin{subfigure}{0.67\textwidth}
            \begin{align}\label{cs-C14-old}\tag{C14}\input{cliffordpluscsaxioms/C14lred.tikz}=\input{cliffordpluscsaxioms/C14rred.tikz}\end{align}
        \end{subfigure}
        \begin{subfigure}{0.68\textwidth}
            \begin{align}\label{cs-C15}\tag{C15}\input{cliffordpluscsaxioms/C15lred.tikz}=\input{cliffordpluscsaxioms/C15rred.tikz}\end{align}
        \end{subfigure}
        \begin{subfigure}{0.75\textwidth}
            \begin{align}\label{cs-C16-old}\tag{C16}\input{cliffordpluscsaxioms/C16l.tikz}=\input{cliffordpluscsaxioms/C16r.tikz}\end{align}
        \end{subfigure}
        \begin{subfigure}{0.92\textwidth}
            \begin{align}\label{cs-C17-old}\tag{C17}\input{cliffordpluscsaxioms/C17l.tikz}=\input{cliffordpluscsaxioms/C17r.tikz}\end{align}
        \end{subfigure}
        \begin{subfigure}{0.62\textwidth}
            \begin{align}\tag{B}\label{cs-DSWAP}\input{gates/SWAP.tikz}=\input{cliffordpluscsczaxioms/DSWAP-final.tikz}\end{align}
        \end{subfigure}
    \end{center}\end{minipage}}
    }
    \caption{Set of relations $\QCcliffordpluscsCZ$ on three qubits. Equations $(\textup{C}_i)$ are the ones of the complete set of relations for the pro considered in \cite{SelingerCliffordPlusCS}.}
    \label{appendix:oldgraphicallanguage5}
\end{figure}

\begin{figure}[ht]
    \centering
    \fbox{%
        \begin{minipage}{0.98\textwidth}
        \centering
        \begin{minipage}[c]{0.31\textwidth}
            \centering
            \begin{equation}\tag{R1}
            \scalebox{0.9}{\input{cnotdihedral/R1L.tikz}}
            =
            \scalebox{0.9}{\input{gates/Id.tikz}}
            \end{equation}
        \end{minipage}
        \begin{minipage}[c]{0.36\textwidth}
            \centering
            \begin{equation}\tag{R2}\label{cnot-old-R2}
            \scalebox{0.9}{\input{identities/CXC_00.tikz}}
            =
            \scalebox{0.9}{\input{identities/CXC_06.tikz}}
            \end{equation}
        \end{minipage}
        \begin{minipage}[c]{0.31\textwidth}
            \centering
            \begin{equation}\tag{R3}\label{cnot-old-R3}
            \scalebox{0.9}{\input{cliffordczaxioms/CNOTXCNOT.tikz}}
            =
            \scalebox{0.9}{\input{cliffordczaxioms/XX.tikz}}
            \end{equation}
        \end{minipage}
        \begin{minipage}[c]{0.29\textwidth}
            \centering
            \begin{equation}\tag{R4}\label{cnot-old-R4}
            \scalebox{0.9}{\input{identities/CNOTCNOT.tikz}}
            =
            \scalebox{0.9}{\input{identities/I2.tikz}}
            \end{equation}
        \end{minipage}
        \begin{minipage}[c]{0.47\textwidth}
            \centering
            \begin{equation}\tag{R5}\label{cnot-old-R5}
            \scalebox{0.9}{\input{cliffordczaxioms/CNOTswCNOTswCNOT.tikz}}
            =
            \scalebox{0.9}{\input{gates/SWAP.tikz}}
            \end{equation}
        \end{minipage}
        \begin{minipage}[c]{0.36\textwidth}
            \centering
            \begin{equation}\tag{R6}
            \scalebox{0.9}{\input{cnotdihedral/R6L.tikz}}
            =
            \scalebox{0.9}{\input{cnotdihedral/R6R.tikz}}
            \end{equation}
        \end{minipage}
        \begin{minipage}[c]{0.28\textwidth}
            \centering
            \begin{equation}\tag{R7}
            \scalebox{0.9}{\input{cnotdihedral/R7L.tikz}}
            =
            \scalebox{0.9}{\input{gates/Id.tikz}}
            \end{equation}
        \end{minipage}
        \begin{minipage}[c]{0.36\textwidth}
            \centering
            \begin{equation}\tag{R8}\label{cnot-old-R8}
            \scalebox{0.9}{\input{cliffordczaxioms/R8-old-decomp.tikz}}
            =
            \scalebox{0.9}{\input{cliffordczaxioms/ZZ.tikz}}
            \end{equation}
        \end{minipage}
        \begin{minipage}[c]{0.53\textwidth}
            \centering
            \begin{equation}\tag{R9}
            \scalebox{0.9}{\input{cnotdihedral/R9L.tikz}}
            =
            \scalebox{0.9}{\input{cnotdihedral/R9R.tikz}}
            \end{equation}
        \end{minipage}
        \begin{minipage}[c]{0.22\textwidth}
            \centering
            \begin{equation}\tag{R10}
                \scalebox{0.9}{\input{gates/wc.tikz}}^{\otimes 8} = \scalebox{0.9}{\input{gates/empty.tikz}}
            \end{equation}
        \end{minipage}
        \begin{minipage}[c]{0.37\textwidth}
            \centering
            \begin{equation}\tag{R11}\label{cnot-old-R11}
            \scalebox{0.9}{\input{cnotdihedral/R11oldL.tikz}}
            =
            \scalebox{0.9}{\input{cnotdihedral/R11oldR.tikz}}
            \end{equation}
        \end{minipage}
        \begin{minipage}[c]{0.33\textwidth}
            \centering
            \begin{equation}\tag{R12}
            \scalebox{0.9}{\input{cliffordplustaxioms/CNOTTCNOT.tikz}}
            =
            \scalebox{0.9}{\input{cliffordplustaxioms/T.tikz}}
            \end{equation}
        \end{minipage}
        \begin{minipage}[c]{0.9\textwidth}
            \centering
            \begin{equation}\tag{R13}
            \scalebox{0.82}{\input{cnotdihedral/R13L.tikz}}
            =
            \scalebox{0.82}{\input{cnotdihedral/R13R.tikz}}
            \end{equation}
        \end{minipage}
        \end{minipage}%
    }
    \caption{Set of relations $\QColddihedral$ \cite{AmyCNOTDihedral}.}
    \label{appendix:oldgraphicallanguage6}
\end{figure}

\begin{table}[t]
\centering
\small
\renewcommand{\arraystretch}{1.15}

\newcommand{\spacedhline}{\noalign{\vskip 0.7ex}\hline\noalign{\vskip 0.7ex}}

\begin{tabular}{@{}l p{0.35\linewidth} p{0.35\linewidth}@{}}
\hline
\noalign{\vskip 0.7ex}

\textbf{Fragment} &
\textbf{Encoding $E_\bullet$ (new $\to$ src)} &
\textbf{Decoding $D_\bullet$ (src $\to$ new)} \\
\spacedhline

$\mathit{Cliff}, \mathit{RCliff}$ &
\(
\begin{aligned}
E_\bullet\left(\input{gates/CNOT.tikz}\right) &= \input{completeness/H2CZH2.tikz}
\end{aligned}
\) &
\(
\begin{aligned}
D_\bullet\left(\input{gates/CZ.tikz}\right) &= \input{completeness/H2CNOTH2.tikz}
\end{aligned}
\) \\
\spacedhline

$\mathit{CliffT}$ &
\(
\begin{aligned}
E_{\mathit{CliffT}}\left(\input{gates/CNOT.tikz}\right) &= \input{completeness/H2CZH2.tikz}
\end{aligned}
\) &
\(
\begin{aligned}
D_{\mathit{CliffT}}\left(\input{gates/CZ.tikz}\right) &= \input{completeness/H2CNOTH2.tikz}\\
D_{\mathit{CliffT}}\left(\input{gates/S.tikz}\right) &= \input{shortcut/TT.tikz}
\end{aligned}
\) \\
\spacedhline

$\mathit{Cliff3}$ &
\(
\begin{aligned}
E_{\mathit{Cliff3}}\left(\input{gates/S.tikz}\right) &= \input{gates/Sbis.tikz}\\
E_{\mathit{Cliff3}}\left(\input{gates/H.tikz}\right) &= \input{gates/Hbisp.tikz}\\
E_{\mathit{Cliff3}}\left(\input{gates/CNOT.tikz}\right) &= \input{completeness/H2CZHHH2.tikz}\\
E_{\mathit{Cliff3}}\left(\input{gates/wb.tikz}\right) &= \input{gates/wb.tikz}
\end{aligned}
\) &
\(
\begin{aligned}
D_{\mathit{Cliff3}}\left(\input{gates/S.tikz}\right) &= \input{gates/Sbis.tikz}\\
D_{\mathit{Cliff3}}\left(\input{gates/H.tikz}\right) &= \input{gates/Hbis.tikz}\\
D_{\mathit{Cliff3}}\left(\input{gates/CZ.tikz}\right) &= \input{completeness/HHH2CNOTH2.tikz}\\
D_{\mathit{Cliff3}}\left(\input{gates/wb.tikz}\right) &= \input{gates/wb.tikz}\\
D_{\mathit{Cliff3}}\left(\input{gates/minw.tikz}\right) &= \input{gates/w10.tikz}
\end{aligned}
\) \\
\spacedhline

$\mathit{CliffCS}$ &
\(
\begin{aligned}
E_{\mathit{CliffCS}}\left(\input{gates/H.tikz}\right) &= \input{gates/Kw.tikz}\\
E_{\mathit{CliffCS}}\left(\input{gates/w.tikz}\right) &= \input{gates/w.tikz}
\end{aligned}
\) &
\(
\begin{aligned}
D_{\mathit{CliffCS}}\left(\input{gates/K.tikz}\right) &= \input{gates/Hw.tikz}\\
D_{\mathit{CliffCS}}\left(\input{gates/i.tikz}\right) &= \input{gates/w2.tikz}\\
D_{\mathit{CliffCS}}\left(\input{gates/w.tikz}\right) &= \input{gates/w.tikz}
\end{aligned}
\) \\
\noalign{\vskip 0.7ex}
\hline

\end{tabular}

\caption{Encoding/decoding pairs used in the completeness-transfer arguments.
For the qutrit Clifford and Clifford+CS rows, the source presentation is the
scalar-refined one from \cref{lem:scalar-refinement}.
\emph{Convention:} omitted generators are fixed by $E_\bullet$ and $D_\bullet$.}
\label{tab:ED-clifford-family}
\end{table}

\end{document}